\newcommand{\stitle}[1]{\vspace{1mm} \noindent{\bf #1}}
\definecolor{newtext}{rgb}{0, 0, 0.61}
\definecolor{grayA}{gray}{0.95}  
\definecolor{grayB}{gray}{0.85}  
\definecolor{grayC}{gray}{0.75}  
\setlist[enumerate]{label=\arabic*.,leftmargin=5mm,itemsep=5pt}
\setlist[itemize]{leftmargin=5mm,itemsep=5pt}
\newtheorem{theorem}{\bf Theorem}[section]
\newtheorem{example}{\bf Example}
\theoremstyle{remark}
\theoremstyle{definition}
\newcommand\vldbdoi{XX.XX/XXX.XX}
\newcommand\vldbpages{XXX-XXX}
\newcommand\vldbvolume{18}
\newcommand\vldbissue{12}
\newcommand\vldbyear{2025}
\newcommand\vldbauthors{Xiaoyao Zhong, 
	Haotian Li,
	Jiabao Jin, 
	Mingyu Yang,
	Deming Chu,
	Xiangyu Wang,
	Zhitao Shen,
	Wei Jia,
	George Gu,
	Yi Xie,
	Xuemin Lin,
	Heng Tao Shen,
	Jingkuan Song, Peng Cheng}
\newcommand\vldbtitle{\shorttitle} 
\newcommand\vldbavailabilityurl{https://github.com/antgroup/vsag}
\newcommand\vldbpagestyle{empty} 
\newcommand\vsag{\textsl{VSAG}\xspace}
\begin{document}

\title{VSAG: An Optimized Search Framework for Graph-based Approximate Nearest Neighbor Search}\thanks{Corresponding Author: Peng Cheng.}

\author{Xiaoyao Zhong, Haotian Li}
\affiliation{%
	\institution{Ant Group, Shanghai, China}
}
\email{zhongxiaoyao.zxy@antgroup.com}
\email{tianlan.lht@antgroup.com}

\author{Jiabao Jin, Mingyu Yang}
\affiliation{%
	\institution{Ant Group, Shanghai, China}
}
\email{jinjiabao.jjb@antgroup.com}
\email{	yangming.ymy@antgroup.com}

\author{Deming Chu, Xiangyu Wang}
\affiliation{%
	\institution{Ant Group, Shanghai, China}
}
\email{chudeming.cdm@antgroup.com}
\email{wxy407827@antgroup.com}

\author{Zhitao Shen, Wei Jia}
\affiliation{%
	\institution{Ant Group, Shanghai, China}
}
\email{zhitao.szt@antgroup.com}
\email{jw94525@antgroup.com}

\author{George Gu, Yi Xie}
\affiliation{%
	\institution{Intel Corporation, Shanghai, China}
}
\email{george.gu@intel.com}
\email{ethan.xie@intel.com}

\author{Xuemin Lin}
\affiliation{%
	\institution{Shanghai Jiaotong University}
	\city{Shanghai}
	\country{China}
}
\email{xuemin.lin@gmail.com}

\author{Heng Tao Shen, Jingkuan Song}
\affiliation{%
	\institution{Tongji University, Shanghai, China}
}
\email{shenhengtao@hotmail.com}
\email{jingkuan.song@gmail.com}

\author{Peng Cheng}
\affiliation{%
	\institution{Tongji University}
	\institution{East China Normal University}
	\city{Shanghai}
	\country{China}
}
\email{cspcheng@tongji.edu.cn}

\begin{abstract}
Approximate nearest neighbor search (ANNS) is a fundamental problem in vector databases and AI infrastructures. Recent graph-based ANNS algorithms have achieved high search accuracy with practical efficiency. Despite the advancements, these algorithms still face performance bottlenecks in production, due to the random memory access patterns of graph-based search and the high computational overheads of vector distance. In addition, the performance of a graph-based ANNS algorithm is highly sensitive to parameters, while selecting the optimal parameters  is cost-prohibitive, e.g., manual tuning requires repeatedly re-building the index.
This paper introduces \vsag, an open-source framework that aims to enhance the in production performance of graph-based ANNS algorithms.
\vsag has been deployed at scale in the services of Ant Group, and it incorporates three key optimizations: \textit{(i) efficient memory access}: it reduces L3 cache misses with pre-fetching and cache-friendly vector organization; \textit{(ii) automated parameter tuning}: it automatically selects performance-optimal parameters without requiring index rebuilding;
\textit{(iii) efficient distance computation}: it leverages modern hardware, scalar quantization, and smartly switches to low-precision representation to dramatically reduce the distance computation costs.
We evaluate \vsag on real-world datasets.
The experimental results show that \vsag achieves the state-of-the-art performance and provides up to $4\times$ speedup over HNSWlib (an industry-standard library) while ensuring the same accuracy. 
\end{abstract}

\maketitle

\pagestyle{\vldbpagestyle}
\begingroup\small\noindent\raggedright\textbf{PVLDB Reference Format:}\\
\vldbauthors. \vldbtitle. PVLDB, \vldbvolume(\vldbissue): \vldbpages, \vldbyear.\\
\href{https://doi.org/\vldbdoi}{doi:\vldbdoi}
\endgroup
\begingroup
\renewcommand\thefootnote{}\footnote{\noindent
This work is licensed under the Creative Commons BY-NC-ND 4.0 International License. Visit \url{https://creativecommons.org/licenses/by-nc-nd/4.0/} to view a copy of this license. For any use beyond those covered by this license, obtain permission by emailing \href{mailto:info@vldb.org}{info@vldb.org}. Copyright is held by the owner/author(s). Publication rights licensed to the VLDB Endowment. \\
\raggedright Proceedings of the VLDB Endowment, Vol. \vldbvolume, No. \vldbissue\ %
ISSN 2150-8097. \\
\href{https://doi.org/\vldbdoi}{doi:\vldbdoi} \\
}\addtocounter{footnote}{-1}\endgroup

\ifdefempty{\vldbavailabilityurl}{}{
\vspace{.3cm}
\begingroup\small\noindent\raggedright\textbf{PVLDB Artifact Availability:}\\
The source code, data, and/or other artifacts have been made available at \url{\vldbavailabilityurl}.
\endgroup
}

\section{Introduction}

{

	At Ant Group~\cite{antgroup}, we have observed an increasing demand to manage large-scale high-dimensional vectors across different departments.
	This demand is fueled by two factors.
	First, the advent of Retrieval-Augmented Generation (RAG) for large language models (LLMs)~\cite{LLM-RAG-NIPS-2020, gao2024RAGsurvey} requires vector search to address issues such as hallucinations and outdated information. 
	Second, the explosive growth of unstructured data (e.g., documents, images, and videos), requires efficient analysis and storage methods.
	Many systems transform these unstructured data into embedding vectors for efficient retrieval, e.g., Alipay's facial-recognition payment~\cite{alipay}, Google's image search~\cite{google}, and YouTube's video search~\cite{youtube}.

	Approximate nearest neighbor search (ANNS) is the foundation for these AI and LLM applications. Due to the curse of dimensionality~\cite{Curse-of-dim-1998}, exact nearest neighbor search becomes prohibitively expensive as dimensionality grows. ANNS, however, trades off a small degree of accuracy for a substantial boost in efficiency, establishing itself as the gold standard for large-scale vector retrieval.

	Recently, graph-based ANNS algorithms (e.g., HNSW~\cite{HNSW-PAMI-2020} and VAMANA~\cite{Diskann-NIPS-2019}) successfully balance high recall with practical runtime performance. 
	These methods typically construct a graph, where each node is a  vector and each edge connects nearby vector pairs.
	During a query, an approximate $k$-nearest neighbor search starts from a random node and greedily moves closer to the query vector $x_q$, thereby retrieving its $k$ nearest neighbors.

	Despite their success, existing graph-based ANNS solutions still face considerable performance challenges. 
	First, they incur \emph{random memory-access overhead}, since graph traversals with arbitrary jumps often lead to frequent cache misses and elevated costs. 
	Second, repeated \emph{distance computations} across candidate vectors can dominate total runtime, especially when vectors are high-dimensional. 
	Finally, performance is highly \emph{sensitive to parameter settings} (e.g., maximum node degree and candidate pool size), yet adjusting these parameters generally requires rebuilding the index, which can take hours or even days.
	We use a small example from our experiments to illustrate these issues:

	Modern production systems~\cite{Milvus-SIGMOD-2021,faiss:johnson2019billion} typically employ vector quantization to reduce the distance computation cost.
	Therefore, we set our baseline as \textbf{HNSW} with \textbf{SQ4} quantization~\cite{2012sq}.
	We conduct 1,000 vector queries on the GIST1M.
	In what follows, we report the performance limitations of graph-based ANNS algorithms, using the experimental evidence of the baseline:
	\textit{(i) high memory access cost:} 
	each query needs over 2{,}959 random vector accesses (total 1.4\,MB), causing a 67.42\% L3 cache miss rate. 
	The memory-access operations consume 63.02\% of the search time.
	\textit{(ii) high parameter tuning cost}:
	if we use the optimal parameters instead of the manually selected values, the QPS can increase from 1,530 to 2,182, by 42.6\%.
	However, brute-force tuning of parameters takes more than 60 hours, which is prohibitively expensive.
	\textit{(iii) high distance computation cost}: distance computations still take 26.12\% of the search time despite using SQ4 quantization.

}

	\begin{table}[t!]
		\caption{\small Comparison to Existing Algorithms (GIST1M).}\vspace{-3ex}
		\label{tab:intro:algo_comp}
		\resizebox{\linewidth}{!}{
			\begin{tabular}{l|l|l|>{\columncolor[gray]{0.9}}l}
				\toprule 
				\textbf{Metric } & \textbf{IVFPQFS~\cite{PQ-fast-scan-VLDB-2015}} & \textbf{HNSW~\cite{HNSW-PAMI-2020}} & \textbf{VSAG (this work)} \\
				\midrule
				Memory Footprint & 3.8G & 4.0G & 4.5G \\
				Recall@10 (QPS=2000) & 84.57\% & 59.46\% & 89.80\% \\ 
				QPS \hspace{2.4em}  (Recall@10=90\%) & 1195 & 511.9 & 2167.3 \\
				Distance Computation Cost & 0.71ms  & 1.62ms & 0.1ms \\
				L3 Cache Miss Rate & 13.98\% & 94.46\% & 39.23\% \\
				Parameter Tuning Cost & >20h & >60h & 2.92h \\
				Parameter Tuning & manual & manual & auto \\
				\bottomrule 
			\end{tabular}
		}\vspace{-6ex}
	\end{table}

\stitle{Contributions.} 
This paper presents \vsag, an open-source framework for enhancing the in-production efficiency of graph-based ANNS algorithms.
The optimizations of \vsag are in three-fold.
\textit{(i) Efficient Memory Access}: during graph-based search, it pre-fetches the neighbor vectors, and creates a continuous copy of the neighbor vectors for some vertices.
This cache-friendly design can reduce L3 cache misses.
\textit{(ii) Automated Parameter Tuning:} \vsag can automatically tune parameters for environment (e.g., prefetch depth), index (e.g., max degree of graph), and query (e.g., candidate size).
Suppose there are 3 index parameters and 5 choices for each parameter.
The tuned index parameters of \vsag offer similar performance to that of brute-force tuning, which needs to enumerate all $5^{3}$ combinations of parameters leading to a total tuning time of $5^{3}$ times of index construction time.
On the contrast, the tuning cost of \vsag is only 2-3 times of index construction time. \textit{(iii) Efficient Distance Computation}: \vsag provides various approximate distance techniques, such as scalar quantization. All distance computation is well optimized with modern hardware, and a selective re-ranking strategy is used to ensure retrieval accuracy.

Table \ref{tab:intro:algo_comp} compares the performance of \vsag with existing works on GIST1M.
The results show that \vsag alleviates the performance challenges in memory access, parameter tuning, and distance computation, thus providing higher QPS with the same recall rate.

In summary, we make the following contributions:

\begin{enumerate}
	\item We enhance the memory access of \vsag in \S\ref{sec:tech1_cache_miss_opt}. The L3 cache miss rate of \vsag is much less than that of other graph-based ANNS works.
	
	\item We propose automatic parameter tuning for \vsag in \S\ref{sec:tech2_auto_param_opt}.
	It automatically selects performance-optimal parameters that are comparable to grid search without requiring index rebuilding.

	\item We accelerate \vsag in distance computation in \S\ref{sec:tech3_distance_opt}.
	Compared with other graph-based ANNS works, \vsag requires much less time for distance computation.
	
	\item {We evaluate the algorithms on real datasets with sizes up to 100 million vectors in \S\ref{sec:experimental}.}
	The results show that \vsag can achieve the SOTA performance and outperform HNSWlib by up to 4$\times$ in QPS under the same recall guarantee.   
\end{enumerate}

\section{Overview of VSAG Framework}
\label{sec:framework}

\begin{table}[t!]
	\centering
	{\small\scriptsize
		\caption{\small Symbols and Descriptions.}\vspace{-4ex} \label{tab:symbol}
		\begin{tabular}{l|l}\toprule
			{\bf Symbol} & {\bf Description} \\ \hline 
			$D$   & the base dataset \\
			$G$   & the graph index \\
			$L$   & the labels of edges \\
			$\tau_l,\tau_h$ & the distance function with low precision and high precision \\
			
			$x,x_b, x_q, x_n$   & a normal, base, query, and neighbor vector\\
			$NN_k(x), ANN_k(x)$ & the $k$ nearest and approximate $k$ nearest neighbors of $x$\\
			
			$ef_s, ef_c$   & the candidate pool size in search and construction phase\\
			
			$\alpha_s, \alpha_c$ & the pruning rate used in search and construction phase \\
			
			$m_s, m_c$ & the maximum degree of graph used in search and construction phase \\
			
			$\omega$   & prefetch stride\\
			$\nu$   & prefetch depth\\
			
			$\delta$   & redundancy ratio\\
			
			\bottomrule
		\end{tabular}
	}\vspace{-6ex}
\end{table}

\begin{figure*}[t]\centering
	\includegraphics[width=\linewidth]{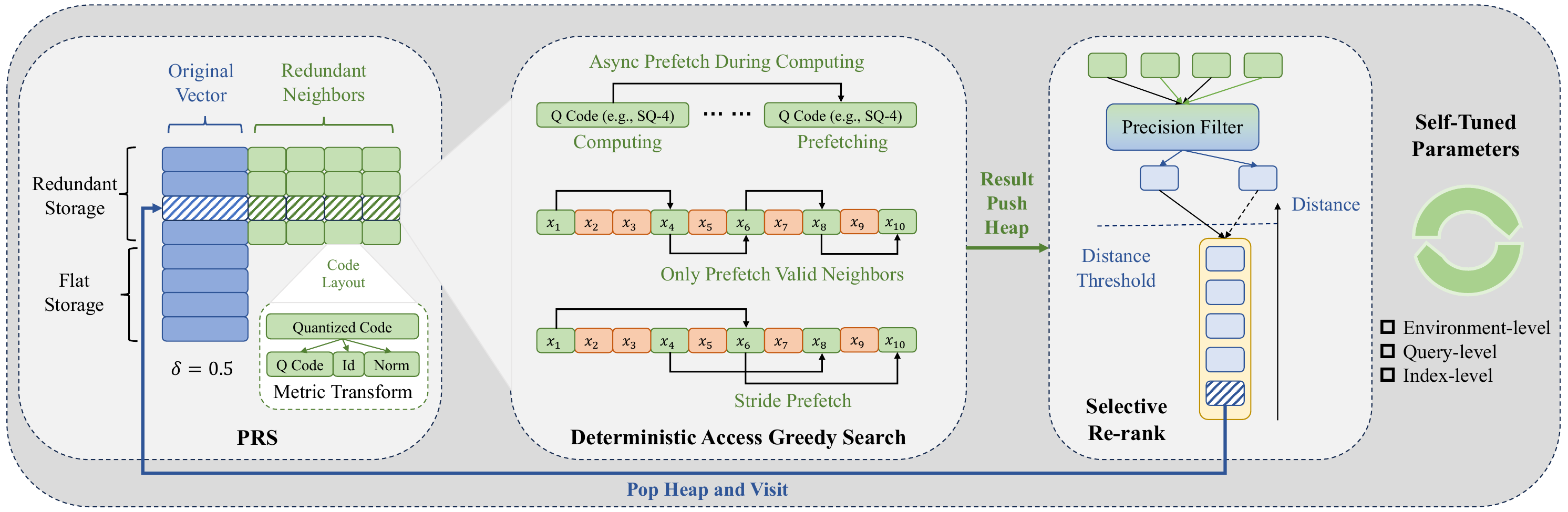}\vspace{-2ex}
	\caption{The Search Framework of \vsag.}%
	\label{fig:framework}\vspace{-2ex}
\end{figure*}

This section provides an overview of our \vsag framework, which includes memory access optimization, automatic parameter tuning, and distance computation optimization. As shown in Figure~\ref{fig:framework}, \vsag integrates these optimizations into different phases of the search process. First, the \textit{PRS} manages the storage of vector data and graph indexes. It stores low-precision quantized codes of vectors alongside high-precision original vectors to support distance computation optimization. This redundant storage is also utilized to balance resource usage and reduce memory access costs. Second, during each hop of the search process when exploring a vector, we employ \textit{Deterministic Access Greedy Search} to accelerate memory access by minimizing cache misses. Finally, the visited results are pushed into a heap. We apply \textit{Selective Re-rank} to combine low- and high-precision distances, ensuring effective search performance. The search process then pops the nearest unfinished point from the heap and proceeds to the next hop. Throughout    the entire life-cycle of the index, \vsag uses a smart auto-tuner to select parameters that deliver the best search performance across varying environments and retrieval requirements. We will now detail the three main optimizations used in \vsag. The symbols used in this paper are listed in Table~\ref{tab:symbol}.

\subsection{Memory Access Optimization}

\stitle{Deterministic Access.} Distance computations for neighboring vectors often incur random memory access patterns in graph-based algorithms, leading to significant cache misses. \vsag addresses this by integrating software prefetching~\cite{2020prefetchAndMemoryAccess} (i.e., \texttt{\_mm\_prefetch}) through a \textit{Deterministic Access} strategy (see \S\ref{sec:deterministic_access}). \vsag strategically inserts prefetch instructions during and before critical computations, proactively loading target data into L3/higher-level caches. This prefetch-pipeline overlap ensures data availability before subsequent computation phases begin, effectively mitigating cache miss penalties. Furthermore, the \vsag framework effectively mitigates suboptimal prefetch operations through batch processing and reordering of the access sequence.

\stitle{PRS.} \vsag introduces a \textit{Partial Redundant Storage (PRS)} design (see \S\ref{sec:prs}), which provides a flexible and high-performance storage foundation to optimize both distance computations and memory access while balancing storage and computational resource usage. In production environments constrained by fixed hardware configurations, such as 4C16G (i.e., equipped with 4 CPU cores and 16GB of memory) and 2C8G (i.e., equipped with 2 CPU cores and 8GB of memory), most algorithms frequently exhibit resource utilization imbalances between computational and memory subsystems. During computational processes, CPUs frequently encounter idle cycles caused by cache misses, which hinders their ability to achieve optimal utilization, and thereby limits the system's QPS. 

To address this challenge, the PRS framework \textit{Redundantly Storing Vectors}  (see \S\ref{sec:redundant_storing_vector}) that embeds compressed neighbor vectors at each graph node. This architectural design enables batched distance computations while leveraging more efficient \textit{Hardware-based Prefetch}~\cite{chen1995hardwarePrefetch} (see \S\ref{sec:hardware_prefetch}) to maintain high cache hit rates. By incorporating advanced quantization methods~\cite{jegou2010pq_ivfadc, 2024rabitq}, PRS achieves high vector storage compression ratios, thereby maintaining acceptable storage overhead despite data redundancy.  

\stitle{Balance of Resources.} In particular, the system uses a parameter called the redundancy ratio $\delta$ to control  the \textit{Balance of Computational Efficiency and Memory Utilization} (see \S\ref{sec:balance_cpu_memory}). In compute-bound scenarios with high-throughput demands, \vsag adaptively increases the redundancy ratio to mitigate cache contention, thus minimizing CPU idle cycles during memory access while preserving storage efficiency. In contrast, in memory-constrained low-throughput scenarios, the framework strategically reduces redundancy ratio to optimize the index-memory footprint. Under memory-constrained conditions, this optimization enables deployment on reduced instance tiers, thereby curtailing compute wastage.

\subsection{Automatic Parameter Tuning}
\vsag addresses parameter selection complexity through a tripartite classification system with specialized optimization strategies: \textit{environment-level}, \textit{query-level}, and \textit{index-level} parameters. Environment-level parameters (e.g., prefetch stride $\omega$) exclusively influence query-per-second (QPS) performance without recall rate impacts, thus incurring the lowest tuning overhead. Query-level parameters (e.g., candidate set size $ef_s$~\cite{malkov2018hnsw}) exhibit moderate tuning costs by jointly affecting QPS and recall, requiring adjustment based on query vector distributions. Index-level parameters (e.g., maximum degree $m_c$~\cite{malkov2018hnsw}) demand the highest tuning investment due to their tripartite impact on QPS, recall, and index construction time – parameter validation necessitates multiple index rebuilds.

\begin{itemize}[leftmargin=*, itemsep=1pt]
	\item \textit{Environment-level Parameters} (see \S\ref{sec:elp_tuner}): \vsag employs a grid search to identify optimal configurations for peak QPS performance through systematic parameter space exploration.
	
	\item \textit{Query-level Parameters} (see \S\ref{sec:qlp_tuner}): \vsag implements multi-granular tuning strategies, including \textit{fine-grained adaptive optimization} that dynamically adjusts parameters based on real-time query difficulty assessments.
	
	\item \textit{Index-level Parameters} (see \S\ref{sec:ilp_tuner}): \vsag introduces a novel mask-based index compression technique that encodes multiple parameter configurations into an unified index structure. During searches, edge-label filtering dynamically emulates various construction parameters, thereby reducing index-level parameters to query-level equivalents while keeping a single physical index.
\end{itemize}

\subsection{Distance Computation Optimization}

Distance computation is a main overhead in vector retrieval, and its cost increases significantly with the growth of vector dimensions.
\textit{Quantization methods} (see \S\ref{sec:minimize_computation_overhead}) can effectively accelerate distance computation. For example, under identical instruction set architectures, \texttt{AVX512}~\cite{avx512} can process $4\times$ as many \texttt{INT8} data per instruction compared to \texttt{FLOAT32} values. However, naive quantization approaches often result in significant accuracy degradation.
\vsag uses \textit{Selective Re-rank} (see \S\ref{sec:selective_rerank}) to improve efficiency without sacrificing the search accuracy.
Furthermore, specific distance metrics (i.e., euclidean distance) can be strategically decomposed and precomputed, effectively reducing the number of required instructions during actual search operations.

\section{Memory Access Optimization}
\label{sec:tech1_cache_miss_opt}

Graph-based algorithms suffer from random memory access patterns that incur frequent cache misses. The fundamental strategy for mitigating cache-related latency lies in effectively utilizing vector computation intervals to prefetch upcoming memory requests into cache. In \vsag, three primary optimization strategies emerge for maximizing cache utilization efficiency:

\begin{itemize}[itemsep=1pt]
	\item Leveraging software prefetching to improve cache hit rates.
	\item Optimizing search patterns to enhance the effectiveness of software prefetching.
	\item Optimizing the memory layout of indexes to efficiently utilize hardware prefetching.
\end{itemize}

\subsection{Software-based Prefetch: Making Random Memory Accesses Logically Continuous}
\label{sec:software_prefetch}

As shown in Figure~\ref{fig:tech1-software-prefetch}, when computing vector distances, the vector is loaded sequentially from a segment of memory. The CPU fetches data from memory in units of cache lines~\cite{veidenbaum1999cacheline}. Consequently, multiple consecutive cache fetch operations are triggered for a single distance computation. 

\begin{example}
	Take standard 64-byte cache line architectures as an example. The 960-dimensional vector of GIST1M stored as float32 format necessitates  $960 \times 4 / 64 = 60$  cache line memory transactions, demonstrating significant pressure on memory subsystems.
\end{example}

This passive caching mechanism creates operational inefficiencies by extensively fetching data only upon cache misses, resulting in synchronous execution bottlenecks. As illustrated in Figure~\ref{fig:tech1-software-prefetch}, the orange timeline shows how the regular ANNS algorithm serializes the computation and memory access phases: Each cache line fill (analogous to blocking I/O) stalls computation until completion. The accumulated latency from successive cache line transfers introduces a significant constant factor in complexity analysis, particularly in memory-bound scenarios with poor data locality.

\begin{figure}[t!]\centering\vspace{-4ex}
	\includegraphics[width=0.9\linewidth]{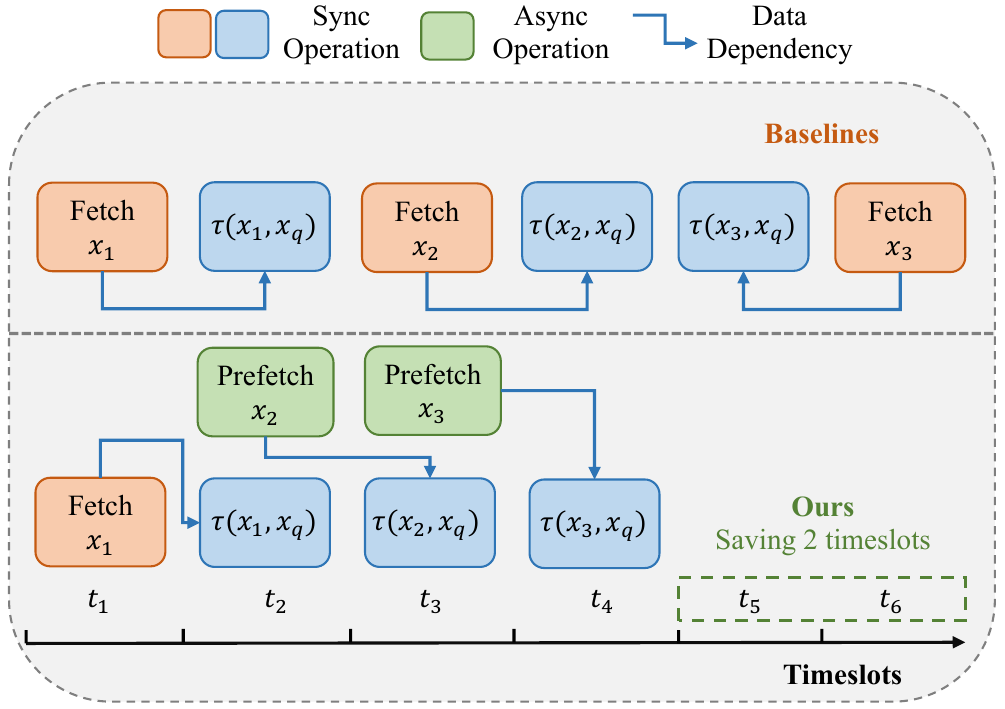}\vspace{-2ex}
	\caption{\small Passive Memory Access and Software-based Prefetch.
	}\vspace{-3ex}
	\label{fig:tech1-software-prefetch}
	\vspace{-1ex}
\end{figure}

\noindent\textbf{Software-based Prefetch.} Modern CPUs support software prefetch instructions, which can asynchronously load data into different levels of the cache~\cite{2020prefetchAndMemoryAccess}. By leveraging the prefetch instruction to preload data, we can achieve a near-sequential memory access pattern from the CPU level. This indicates that data is preloaded into the cache before the CPU requires it, thereby preventing disruptions in the computational flow caused by random memory address loads. More specifically, prior to computing the distance for the current neighbor, the vector for the next neighbor can be prefetched. As detailed in Figure~\ref{fig:tech1-software-prefetch}, the green flow represents the use of prefetching. From the perspective of the CPU, the majority of distance computations make use of data that has already been cached. Furthermore, because prefetching operates asynchronously, it does not obstruct ongoing computations. The synergy of asynchronous prefetching and immediate data access optimizes the utilization of CPU computational resources, thereby substantially enhancing search performance.

\subsection{Deterministic Access Greedy Search: Advanced Prefetch Validity}

In \S\ref{sec:software_prefetch}, the software-based prefetch mechanism initiates the fetching of next-neighbor vector upon completion of each neighbor computation. However, this method results in redundant operations because previously visited neighbors that do not require distance computations still generate prefetch time cost. Example \ref{example:prefect}  illustrates the inherent limitations of previous prefetch schemes:
\begin{example}\label{example:prefect}
	In Figure \ref{fig:tech1-batch-prefetch}(a), $x_2$ and $x_3$ have already been visited, and the distances do not need to be recomputed. This renders the previous prefetching of these vectors ineffective. Additionally, when computing $x_4$, the prefetch may also fail due to the prefetching gap being too short.
\end{example}

In this section, we present two dedicated strategies to address the aforementioned prefetching challenges.

\begin{figure}[t!]\centering\vspace{-4ex}
	\scalebox{0.45}[0.45]{\includegraphics{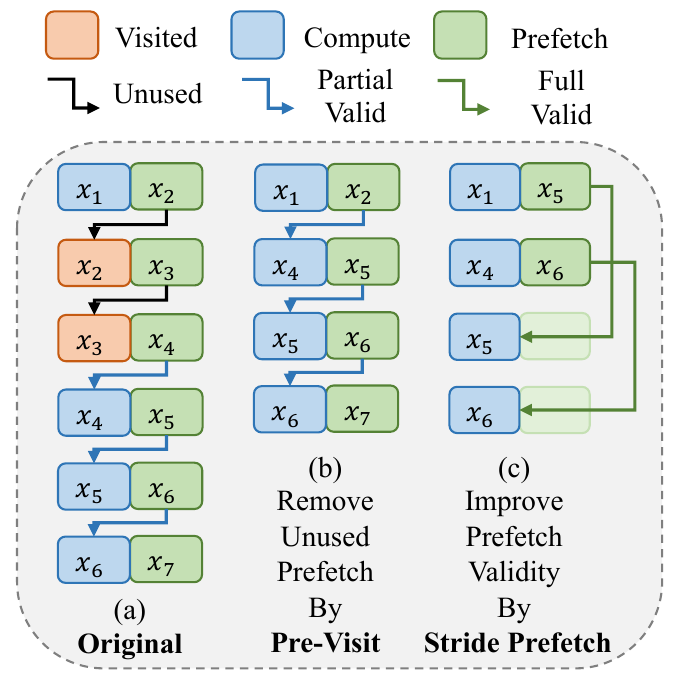}}\vspace{-2ex}
	\caption{\small Three Different Prefetching Strategies}\vspace{-3ex}
	\label{fig:tech1-batch-prefetch}
	
	\vspace{-1ex}
\end{figure}

\subsubsection{Deterministic Access} 
\label{sec:deterministic_access}
In contrast to prefetching during edge access checks, \vsag exclusively prefetches only those edges that have not been accessed. The mechanism begins by batch-processing all neighbor nodes to determine their access status. Following this, unvisited neighbors are logically grouped, and prefetching is performed collectively. This strategic approach ensures that each prefetched memory address corresponds exclusively to computation-essential data, thereby enhancing prefetching efficiency and minimizing redundant memory operations.

\subsubsection{Stride Prefetch}
\label{sec:jumping_prefetch}
Batch processing ensures that each prefetch retrieves data intended for future use. However, prefetch effectiveness varies due to the asynchronous nature of prefetching and the absence of a callback mechanism to confirm prefetch completion. Optimal performance occurs when the required data reside in the cache precisely when the computation flow demands it. Premature prefetching risks cache eviction, while delayed prefetching negates performance gains. This necessitates balancing prefetch timing with computation duration. To address this, stride prefetching dynamically aligns hardware computation throughput with software prefetching rates, maximizing prefetch utility. The key parameter, the prefetch stride $\omega$~\cite{vanderwiel2000stride}, determines how many computation steps occur before each prefetch. Adjusting $\omega$ is crucial, and in \S\ref{sec:elp_tuner}, we propose an automated strategy to select its optimal value.

\begin{algorithm}[t!]\small
	\DontPrintSemicolon
	\KwIn{graph $G$, labels $L$, base dataset $D$, initial nodes $I$, query point $x_q$, low- and high- precision distance functions $\tau_l$ and $\tau_h$,
		search parameters $k$, $ef_s$, $m_s$, $\alpha_s$, $\omega$, $\nu$}
	\KwOut{ $ANN_k(x_q)$ and their high-precision distances $T$}
	candidate pool $C$ $\leftarrow$ maximum-heap with size of $ef_s$\;
	
	visited set $V \leftarrow \emptyset$\;
	
	insert $(x_i,\tau_l(x_i, x_q)),\forall x_i\in I$ into $C$ \;
	
	\While{$C$ has un-expended nodes} {
		$x_i \leftarrow $ closest un-expended nodes in $C$ \;
		$N \leftarrow $ empty list \;
		\For{$j \in G_{i}$} {  \tcp{Only retrieve Id(i.e., $x_j.id = j$)}
			\If{ $j \not \in V$
				\textbf{\textup{and}} $L_{j} \le \alpha_s$
				\textbf{\textup{and}} $|N| < m_s$} {
				$N\leftarrow N\cup \{j\}$ \;
				$V\leftarrow V\cup \{j\}$ \;
			}
		}
		
		\For{$k \in [0, min(\omega, |N|))$} {
			prefetch $\nu$ cache lines start from $D_{N_k}$  \;
		}
		\For{$k \in [0, |N|)$} {
			\If{$k + \omega < |N|$} {
				prefetch $\nu$ cache lines start from $D_{N_{k + \omega}}$ \;
			}
			$x_j \leftarrow D_{N_k}$ \tcp{Memory Access}
			
			insert $(x_j, \tau_l(x_j, x_q))$ into $C$ and keep $|C| \le ef_s$\;
		}
	}
	
	$ANN_k(x_q), T \leftarrow$ selective re-rank $C$ with $\tau_l$ and $\tau_h$ \;
	
	\Return{$ANN_k(x_q), T$} \;
	\caption{Deterministic Access Greedy Search}
	\label{algo:greedy}
\end{algorithm}

\begin{example}
	As detailed in Figure~\ref{fig:tech1-batch-prefetch}(b), the adjusted pattern demonstrates that during batch processing, the \textbf{Deterministic Access} strategy eliminates the need to access $x_2$ and $x_3$. Consequently, the search logic progresses from $x_1$ directly to $x_4$. This sequence modification enables the prefetch mechanism to target $x_4$ while computing $x_1$. Figure~\ref{fig:tech1-batch-prefetch}(c) further reveals the temporal characteristics of asynchronous prefetching: The data loading process requires two vector computation cycles to populate the cache line. When computation for $x_1$ initiates, only the $x_4$ vector can be prefetched. After the computations of $x_1$ and $x_4$, the \textbf{Stride Prefetch} strategy ensures timely cache population of $x_6$ data, which is immediately available for subsequent computation.
\end{example}

\noindent\textbf{Deterministic Access Greedy Search.} 
The cache-optimized search algorithm is formalized in Algorithm~\ref{algo:greedy}. The graph index $G$ constitutes an oriented graph that maintains base vectors along with their neighbors. The labels $L$ of edges in $G$ are used for automatic index-level parameters tuning (see \S\ref{sec:tech2_auto_param_opt}). We use $G_i$ and $L_i$ to indicate the out-edges and labels of $x_i$. The low- and high-precision distance functions $\tau_l$ and $\tau_h$ are used to accelerate distance computation while maintaining search accuracy, and they are employed in the selective re-ranking process (see \S\ref{sec:tech3_distance_opt}). The complete algorithm explanation is provided in Appendix A.

\subsection{PRS: Flexible Storage Layout Boosting Search Performance}
\label{sec:prs}

While incorporating well-designed prefetch patterns into search processes can theoretically improve performance, the inherent \textbf{limitations of Software-based Prefetch} prevent guaranteed memory availability for all required vectors. This phenomenon can be attributed to multiple fundamental constraints:
(a) Prefetch instructions remain advisory operations rather than mandatory commands. Even when optimal prefetch patterns are implemented, their actual execution cannot be assured. 
(b) Cache line contention represents another critical challenge. In multi-process environments, aggressive prefetch strategies may induce L3 cache pollution through premature data loading.
(c) The intrinsic cost disparity between prefetch mechanisms further compounds these issues. Software-based prefetching intrinsically carries higher operational costs and demonstrates inferior efficiency compared to hardware-implemented alternatives.

\subsubsection{Hardware-based Prefetch} 
\label{sec:hardware_prefetch}
Hardware-based prefetching relies on hardware mechanisms that adaptively learn from cache miss events to predict memory access patterns. The system employs a training buffer that dynamically identifies recurring data access sequences to automatically prefetch anticipated data into the cache hierarchy. Compared to software-controlled prefetching, this hardware approach demonstrates better runtime efficiency while functioning transparently at the architectural level. The training mechanism shows particular effectiveness for \textit{Sequential Memory Access} patterns~\cite{braun2019sma}, where it can rapidly detect and exploit sequential memory access characteristics. This optimization proves particularly beneficial for space-partitioned index structures like inverted file-based index~\cite{PQ-PAMI-2014}, where vectors belonging to the same partition maintain contiguous storage allocation. Conversely, graph-based indexing architectures exhibit irregular access patterns with poor spatial locality, resulting in inefficient \textit{Random Memory Access}~\cite{braun2019sma}. The inherent randomness of access sequences prevents the training buffer from establishing effective pattern recognition models.

\subsubsection{Redundantly Storing Vectors}
\label{sec:redundant_storing_vector}
\vsag integrates the benefits of space-partitioned indexes into graph-based indexing algorithms through redundant vector storage. By co-locating neighbor lists with their corresponding vectors within each node's data structure, it achieves \textit{Sequential Memory Access}. This design ensures that neighbor retrieval operations only require sequential access within contiguous memory regions, thereby fully leveraging hardware prefetching~\cite{chen1995hardwarePrefetch} capabilities.

\begin{example}
	
	As illustrated in Figure \ref{fig:framework}, consider 10 vectors stored contiguously in memory. Even when accessing $x_1$ through $x_5$ where $x_2$ and $x_3$ are not immediately required, hardware prefetchers can still proactively load $x_4$ into cache. This behavior stems from the memory locality created by storing adjacent vectors ($x_1$ to $x_5$) in consecutive memory addresses. The consistent memory layout and predictable access patterns effectively compensate for software-based prefetching inefficiencies through hardware optimizations.
\end{example}

\begin{figure}[t!]\centering\vspace{-3ex}
	{\includegraphics[width=.8\linewidth]{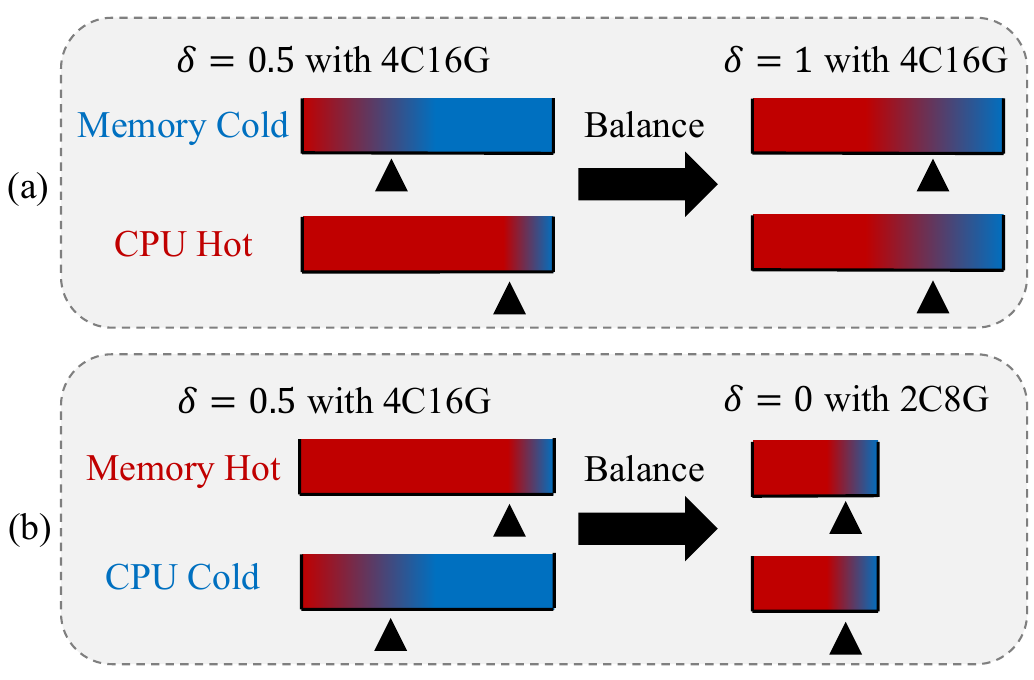}}\vspace{-2ex}
	\caption{\small Adjust Redundant Ratio $\delta$}\vspace{-3ex}
	\label{fig:PRS}
\end{figure}

\subsubsection{Balance of Computational Efficiency and Memory Utilization.}
\label{sec:balance_cpu_memory}
To address the computational-memory resource imbalance caused by fixed instance specifications (e.g., 4C16G, 2C8G) in industrial applications, we propose PRS to take advantage of hardware prefetching to reduce CPU idle time. A dynamically tunable \textit{redundancy ratio} $\delta$ controls the proportion of redundantly stored neighbor vectors in graph indexes, balancing prefetch efficiency and CPU utilization. When $\delta=1$, full redundancy maximizes hardware prefetch benefits, achieving peak CPU utilization at the cost of higher memory consumption. In contrast, $\delta=0$ eliminates redundancy to minimize memory usage but sacrifices prefetch efficiency. This flexibility enables workload-aware resource optimization as shown in Example \ref{exp:RPS}.
\begin{example}
	For high-throughput/high-recall scenarios (Fig. \ref{fig:PRS} (a)), increasing $\delta=1$ prioritizes CPU efficiency to meet demanding targets with fewer compute resources. In memory-constrained environments with moderate throughput requirements (Fig. \ref{fig:PRS} (b)), reducing $\delta=0$ alleviates memory pressure while allowing instance downsizing (e.g., 4C16G to 2C8G), which maintains service quality through controlled compute scaling and reduces infrastructure costs.
	\label{exp:RPS}
\end{example}

\section{Automatic Parameter Tuning}
\label{sec:tech2_auto_param_opt}

\subsection{Parameter Classification}
We observe that specific parameters in the retrieval process exert a significant influence~\cite{2024vdtuner} on the search performance (e.g., prefetch depth $\nu$, construction ef $ef_c$, search ef $ef_s$, and candidate pool size $m_c$). These parameters can be systematically categorized into three distinct types for discussion: Environment-Level Parameters (ELPs), Query-Level Parameters (QLPs), and Index-Level Parameters (ILPs).

\noindent\textbf{ELPs} primarily affect the efficiency (QPS) of the retrieval process and are directly related to the retrieval environment (e.g., prefetch stride $\omega$, prefetch depth $\nu$). Most of these parameters are associated with the execution speed of system instructions rather than the operational flow of the algorithm. For instance, prefetch-related parameters mainly influence the timing of asynchronous operations.

\noindent\textbf{QLPs} inherently influence both retrieval efficiency (QPS) and effectiveness (Recall) simultaneously. These parameters operate on prebuilt indexes and can be dynamically configured during the retrieval phase (e.g., search parameter $ef_s$, selective reranking strategies). In particular, efficiency and effectiveness exhibit an inherent trade-off: For a static index configuration, achieving higher QPS inevitably reduces Recall performance.

\noindent\textbf{ILPs} define an index's core structure and performance. Tuning the construction-related parameters (e.g., $m_c$, $\alpha_c$) requires rebuilding indexes. It is a process far more costly than adjusting QLPs or EIPs. Crucially, ILPs impact both efficiency and effectiveness simultaneously.

\noindent\textit{\underline{Hardness.}}
The complexity of tuning these three parameter categories shows a progressive increase. ELPs focus solely on algorithmic efficiency, resulting in a straightforward single-objective optimization problem. In contrast, QLPs require balancing both efficiency and effectiveness criteria, thereby forming a multi-objective optimization challenge. The most demanding category, ILPs, necessitates substantial index construction time in addition to the aforementioned factors, leading to exponentially higher tuning expenditures. In subsequent sections, we present customized optimization strategies for each parameter category.

\subsection{Search-based Automatic ELPs Tuner}
\label{sec:elp_tuner}

The proposed algorithm utilizes multiple ELPs that exhibit substantial variance in optimal configurations in different testing environments and methodologies, as demonstrated by our comprehensive experimental analysis (see \S \ref{sec:experimental}). As elaborated in \S \ref{sec:jumping_prefetch}, the stride prefetch mechanism operates through two crucial parameters: the prefetch stride $\omega$ and the prefetch depth $\nu$. The parameter $\omega$ governs the prefetch timing based on the dynamic relationship between the CPU computational speed and the prefetch latency within specific deployment environments. In particular, smaller $\omega$ values are required to initiate earlier prefetching when faced with faster computation speeds or slower prefetch latencies. Meanwhile, the $\nu$ parameter is primarily determined by the CPU's native prefetch patterns combined with vector dimensionality characteristics.

These environment-sensitive parameters exclusively influence algorithmic efficiency metrics (e.g., QPS) while maintaining consistent effectiveness outcomes (e.g., Recall), thereby enabling independent optimization distinct from core algorithmic logic. The \vsag tuning framework implements an optimized three-step procedure:
\begin{enumerate}[itemsep=0pt]
	\item Conduct an exhaustive grid search for all combinations of environment-dependent parameter.
	\item Evaluate performance metrics using sampled base vectors.
	\item Select parameter configurations that maximize retrieval speed while maintaining operational stability.
\end{enumerate}
\vspace{-2ex}

\subsection{Fine-Grained Automatic QLPs Tuner }
\label{sec:qlp_tuner}
\textbf{Observation.} There are significant differences in the parameters required for different queries to achieve the target recall rate, with a highly skewed distribution. Specifically, 99\% of queries can achieve the target recall rate with small query parameters, while 1\% of queries require much larger parameters. Experimental results show that assigning personalized optimal retrieval parameters to each query can improve retrieval performance by 3–5 times.

To address the observation of QLPs, we propose a \textbf{Decision Model} for query difficulty classification, enabling personalized parameter tuning while maintaining computational efficiency. We introduce a learning-based adaptive parameter selection framework through a GBDT classifier~\cite{2016xgboost} with early termination capabilities, demonstrating superior performance in fixed-recall scenarios. The model architecture addresses two critical constraints: discriminative power and computational efficiency. Our feature engineering process yields the following optimal feature set:

\begin{itemize}[leftmargin=5mm,itemsep=1pt]
	\item Cardinality of scanned points
	\item Distribution of distances among current top-5 candidates
	\item Temporal distance progression in recent top-5 results
	\item Relative distance differentials between top-K candidates and optimal solution
\end{itemize}

The candidate size is initially set to a relatively large value (e.g., $ef_s$ = 300). During the search process, after traversing certain hops in the graph, we employ the decision tree along with the current features to determine whether the candidate size should be reduced, which effectively prevents overly simple queries from undergoing extensive and unnecessary search process.

\subsection{Labeling-based Automatic ILPs Tuner }
\label{sec:ilp_tuner}
\textbf{Impact of ILPs.} The retrieval efficiency and effectiveness of graph-based indexes are fundamentally determined by their structural properties. Modifications to ILPs (e.g., maximum degree $m_c$~\cite{jayaram2019diskann, malkov2018hnsw}, pruning rate $\alpha_c$~\cite{jayaram2019diskann}) induce concurrent alterations in both graph topology and search dynamics~\cite{yang2025revisitingindexconstructionproximity}, creating non-linear interactions between retrieval speed and accuracy. 
Rebuilding the index while tuning ILPs is computationally expensive(typically taking 2-3 hours per parameter configuration for million-scale datasets).

\subsubsection{Compression-Based Construction Framework}
We empirically observe that indexes constructed with varying ILPs exhibit substantial edge overlap. As formalized in Theorem~\ref{theo:max_degree}, when maintaining a fixed pruning rate $\alpha$, a graph built with lower maximum degree $m_c$ constitutes a strict subgraph of its higher-degree counterpart. 

\begin{theorem} (Maximum Degree-Induced Subgraph Hierarchy)
	Let $G^a$ and $G^b$ denote indexes constructed with maximum degrees $m_c = a$ and $m_c = b$ respectively, under identical initialization conditions and fixed $\alpha$. Given equivalent greedy search outcomes $\mathrm{ANN}_k(x)$ for all points $x$ under both configurations, then $a < b \implies G^a \subset G^b$ where $\forall (x_i, x_j) \in E(G^a), (x_i, x_j) \in E(G^b)$.
	\label{theo:max_degree}
\end{theorem}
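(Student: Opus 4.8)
The plan is to argue by induction on the order in which the $n$ base points are inserted. Because $G^a$ and $G^b$ are built ``under identical initialization conditions,'' the insertion order $x_{\pi(1)},\dots,x_{\pi(n)}$ and the starting graphs coincide. Write $G^a_t$ and $G^b_t$ for the partial indexes after $t$ insertions, and for a node $v$ write $N^a_t(v), N^b_t(v)$ for its out-neighbor lists. I would carry a stronger invariant than mere edge-inclusion: for every node $v$, $N^a_t(v)=\mathrm{Prune}_\alpha(N^b_t(v),a)$, i.e. $N^a_t(v)$ is exactly the (ordered) prefix that the $\alpha$-pruning heuristic selects from $N^b_t(v)$ when capped at degree $a$. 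This immediately gives $N^a_t(v)\subseteq N^b_t(v)$ and hence $E(G^a_t)\subseteq E(G^b_t)$. The base case is trivial since the two indexes start equal.

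For the inductive step, insert $x:=x_{\pi(t+1)}$. The theorem's hypothesis that the greedy searches return equivalent outcomes under both configurations lets me assume the candidate pool $C$ passed to neighbor selection is the same set in $G^a$ and $G^b$. The neighbor-selection heuristic with fixed $\alpha$ is deterministic on $C$: it repeatedly commits the closest surviving candidate and discards everything that commitment $\alpha$-dominates, and the degree cap only controls when the loop halts. Hence the neighbors chosen under cap $a$ are precisely the first $a$ chosen under cap $b$; formally $N^a_{t+1}(x)=\mathrm{Prune}_\alpha(C,a)=\mathrm{Prune}_\alpha(\mathrm{Prune}_\alpha(C,b),a)=\mathrm{Prune}_\alpha(N^b_{t+1}(x),a)$. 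The middle equality --- running the heuristic a second time with a tighter cap on its own output changes nothing but the truncation length --- I would isolate and prove as a short ``prefix-consistency'' lemma about the pruning rule. This establishes the invariant for the freshly inserted node and shows its forward edges in $G^a$ form a subset of those in $G^b$.

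It remains to handle the back-edges. For each $v\in N^a_{t+1}(x)\subseteq N^b_{t+1}(x)$ the algorithm adds $(v,x)$ and, if $v$'s degree now exceeds its cap, re-prunes $v$'s list; since $v$ also lies in $G^b$'s selected set, $(v,x)$ is added in $G^b$ too, so no edge can appear in $G^a$ out of a node left untouched in $G^b$. The real danger is that re-pruning $v$ keeps $x$ under cap $a$ but drops it under cap $b$, or more generally breaks the prefix invariant at $v$. Here I would combine the inductive invariant $N^a_t(v)=\mathrm{Prune}_\alpha(N^b_t(v),a)$ with the prefix-consistency lemma to show $\mathrm{Prune}_\alpha(N^a_t(v)\cup\{x\},a)=\mathrm{Prune}_\alpha(N^b_t(v)\cup\{x\},a)=\mathrm{Prune}_\alpha(\mathrm{Prune}_\alpha(N^b_t(v)\cup\{x\},b),a)$, i.e. the re-pruning of a single node commutes with reducing the cap from $b$ to $a$; this simultaneously restores the invariant at $v$ and yields $(v,x)\in E(G^a)\Rightarrow(v,x)\in E(G^b)$.

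I expect this last commutation to be the technical heart of the proof: it rests on showing that appending one extra candidate $x$ to an already-pruned list and pruning again selects the same elements, in the same order, as pruning the full un-pruned list --- a statement that must be checked against the precise pruning rule \vsag uses, including the common implementation shortcut of appending without re-pruning whenever the node is still under its cap, which needs its own small argument. Finally, with the invariant in hand at the end of the induction, $G^a=G^b$ would force $\mathrm{Prune}_\alpha(N^b_n(v),a)=N^b_n(v)$ for every $v$, hence every out-degree in $G^b$ at most $a$ with no dominated edge --- impossible once the construction actually exploits the larger budget $b$ at some node --- so the inclusion is strict, completing the argument.
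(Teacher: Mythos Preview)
Your approach is fundamentally the same as the paper's: both rest on the observation that, once the candidate pool is fixed (which the hypothesis on $\mathrm{ANN}_k$ guarantees), the $\alpha$-pruning heuristic selects neighbors in a fixed deterministic order, so capping at $a$ simply truncates the sequence obtained under cap $b$. Your ``prefix-consistency lemma'' is exactly what the paper asserts informally when it says ``the first $a$ edges added in the loop are identical to those in $G^a_i$,'' and your invariant $N^a_t(v)=\mathrm{Prune}_\alpha(N^b_t(v),a)$ is a precise restatement of the paper's conclusion $G^a_i = G^b_{i,0:a}$.

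Where you diverge is in rigor and scope. The paper's proof is a four-sentence sketch that treats only the forward-edge selection at a single insertion and justifies the rest with the blanket claim ``during the edge connection process, existing edges are never deleted.'' You instead run an explicit induction over insertions and separately analyze the reverse-edge re-pruning step, correctly flagging the commutation $\mathrm{Prune}_\alpha(N^a_t(v)\cup\{x\},a)=\mathrm{Prune}_\alpha(N^b_t(v)\cup\{x\},a)$ as the place where real work is hidden. That is a genuine strengthening: the paper does not touch this case at all. Your treatment buys a proof that actually tracks the full construction of Algorithm~3; the paper's buys brevity at the cost of leaving the reverse-edge case implicit.

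One point to drop: your final paragraph argues for \emph{strict} inclusion, but the theorem does not assert that --- the ``where'' clause glosses $G^a\subset G^b$ as edge containment only, and the paper's proof makes no strictness claim. Indeed strictness can fail (e.g.\ if no node ever accumulates more than $a$ surviving neighbors), so that argument should be removed.
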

\begin{proof}
	Please refer to Appendix B.
\end{proof}

The lifecycle of the index in \vsag is divided into \textbf{three distinct phases: building, tuning, and searching}. In contrast, traditional indexing frameworks typically consist of only two phases: building and searching. They select edges to construct the index structure using a set of fixed index-level parameters. \textit{The reason behind our design is that we believe it is insufficient to rely solely on the raw data for selecting edges during the construction phase.} These parameters may need to be dynamically adjusted based on varying user requirements and query complexities.

\vsag employs an \textit{index compression strategy} that constructs the index once using relaxed ILPs while labeling edges with specific ILPs. Then, \vsag can dynamically select edges by leveraging labels, shifting the edge selection process from the index construction phase to the tuning phase. It enables adaptive edge selection without index reconstruction. Thus, it achieves equivalent search performance to maintaining multiple indexes with different parameter configurations, while incurring only the overhead of constructing a single index. The parameter labeling mechanism maintains topological flexibility while enhancing storage efficiency through a compressed index representation. Due to space limitations, please refer to Appendix B for the detailed construction algorithm of \vsag. Then, we illustrate the labeling algorithm, which assigns label to each edge.

\begin{algorithm}[t!]\small
	\DontPrintSemicolon
	\KwIn{ 
		base points $x_i$, 
		approximate nearest neighbors $\mathrm{ANN}_k(x_i)$ and related distances $T_i$ sorted by distance in ascending order, 
		pruning rates $A$ sorted in ascending order, 
		maximum degree $m_c$, 
		reverse insertion position $r$
	}
	\KwOut{ neighbors list $G_i$ and labels list $L_i$ of point $x_i$ }

	{initialize neighbors list $G_{i,r:} \leftarrow \mathrm{ANN}_k(x_i)_{r:}$} \;
	
	{initialize neighbors labels list $L_{i,r:}$ with $0$} \;
	
	{$count \leftarrow r$} \;
	
	\ForEach{$\alpha_c \in A$ \textbf{\textup{and}} $count < m_c$} {
		\ForEach{$j \in G_{i,r: }$ \textbf{\textup{and}} $count < m_c$} {
			\If{$L_{i,j} \neq 0$} {
				continue \;
			}
			
			$is\_pruned\leftarrow$ False \;
			
			\ForEach{$x_k \in G_{i,0: j}$}  {
				\If{$ 0 < L_{i,k} \le \alpha_c$ \textbf{\textup{and}} $\alpha_c \cdot \tau(x_j, x_k) \le {T_{i,j} } $} {
					$is\_pruned \leftarrow$ True \;
					{break} \;
				}
			} 
			
			\If{\textbf{\textup{not}} is\_pruned} {
				$L_{i,j} \leftarrow \alpha_c$ \;
				
				{$count \leftarrow count +1$} \;
			}
		}
	}
	
	shrink $|G_i| \le m_c$ and $|L_i| \le m_c$ by removing $x_j \ s.t.\ L_{i,j} = 0$ \;

	\Return{$G_i, L_i$} \;
	\caption{Prune-based Labeling}
	\label{algo:pruning}
\end{algorithm}

\noindent{\textbf{Prune-based Labeling Algorithm.}} The pruning strategy of \vsag is shown in Algorithm \ref{algo:pruning}. First, $\mathrm{ANN}_k(x_i)$ and their distances $T_i$ are sorted in ascending order of distance, and the pruning rates $A$ are also sorted in ascending order. The reverse insertion position $r$ indicates whether this pruning process occurs during the insertion of a reverse edge. We initialize the out-edges of $x_i$ by $ANN_k(x_i)_{r:}$ (Line 1). Each edge is then assigned a label of $0$ (Line 2). Note that when $r > 0$, it indicates that the current pruning occurs during the reverse edge addition phase, and only the labels of edges within the interval $[r:|G_i|]$ need to be updated. Otherwise, all labels should be updated. We use $count$ to record the number of neighbors that have non-zero labels (Line 3). When $\text{count} = r$, it means we have already collected all the neighbors we need. At this point, the algorithm should terminate (Lines 4-5). 

Next, each $\alpha_c$ is examined in ascending order (Line 4). For each unlabeled neighbor $x_j$ (Lines 5-7), neighbor $x_k$ with smaller distance is used to make pruning decision (Lines 8-9). The pruning decision requires satisfying two conditions (Lines 10-12): (a) The neighbor $x_k$ exists in the graph constructed with the $\alpha_c$ (i.e., $0 < L_{i, k} \leq \alpha_c$). (b) The pruning condition is satisfied (i.e., $\alpha_c \cdot \tau(x_j, x_k) \le \tau(x_i,x_j)$). Here, we accelerate the computation of $\tau(x_i, x_j)$ by using the cached result $T_{i, j}$. If no neighbor can prune $x_j$ with $\alpha_c$, it is assigned the label $L_{i, j} \leftarrow \alpha_c$ (Lines 13-15). Finally, the algorithm returns the neighbor set $G_i$ and labels set $L_i$ of $x_i$ (Lines 16-17).

Building upon the parameter analysis of $m_c$, we formally establish the subgraph inclusion property for graphs constructed with varying pruning rates $\alpha_c$ in Algorithm~\ref{algo:pruning} through Theorem \ref{theo:alpha}.

\begin{theorem} (Subgraph Inclusion Property with Varying Pruning Rate $\alpha_c$)
	Fix all ILPs except $\alpha_c$, and let $G^a$ and $G^b$ be indexes constructed by Algorithm 3 in Appendix using pruning rates $\alpha_c = a$ and $\alpha_c = b$ respectively, where $a < b$. Suppose that for every data point $x_i$, the finite-sized approximate nearest neighbor (ANN) sets $\mathrm{ANN}_k(x_i)$ retrieved during construction remain identical under both $\alpha_c$ values. Then $G^a$ forms a subgraph of $G^b$, i.e., all edges in $G^a$ satisfy $(x_i, x_j) \in E(G^b)$. 
	\label{theo:alpha}
\end{theorem}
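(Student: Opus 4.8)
The plan is to mimic the structure of the proof of Theorem~\ref{theo:max_degree} (maximum-degree subgraph hierarchy), but now tracking the label assignment inside Algorithm~\ref{algo:pruning} as $\alpha_c$ varies. The essential observation is that Algorithm~\ref{algo:pruning} processes pruning rates $A$ in ascending order and, crucially, an edge once labeled is never relabeled (the \texttt{continue} at Line 6--7 when $L_{i,j}\neq 0$). So for a fixed point $x_i$, the final label $L_{i,j}$ equals the smallest $\alpha_c\in A$ at which $x_j$ survives pruning against the already-retained neighbors. The edge $(x_i,x_j)$ belongs to $E(G^a)$ precisely when $0 < L_{i,j} \le a$. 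Hence, to prove $G^a\subset G^b$ for $a<b$, it suffices to show that for every $x_i$ and every candidate neighbor $x_j$, the label $L_{i,j}$ is the \emph{same} value whether we run the algorithm with the threshold cutoff at $a$ or at $b$ — or, more carefully, that running with cutoff $b$ produces a label set that, when intersected with $\{\alpha_c \le a\}$, recovers exactly the label set produced by running with cutoff $a$.

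The key steps, in order, would be: (i) Fix $x_i$. Since the construction-time $\mathrm{ANN}_k(x_i)$ sets are assumed identical under both $\alpha_c$ values (this is the hypothesis that lets us decouple the graph-dependence), the initial neighbor list $G_{i,r:}$ and distances $T_{i,:}$ fed into Algorithm~\ref{algo:pruning} are identical in both runs. (ii) Argue by induction on the ascending sequence of pruning rates actually examined in the outer loop (Line 4). For each $\alpha_c$, the inner loop (Lines 5--15) inspects unlabeled neighbors $x_j$ in order and checks the pruning test against earlier neighbors $x_k$ with $0 < L_{i,k}\le\alpha_c$. Since all of these $L_{i,k}$ were assigned in strictly earlier (smaller-$\alpha_c$) iterations, the inductive hypothesis gives that $L_{i,k}$ is identical across the two runs; therefore the set of admissible pruners $\{x_k : 0 < L_{i,k}\le\alpha_c\}$ is identical, the distance test $\alpha_c\cdot\tau(x_j,x_k)\le T_{i,j}$ is the same deterministic computation, and so $x_j$ receives the same label (or stays unlabeled) in both runs — as long as $\alpha_c$ is examined in both. (iii) The only asymmetry is the termination condition $count < m_c$ and, implicitly, the fact that run-$a$ stops considering pruning rates once $A$ is exhausted past $a$ while run-$b$ may continue. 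Handle this by noting: any edge with label $\le a$ is assigned during an iteration $\alpha_c \le a$, which both runs execute identically; so run-$a$ and run-$b$ agree on all such edges, and run-$b$ only possibly adds extra edges with labels in $(a,b]$. Finally, translate "same labels $\le a$" into the edge-set inclusion $E(G^a)\subseteq E(G^b)$ via the shrink step (Line 16), noting that removing zero-labeled edges is monotone.

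The main obstacle I anticipate is the $count < m_c$ cap: if in run-$b$ the neighbor slots fill up with edges that happen to carry labels in $(a,b]$, could that starve an edge that in run-$a$ would have been retained with a label $\le a$? This needs a careful argument that the \emph{order} of processing ($\alpha_c$ ascending, then $j$ ascending within each $\alpha_c$) guarantees all label-$\le a$ edges are considered-and-decided before any label-$(a,b]$ edge is, so the cap, if it bites, bites symmetrically on the low-label edges first; equivalently one shows the prefix of retained edges with labels $\le a$ is a deterministic function of the inputs that does not depend on whether higher $\alpha_c$ values are subsequently processed. Once that monotonicity-of-the-greedy-prefix claim is nailed down — it is really the same phenomenon underlying Theorem~\ref{theo:max_degree}, where enlarging $m_c$ only appends edges — the inclusion follows. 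A secondary but routine point is handling the reverse-edge case $r > 0$: there the first $r$ neighbors are pre-fixed and unlabeled-then-relabeled logic is restricted to $[r:|G_i|]$, but the same induction applies verbatim since $r$ is an input shared by both runs.
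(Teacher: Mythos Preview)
Your approach differs structurally from the paper's. The paper proves $G^a_i \subseteq G^b_i$ by induction on the \emph{neighbor index} $j$ (the inner loop of Algorithm~\ref{algo:pruning}): for each candidate $x_{n+1}$ in distance order it performs a four-way case split on whether $x_{n+1}$ enters $G^a_i$ and/or $G^b_i$, dispatching the crucial ``only $G^a_i$ changes'' case by appealing to the never-relabel property of Lines~6--7. You instead induct on the \emph{pruning rate} $\alpha_c$ (the outer loop), arguing directly that the two runs execute identically through all iterations with $\alpha_c \le a$, so that the set of labels in $(0,a]$ coincides and the inclusion is immediate. Your decomposition is arguably cleaner given the nested-loop structure of Algorithm~\ref{algo:pruning}, and it makes the handling of the $m_c$ cap more transparent as a prefix-determinism statement; the paper's neighbor-order induction, by contrast, keeps closer contact with the geometric pruning inequality $\alpha_c\cdot\tau(x_j,x_k)\le T_{i,j}$ and makes the monotonicity in $\alpha_c$ explicit at each step. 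One small imprecision to fix in your step~(ii): pruner labels $L_{i,k}$ with $0<L_{i,k}\le\alpha_c$ need not come from strictly earlier outer iterations---within the same $\alpha_c$, an earlier inner-loop index $k<j$ may have just received $L_{i,k}=\alpha_c$ and then serve as a pruner for $x_j$ at Line~10---so your induction should really be lexicographic on $(\alpha_c,j)$ rather than on $\alpha_c$ alone.
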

\begin{proof}
	Please refer to Appendix C.2.
\end{proof}

This theorem reveals a monotonic relationship between pruning rates and graph connectivity. When reducing $\alpha_c$, any edge pruned under this stricter parameter setting would necessarily be eliminated under larger $\alpha_c$ values. Then, we can characterize each edge by its preservation threshold $\alpha_e$: the minimal pruning rate required to retain the edge during construction. Consequently, all graph indexes constructed with pruning rates $\alpha_c \geq \alpha_e$ will contain this edge. This threshold-based perspective permits efficient compression of multiple parameterized graph structures into a unified index, where edges are annotated with their respective $\alpha_e$ values.

\begin{figure}[t!]\centering
	{\includegraphics[width=.8\linewidth]{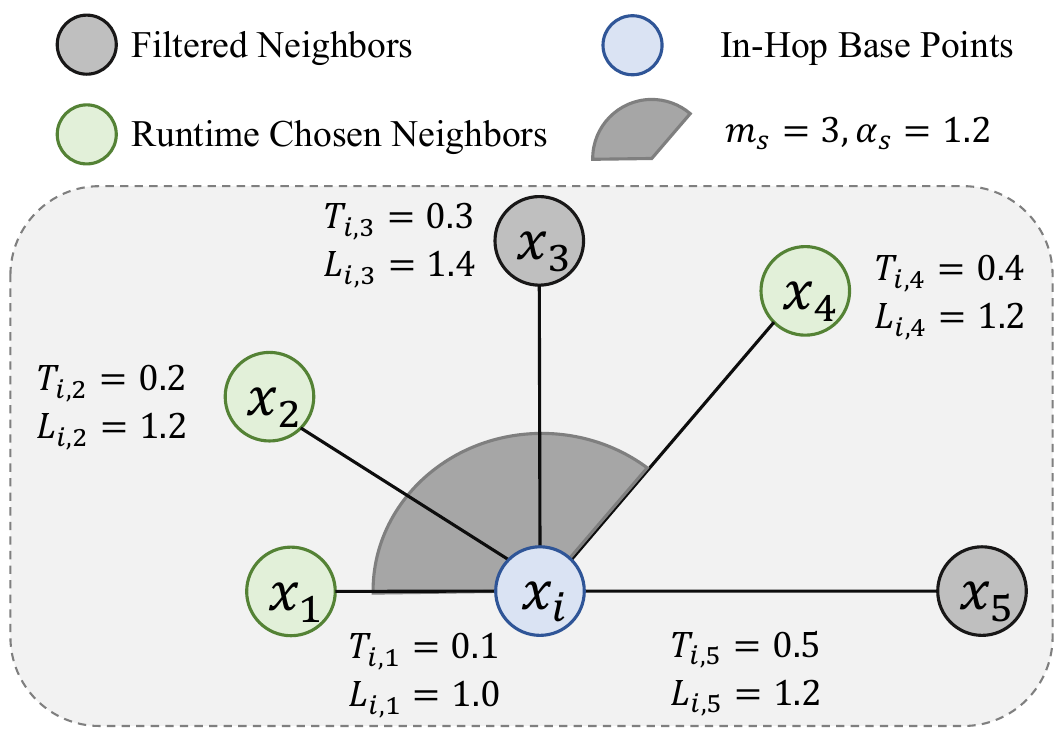}}\vspace{-2ex}
	\caption{\small Runtime Adjust ILPs ($m_c$ and $\alpha_c$) by Tuning $m_s$ and $\alpha_s$}\vspace{-4ex}
	\label{fig:example_auto_ILP}
\end{figure}

\begin{example}
	As shown in Figure~\ref{fig:example_auto_ILP}, we illustrate the state of the labeled graph generated by Algorithm~\ref{algo:pruning} and the runtime edge selection process. Suppose that during the greedy search in the graph using Algorithm~\ref{algo:greedy}, we need to explore the in-hop base point $x_i$ (Line 5 of Algorithm~\ref{algo:greedy}). The node $x_i$ has 5 neighbors sorted by distance $T_{i,j}$ in ascending order (i.e., $x_1, \dots, x_5$). The distances are $T_{i,1}, \dots, T_{i,5}$, and the corresponding labels are $L_{i,1}, \dots, L_{i,5}$.
	
	Given a relaxed ILP with $m_s = 3$ and $\alpha_s = 1.2$, we visit the neighbors of $x_i$ in ascending order of distance. We then filter out neighbors that do not satisfy the pruning condition (Line 8 of Algorithm~\ref{algo:greedy}):
	\begin{itemize}
		\item $x_2$ is filtered out because $L_{i,2} = 1.4 > \alpha_s$.
		\item $x_5$ is filtered out because we have found $m_s = 3$ valid neighbors.
	\end{itemize}
	
	Thus, in this search hop, we will visit $x_1$, $x_3$, and $x_4$. As proven in Theorem~\ref{theo:alpha} and Theorem~\ref{theo:max_degree}, the search process is equivalent to searching in a graph constructed with $m_c = 3$ and $\alpha_c = 1.2$. In other words, we can dynamically adjust the ILPs  (i.e., $m_c, \alpha_c$) by tuning the relaxed QLPs  (i.e., $m_s, \alpha_s$). This approach saves significant costs associated with rebuilding the graph.
	
	Please refer to Appendix D for  details of tuning ILPs after \vsag is constructed with labels.
\end{example}

\section{Distance Computation Acceleration}
\label{sec:tech3_distance_opt}

Recent studies~\cite{Fast-Index-arxiv-2024,ADSampling:journals/sigmod/GaoL23,yang2024ddc} illustrate that the exact distance computation takes the majority of the time cost of graph-based ANNS. Approximate distance techniques, such as scalar quantization, can accelerate this process at the cost of reduced search accuracy. \vsag adopt a two-stage approach that first performs an approximate distance search followed by exact distance re-ranking. \S\ref{sec:distance_computation_analysis} analyzes the distance computation scheme, with subsequent sections detailing optimization strategies for \vsag component.

\subsection{Distance Computation Cost Analysis}
\label{sec:distance_computation_analysis}
\vsag employs low-precision vectors during graph traversal operations while reserving precise distance computations exclusively for final result reranking. The dual-precision architecture effectively minimizes distance computation operations (DCO)~\cite{yang2024ddc} overhead while preserving search accuracy through precision-aware hierarchical processing.

If we only consider the cost incurred by distance computation, the total distance computation cost can be expressed as follows:

\vspace{-2ex}
\begin{equation}
	\label{eq:distance_cost_model}
	cost = cost_{lp} + cost_{hp} = n_{lp} \cdot t_{lp} + n_{hp} \cdot t_{hp}\notag
\end{equation}
\vspace{-2ex}

Here, distance computation cost $cost$ consists of two components: the computation cost for low-precision vectors $cost_{lp}$ and the computation cost for high-precision vectors $cost_{hp}$. Each component is determined by the number of distance computations ($n_{lp}$ or $n_{hp}$) and the cost of a single distance computation ($t_{lp}$ or $t_{hp}$).

The optimization of $n_{lp}$ is closely related to the specific algorithm workflow, while $t_{hp}$ is primarily determined by the computational cost of \texttt{FLOAT32} vector operations - both of which remain relatively constant. Consequently, the \vsag framework focuses primarily on optimizing the parameters $t_{lp}$ and $n_{hp}$.

\vsag optimizes the overall cost in three ways:  
\begin{itemize}[leftmargin=*, itemsep=2pt]
	
	\item The combination of quantization techniques, hardware instruction set SIMD, and memory-efficient storage (\S \ref{sec:minimize_computation_overhead}) achieves exponential reduction in low-precision distance computation ($t_{lp}$).  
	\item Enhanced quantization precision by parameter optimization (Appendix 
	G) mitigates candidate inflation from precision loss, achieving sublinear growth in required low-precision computations ($n_{hp}$).  
	\item Selective re-ranking with dynamic thresholding (\S 
	\ref{sec:selective_rerank}) establishes an accuracy-efficiency equilibrium, restricting high-precision validation ($n_{hp}$) to a logarithmically scaled candidate subset.
	
\end{itemize}

\subsection{Minimizing Low-Precision Computation Overhead}
\label{sec:minimize_computation_overhead}

\textbf{SIMD and Quantization Methods.} Modern CPUs employ \texttt{SIMD} instruction sets (\texttt{SSE/AVX/AVX512}) to accelerate distance computations through vectorized operations. These instructions process 128-bit, 256-bit, or 512-bit data chunks in parallel, with vector compression techniques enabling simultaneous processing of multiple vectors. For example, \texttt{AVX512} can compute one distance for 16-dimensional \texttt{FLOAT32} vectors per instruction, but when compressing vectors to 128 bits, it achieves 4x acceleration by processing four vector pairs concurrently. Product Quantization (PQ)~\cite{jegou2010pq_ivfadc} enables high compression ratios for batch processing through \texttt{SIMD}-loaded lookup tables. While PQ-Fast Scan~\cite{PQ-fast-scan-VLDB-2015} excels in partition-based searches through block-wise computation, its effectiveness diminishes in graph-based searches due to random vector storage patterns and inability to filter visited nodes, resulting in wasted \texttt{SIMD} bandwidth. In contrast, Scalar Quantization (SQ)~\cite{2012sq} proves more suitable for graph algorithms by directly compressing vector dimensions (e.g., \texttt{FLOAT32$\to$ INT8/INT4}) without requiring lookup tables. As demonstrated in \vsag, SQ achieves the optimal balance between compression ratio and precision preservation while fully utilizing \texttt{SIMD} acceleration capabilities, making it particularly effective for memory-bound graph traversals.

\noindent\textbf{Distance Decomposition.}  
\vsag optimizes Euclidean distance computations by decoupling static and dynamic components. The system precomputes and caches invariant vector norms during database indexing, then combines them with real-time dot product computations during queries. This decomposition reduces operational complexity while preserving mathematical equivalence, as shown by the reformulated Euclidean distance:\vspace{-3ex}

$$
\|x_b - x_q\|^2 = \|x_b\|^2 + \|x_q\|^2 - 2x_b \cdot x_q,
$$\vspace{-3ex}

The computational optimization strategy can be summarized as follows: Only the Inner Product term $x_b \cdot x_q$ requires real-time computation during search operations, while the squared query norm $\|x_q\|^2 $ can be pre-computed offline before initiating the search process. By storing just one additional \texttt{FLOAT32} value per database vector $x_b$ (specifically the precomputed $||x_b||^2$), we can effectively transform the computationally expensive Euclidean distance computation into an equivalent inner product operation. This space-time tradeoff reduces the subtraction CPU instruction in distance computation, which saves one CPU clock cycle.

\begin{table}[t!]
	
	\small
	\begin{center}
		{\small \vspace{-4ex}
			\caption{Dataset Statistics.}\vspace{-3ex} \label{tab:datasets}
			\begin{tabular}{l|l|l|l|l}\toprule
				{\bf Dataset\quad} & {\bf Dim } & {\bf  \#Base\quad} & {\bf  \#Query} & {\bf  Type\quad} \\  \midrule
				GIST1M & 960 & 1,000,000 & 1,000 & Image \\
				
				SIFT1M & 128 & 1,000,000 & 10,000 & Image \\
				
				TINY & 384 & 5,000,000 & 1,000 & Image \\
				
				GLOVE-100 & 100 & 1,183,514 & 10,000 & Text \\
				
				WORD2VEC & 300 & 1,000,000 & 1,000 & Text \\
				
				OPENAI & 1536 & 999,000 & 1,000 & Text \\
				
				ANT-INTERNET & 768 & 9,991,307 & 1,000 & Text \\
				
				MSMARCO & 1024 & 113,519,750 & 1,000 & Text\\
				\bottomrule
			\end{tabular}
		}\vspace{-4ex}
	\end{center}
\end{table}

\subsection{Selective Re-rank}
\label{sec:selective_rerank}
Quantization methods can significantly enhance retrieval efficiency, but quantization errors may lead to substantial recall rate degradation. While re-ranking with full-precision vectors can mitigate this performance loss. However, applying exhaustive re-ranking to all candidates is inefficient. The \vsag framework addresses this challenge through selective re-ranking, effectively compensating for approximation errors in distance computation without compromising system performance. A straightforward approach is to select only the candidates with small low-precision distances for re-ranking. The optimal number of candidates requiring re-ranking varies significantly depending on query characteristics, quantization error distribution, and search requirement $k$. To address this dynamic requirement,  \vsag implements DDC~\cite{yang2024ddc} scheme that can automatically adapt re-ranking scope based on error-distance correlation analysis.

\section{Experimental Study}
\label{sec:experimental}

\subsection{Experimental Setting}
\stitle{Datasets.} 
Table~\ref{tab:datasets} presents the datasets used in our experiments, and they are widely adopted in existing works~\cite{ann-benchmakrs} and benchmarks~\cite{ExRaBitQ-arxiv-2024}.
For each dataset, we report the vector dimensions (Dim), the number of base vectors (\#Base), the number of query vectors (\#Query), and the dataset type (Type).
All vectors are stored in \texttt{float32} format.

\begin{figure*}[t!]\centering\vspace{-4ex}
	\includegraphics[width=0.7\textwidth]{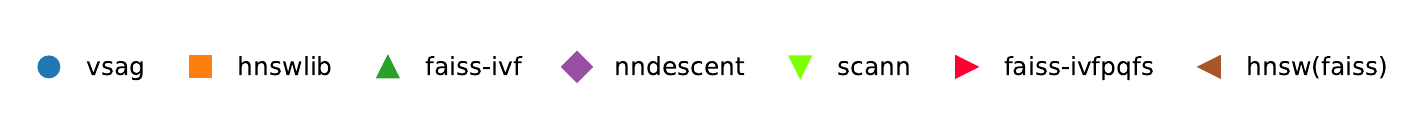}\vspace{-5ex}\\
	\subfigure[][{\scriptsize GIST1M \textit{Top-10}}]{
		\scalebox{0.13}[0.13]{\includegraphics{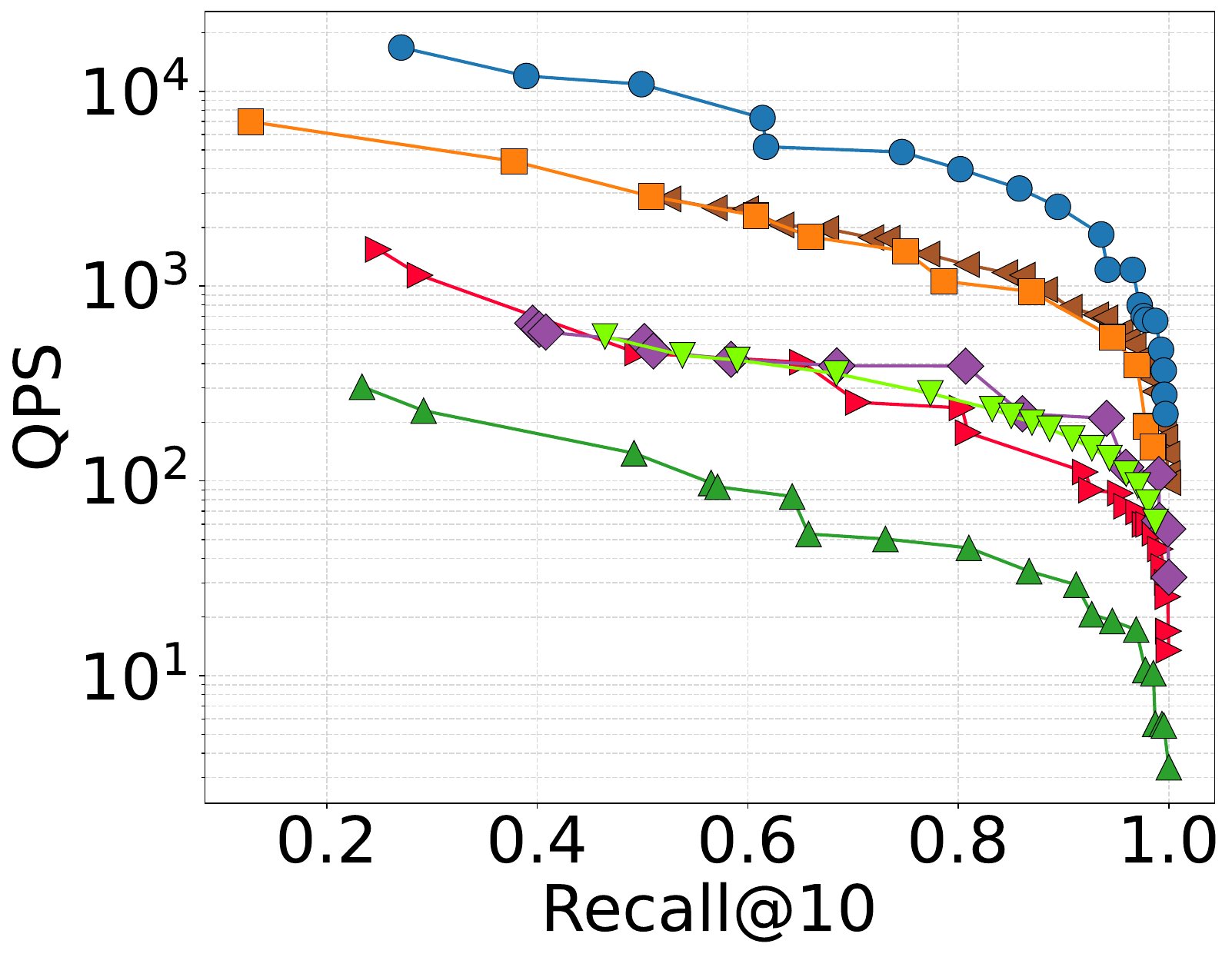}}
		\label{subfig:gist_top_10}}
	\subfigure[][{\scriptsize GIST1M \textit{Top-100}}]{
		\scalebox{0.13}[0.13]{\includegraphics{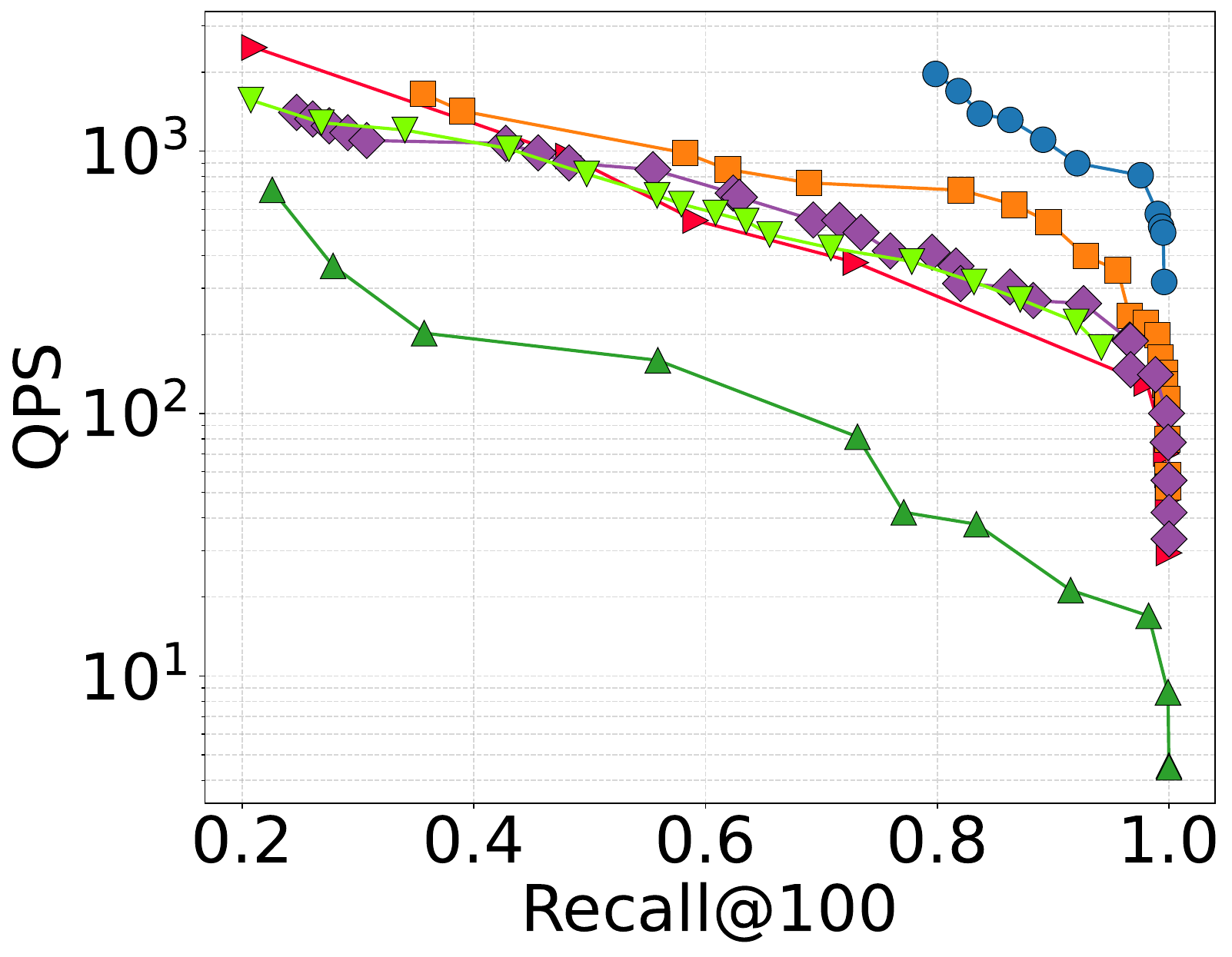}}
		\label{subfig:gist_top_100}}
	\subfigure[][{\scriptsize SIFT1M \textit{Top-10}}]{
		\scalebox{0.13}[0.13]{\includegraphics{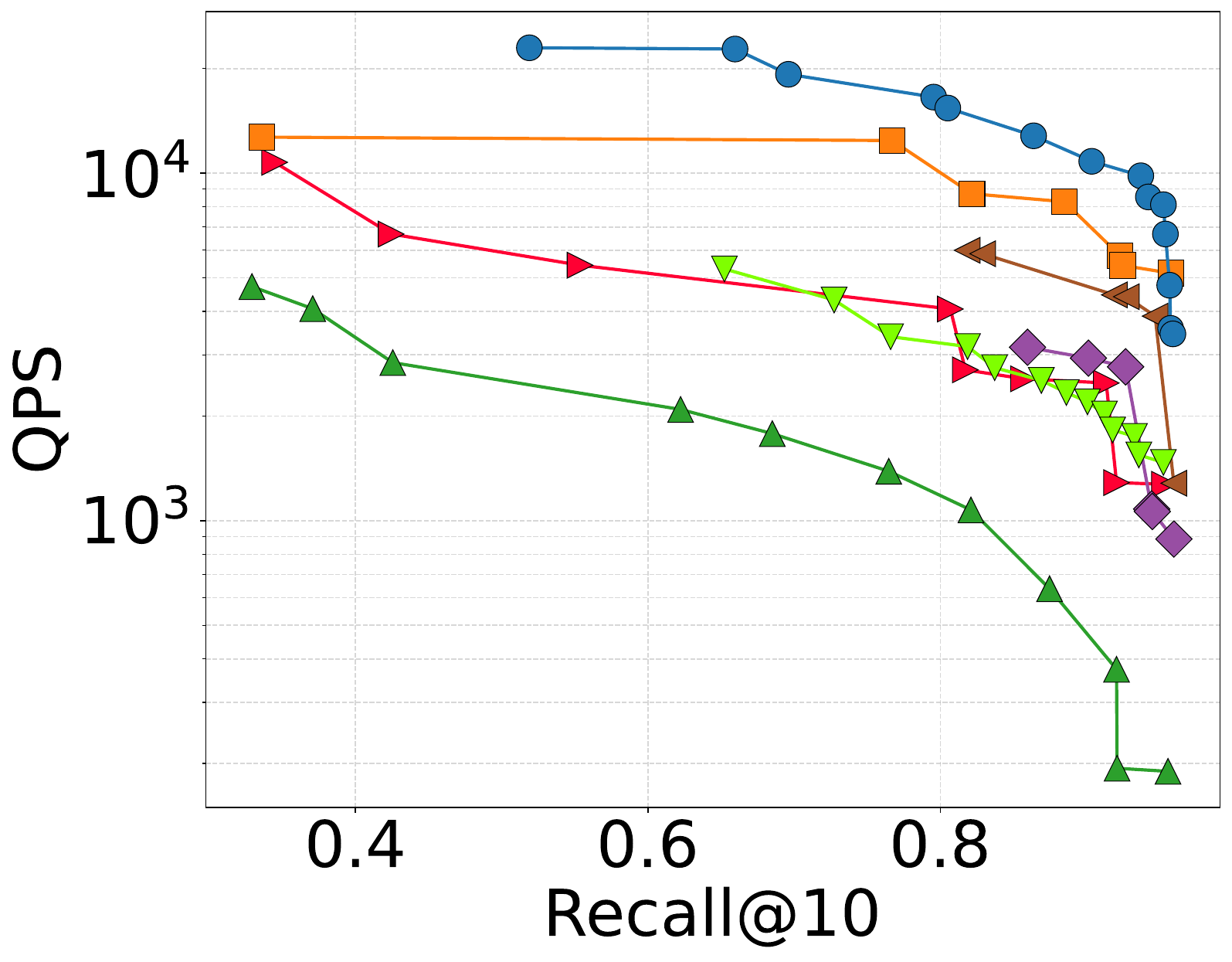}}
		\label{subfig:sift_top_10}}
	\subfigure[][{\scriptsize SIFT1M \textit{Top-100}}]{
		\scalebox{0.13}[0.13]{\includegraphics{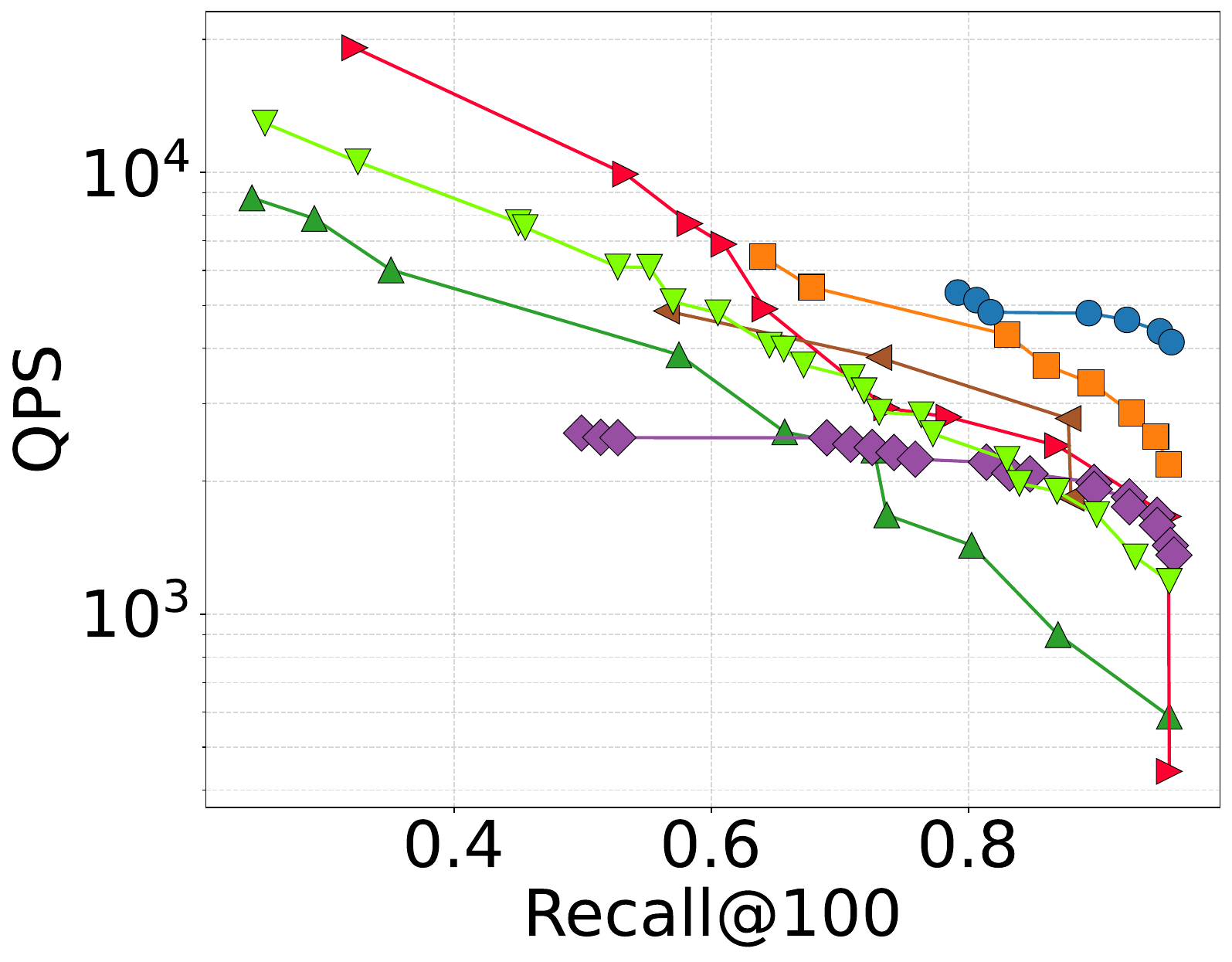}}
		\label{subfig:sift_top_100}}\\\vspace{-2ex}
	\subfigure[][{\scriptsize GLOVE-100 \textit{Top-10}}]{
		\scalebox{0.13}[0.13]{\includegraphics{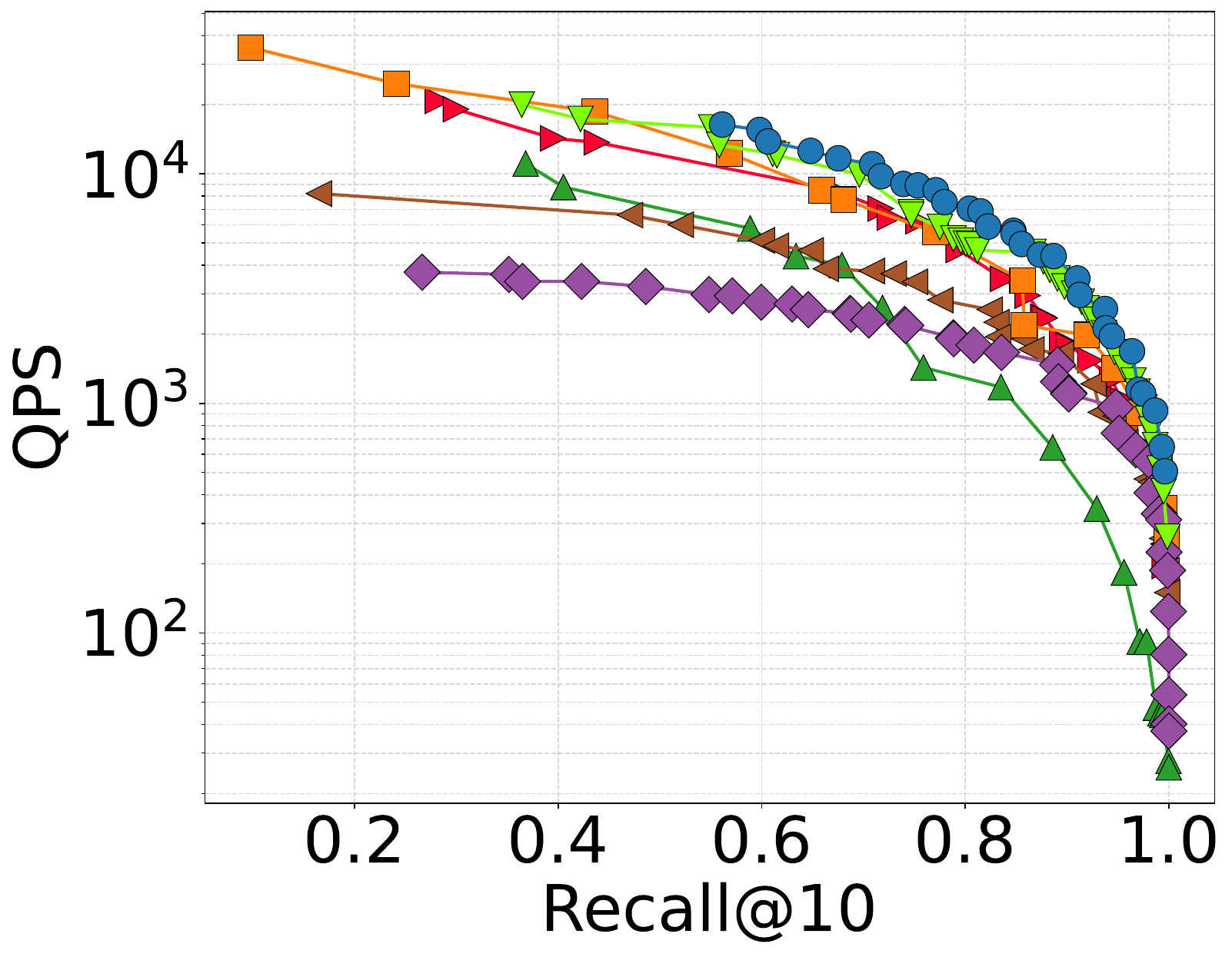}}
		\label{subfig:glove_top_10}}
	\subfigure[][{\scriptsize GLOVE-100 \textit{Top-100}}]{
		\scalebox{0.13}[0.13]{\includegraphics{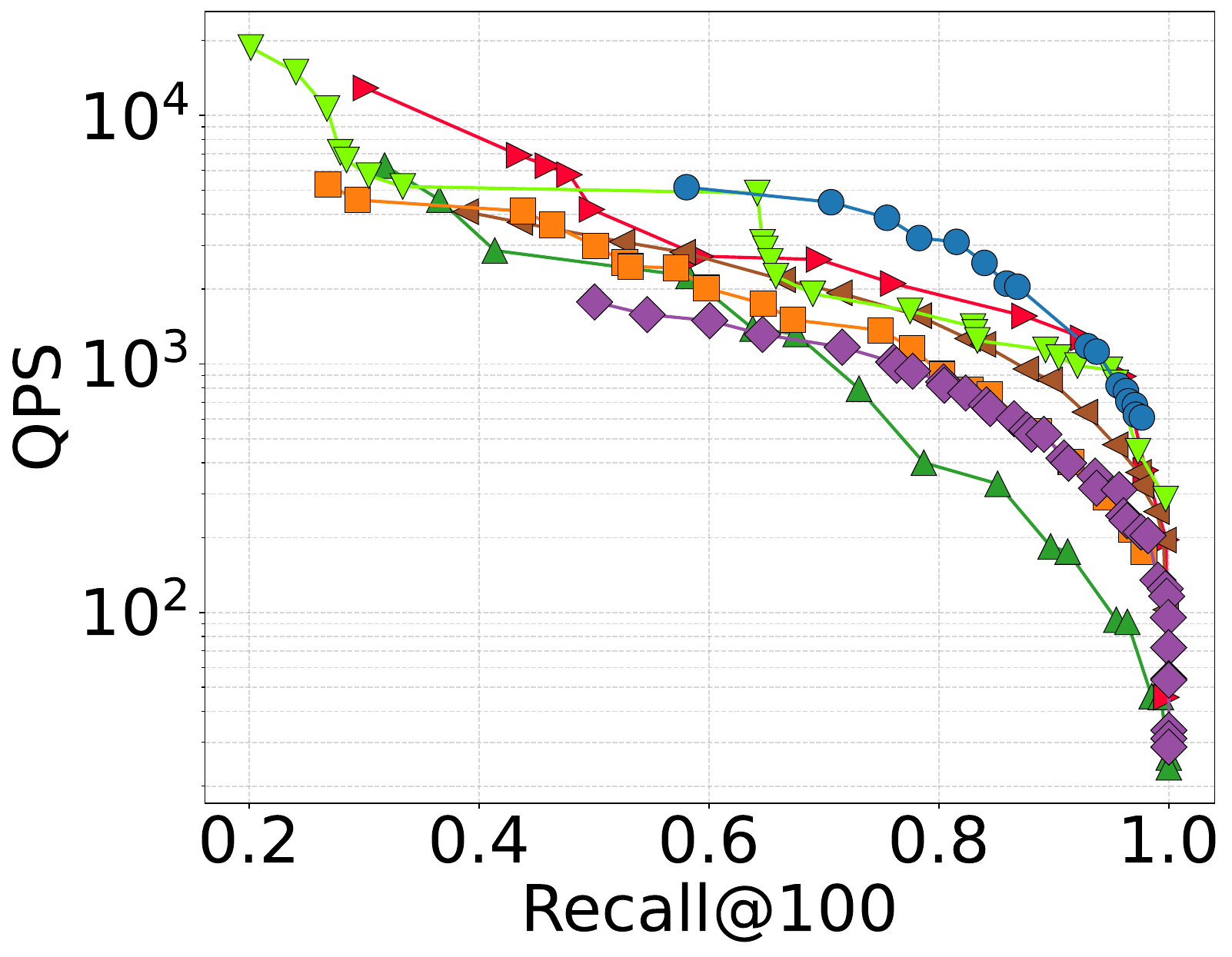}}
		\label{subfig:glove_top_10}}
	\subfigure[][{\scriptsize TINY \textit{Top-10}}]{
		\scalebox{0.13}[0.13]{\includegraphics{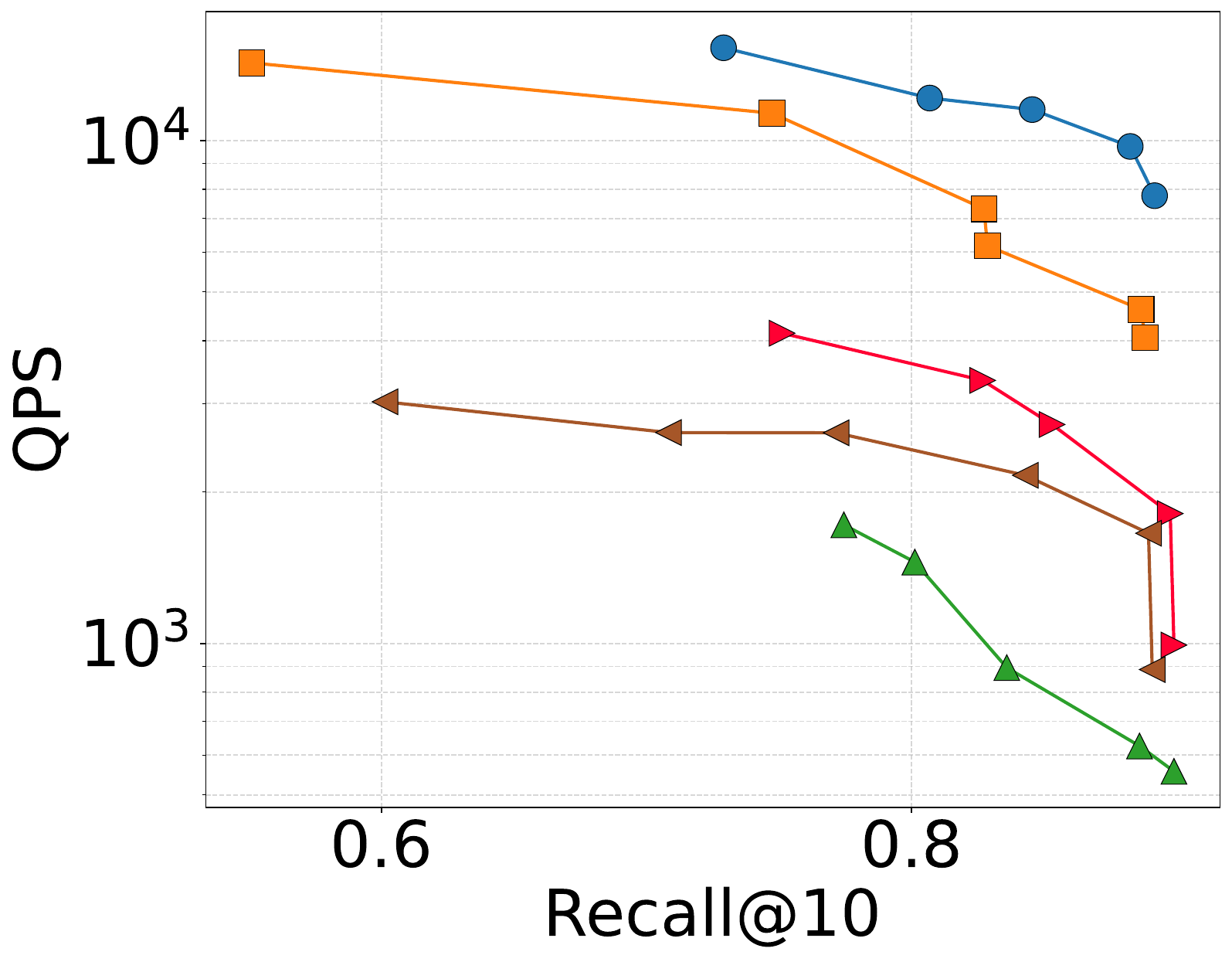}}
		\label{subfig:tiny_top_10}}
	\subfigure[][{\scriptsize TINY \textit{Top-100}}]{
		\scalebox{0.13}[0.13]{\includegraphics{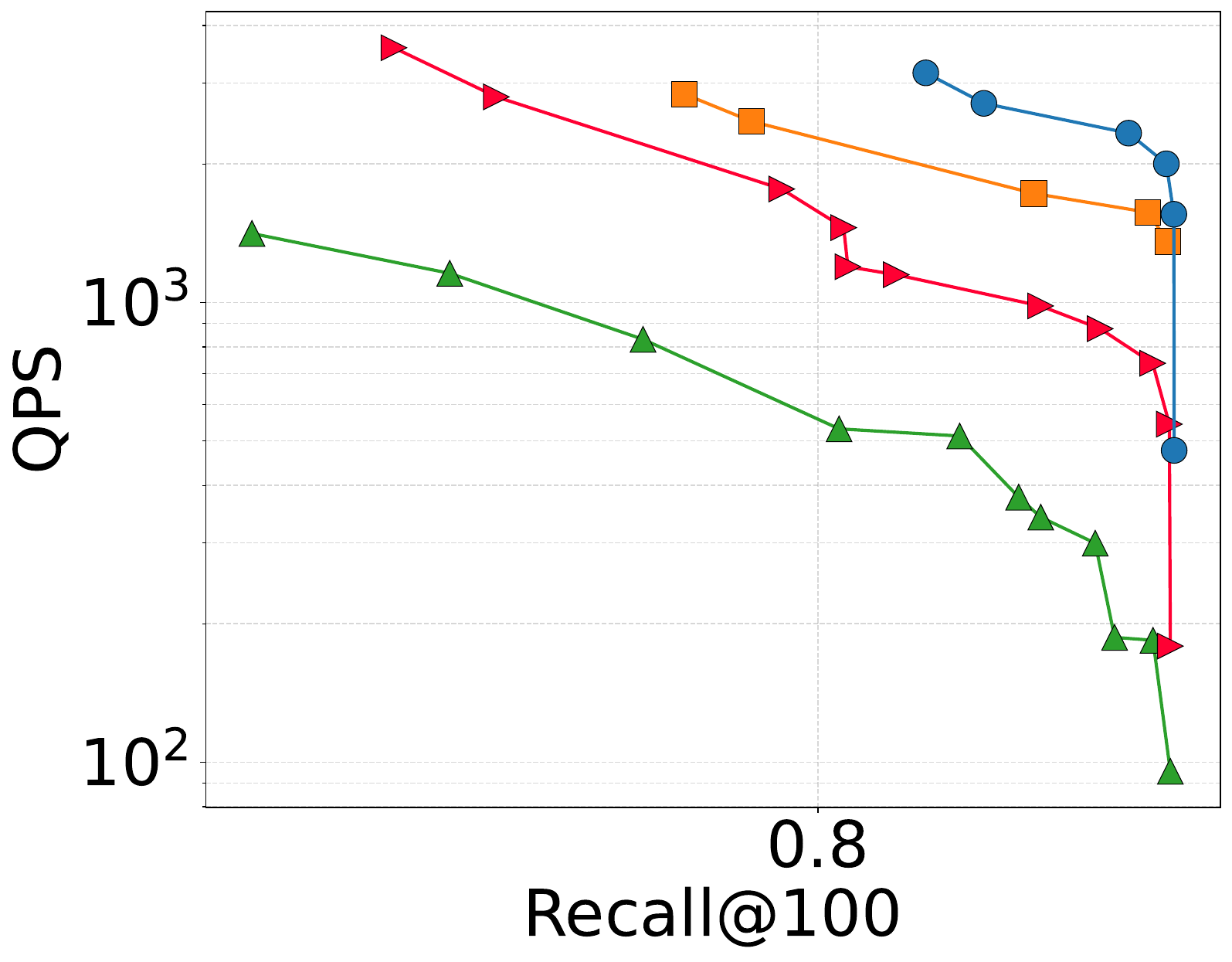}}
		\label{subfig:tiny_top_100}}\\\vspace{-2ex}
	\subfigure[][{\scriptsize WORD2VEC \textit{Top-10}}]{
		\scalebox{0.13}[0.13]{\includegraphics{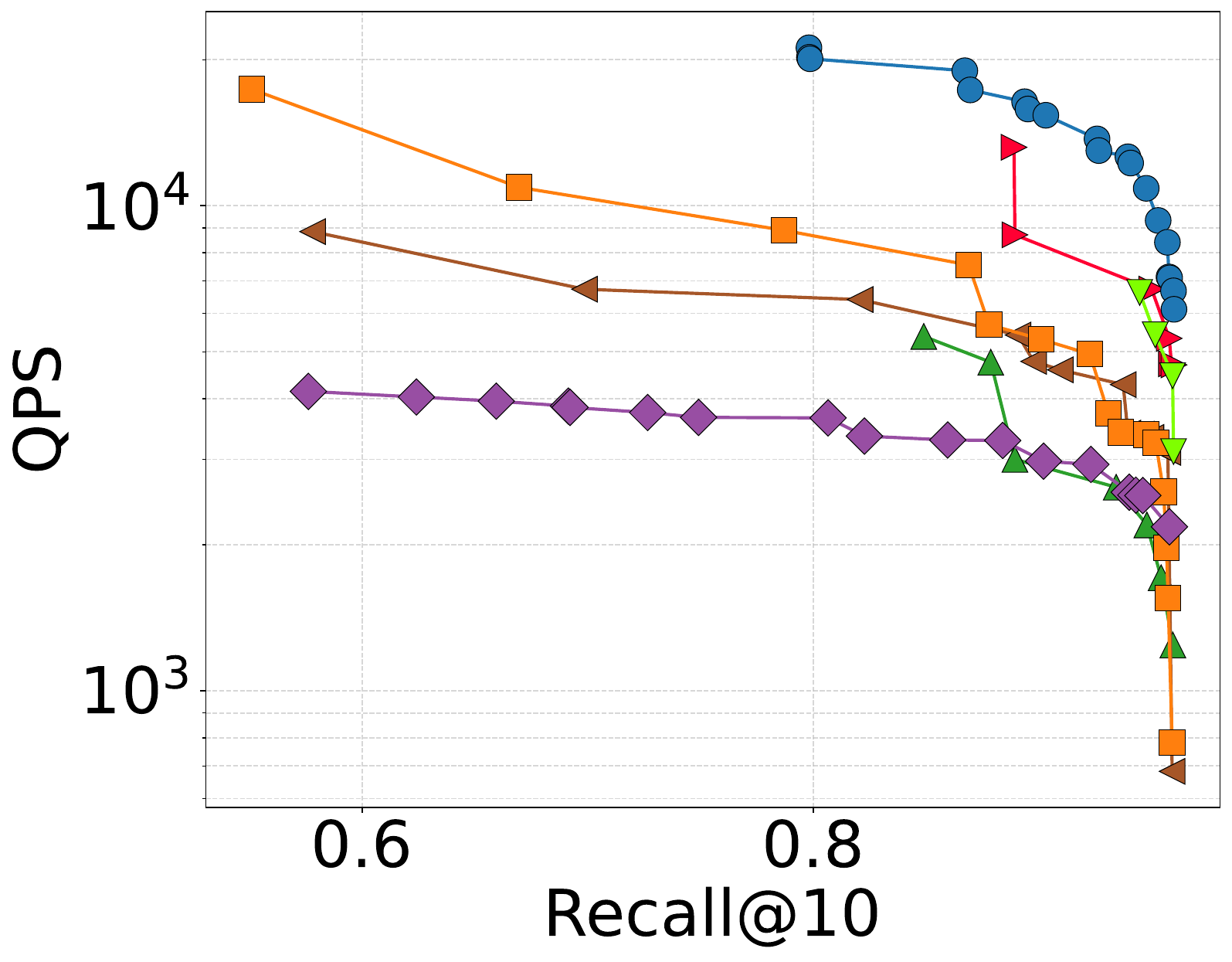}}
		\label{subfig:word2vec_top_10}}
	\subfigure[][{\scriptsize WORD2VEC \textit{Top-100}}]{
		\scalebox{0.13}[0.13]{\includegraphics{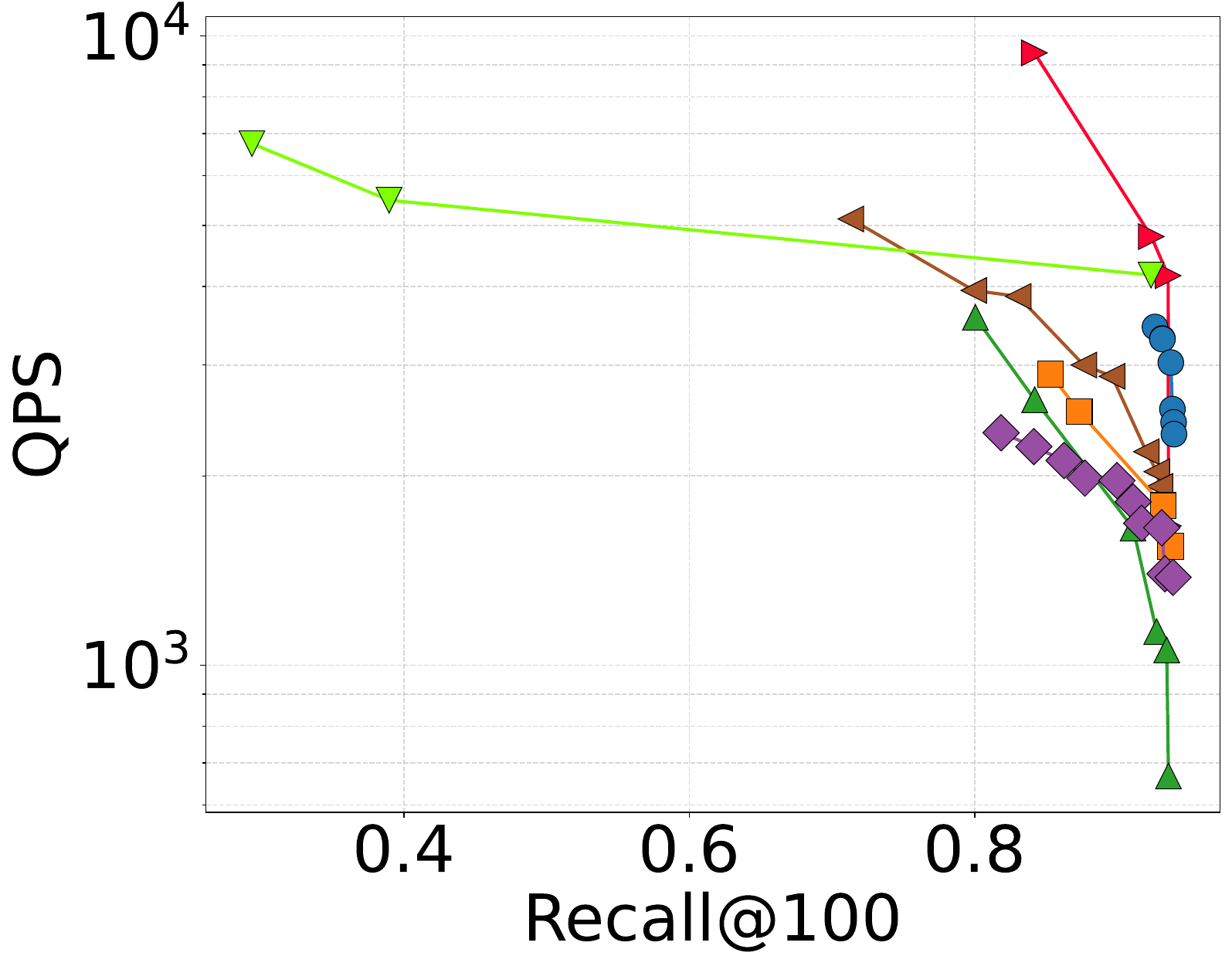}}
		\label{subfig:word2vec_top_100}}
	\subfigure[][{\scriptsize OPENAI  \textit{Top-10}}]{
		\scalebox{0.13}[0.13]{\includegraphics{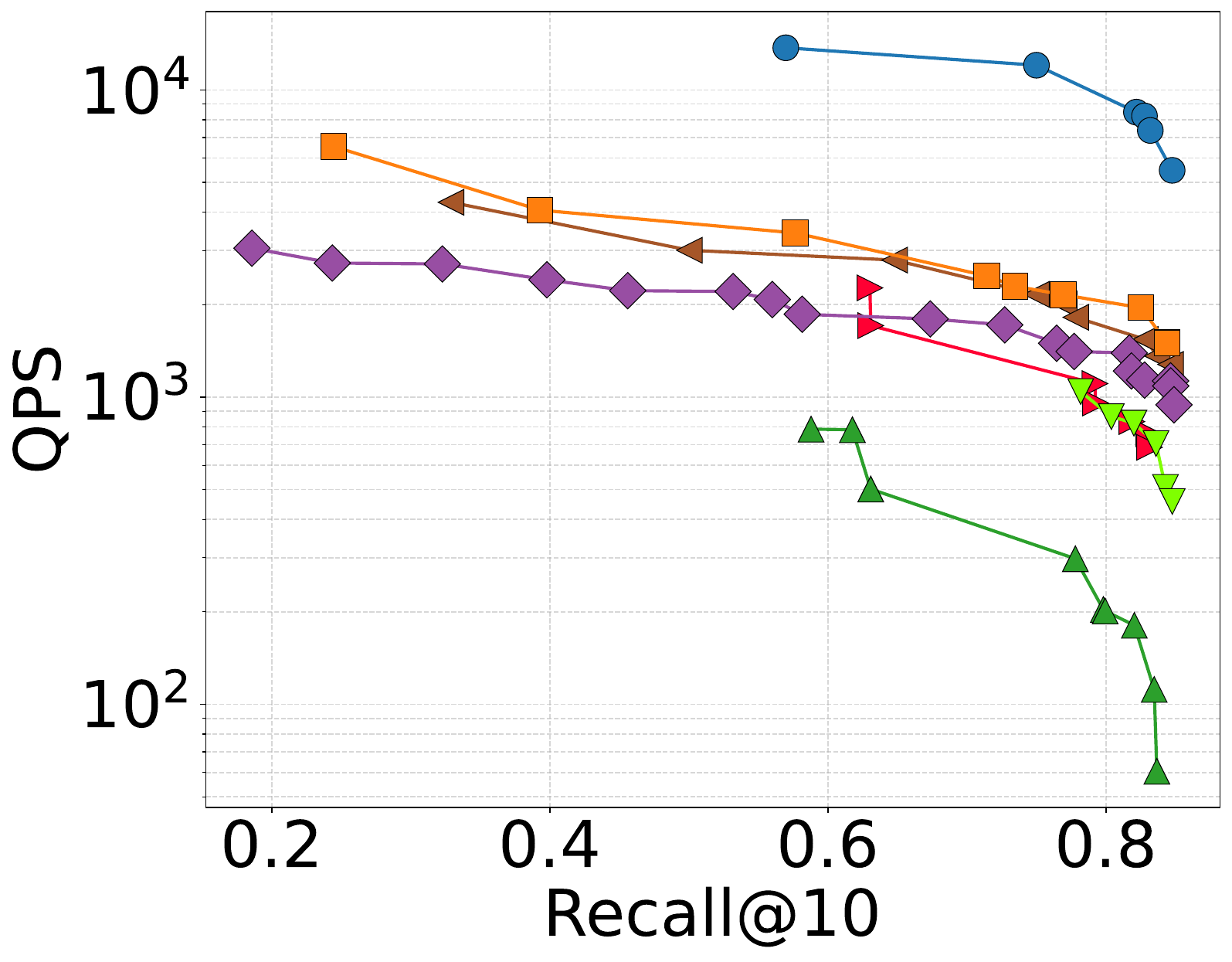}}
		\label{subfig:openai_top_10}}
	\subfigure[][{\scriptsize OPENAI  \textit{Top-100}}]{
		\scalebox{0.13}[0.13]{\includegraphics{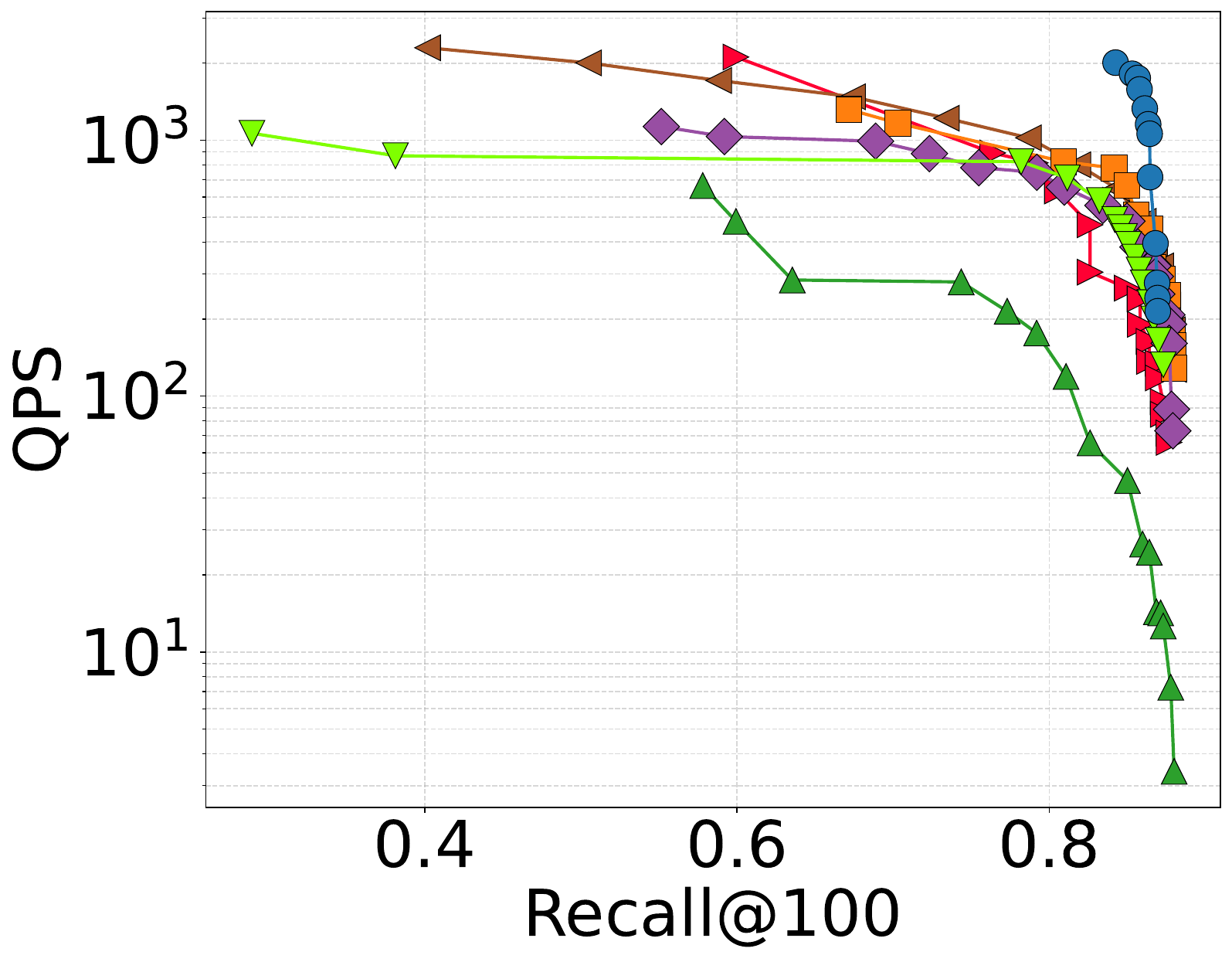}}
		\label{subfig:openai_top_100}}\vspace{-4ex}
	\caption{\small Overall Performance.}\vspace{-3ex}
	\label{fig:overall_performance}
\end{figure*}

\stitle{Algorithms.}
We compare \vsag with three graph-based methods (\textit{hnswlib,~hnsw(faiss),~ nndescent}) and three partition-based methods (\textit{faiss-ivf,~faiss-ivfpqfs,~scann}).
All methods are widely adopted in industry, and Faiss is the most popular vector search library.
\begin{itemize}[leftmargin=*, itemsep=1pt]
	\item \textit{hnswlib}~\cite{hnswlib}: the most popular graph-based index.
	
	\item \textit{hnsw(faiss)}~\cite{faiss:johnson2019billion}: the HNSW implementation in Faiss.
	
	\item \textit{nndescent}~\cite{NN-decent-WWW-2011}: a graph-based index that achieves efficient index construction by iteratively merging the neighbors of base vectors.
	
	\item \textit{faiss-ivf}~\cite{faiss:johnson2019billion}: the most popular partition-based method.
	
	\item \textit{\textit{faiss-ivfpqfs.}}~\cite{PQ-fast-scan-VLDB-2015}: an IVF implementation with PQ (Product Quantization)~\cite{PQ-PAMI-2014} and FastScan~\cite{PQ-fast-scan-VLDB-2015} optimizations.
	
	\item \textit{\textit{scann.}}~\cite{Learned-PQ-2020-ICML-guoruiqi}: The partition-based method developed by Google, which is highly optimized for Maximum Inner Product Search (MIPS) through anisotropic vector quantization.
	
\end{itemize}

\stitle{Performance Metrics.}
We evaluate algorithms with \textit{Recall Rate} and \textit{Queries Per Second (QPS)}~\cite{malkov2018hnsw, jayaram2019diskann}.
The recall rate is defined as the percentage of actual KNNs among the vectors retrieved by the algorithm, i.e., $ \text{Recall}@k = \frac{| \text{ANN}_k(x_q) \cap \text{NN}_k(x_q) |}{k} $.
If not specified, the QPS is the queries per second when $ k = 10 $.
Each algorithm is evaluated on a dedicated single core.

\stitle{Parameters.}
Table \ref{tab:parameter_settings} reports the parameter configurations of algorithms.
Unless otherwise specified, we report the best performance of an algorithm among all combinations of parameter settings.

\begin{table}[t!]
	\centering\vspace{-3ex}
	\small
	\scriptsize
	\caption{Parameter Settings of Algorithms.}\vspace{-4ex} \label{tab:parameter_settings}
	\resizebox{\linewidth}{!}{
		\begin{tabular}{l|l}
			\toprule
			\textbf{Algorithm} & \textbf{Construction Parameter Settings} \\
			\midrule
			\vsag & 
			\texttt{maximum\_degree} $\in \{8, 12, 16, 24, 32, 36, 48, \textbf{64}\} $ \\
			& \texttt{pruning\_rate} $\in \{\textbf{1.0}, 1.2, 1.4, 1.6, 1.8, 2.0\} $ \\
			\midrule
			\textit{hnswlib}, \textit{hnsw(faiss)} & 
			\texttt{maximum\_degree} $ \in \{4, 8, 12, 16, 24, 36, 48, \textbf{64}, 96\}$\\
			& \texttt{candidate\_pool\_size} $ \in \{\textbf{500}\}$\\
			\midrule
			\textit{nndescent} & 
			\texttt{pruning\_prob} $\in \{\textbf{0.0}, 1.0\} $ \\
			& \texttt{leaf\_size} $\in  \{24, \textbf{36}, 48\} $ \\
			& \texttt{n\_neighbors} $\in  \{10, 20, \textbf{40}, 60\} $ \\
			& \texttt{pruning\_degree\_multiplier} $\in  \{0.5, 0.75, 1.0, 1.5, \textbf{2.0}, 3.0\} $ \\
			\midrule
			\textit{faiss-ivf}, \textit{faiss-ivfpqfs} & 
			\texttt{n\_clusters} $\in \{32, 64, 128, 256, 512, 1024, 2048, 4096, 8192\} $ \\
			\midrule
			\textit{scann} & 
			\texttt{n\_leaves} $\in  \{100, 600, 1000, 1500, \textbf{2000}\} $ \\
			& \texttt{avq\_threshold} $\in  \{0.15, 0.2, \textbf{0.55}\} $ \\
			& \texttt{dims\_per\_block} $\in  \{1, 2, 3, \textbf{4}\} $ \\
			\bottomrule
		\end{tabular}
	}
	\vspace{-4ex}
\end{table}

\stitle{Environment.}
The experiments are conducted on a server with an Intel(R) Xeon(R) Platinum 8163 CPU @ 2.50GHz and 512GB memory, except that Section \ref{sec:exp:overall} is evaluated on an AWS \texttt{r6i.16xlarge} machine with hyperthreading disabled, and Section \ref{sec:exp:scalability} is evaluated on a server with 4 AMD EPYC 7T83 64-Core Processors and 2TB memory.
We implement \vsag in C++, and compile it with g++ 10.2.1, \texttt{-Ofast} flag, and  AVX-512 instructions enabled.
For baselines, we use the implementations from the official Docker images.

\subsection{Overall Performance}
\label{sec:exp:overall}
Figure~\ref{fig:overall_performance} evaluates the recall (Recall@10 and Recall@100) vs. QPS performance of algorithms.
We report the best performance of an algorithm under all possible parameter settings.
Across all datasets, \vsag can achieve higher QPS with the same recall rate.
In addition, \vsag can provide a higher QPS increase on high-dimensional vector datasets, such as GIST1M and OPENAI.
In particular, \vsag outperforms \textit{hnswlib} by 226\% in QPS on GIST1M when fixing $Recall@10=90\%$, it also provides $\sim$400\% higher QPS than \textit{hnswlib} on OPENAI when fixing $Recall@10=80\%$.
The reason is that \vsag adopts quantization methods, which can provide significant QPS increase without sacrificing the search accuracy, and they are especially effective for high-dimensional data~\cite{2024rabitq}.

\begin{table*}[t!]
	\centering
	\caption{Ablation Study of \vsag's Strategies.}\vspace{-2ex}
	\label{tab:cache_opt}
	\resizebox{\linewidth}{!}{
		\begin{tabular}{@{}l*{5}{cc}@{}}
			\toprule
			\multirow{2}{*}{Strategy} & 
			\multicolumn{2}{c}{Recall@10} & 
			\multicolumn{2}{c}{QPS} & 
			\multicolumn{2}{c}{L3 Cache Load} & 
			\multicolumn{2}{c}{L3 Cache Miss Rate} & 
			\multicolumn{2}{c}{L1 Cache Miss Rate} \\
			\cmidrule(lr){2-3} 
			\cmidrule(lr){4-5} 
			\cmidrule(lr){6-7} 
			\cmidrule(lr){8-9} 
			\cmidrule(lr){10-11}
			& GIST1M & SIFT1M & GIST1M & SIFT1M & GIST1M & SIFT1M & GIST1M & SIFT1M & GIST1M & SIFT1M \\
			\midrule
			Baseline & 90.7\% & 99.7\% & 510 & 1695 & 198M & 112M & \cellcolor{grayB}93.89\% & \cellcolor{grayB}77.88\% & \cellcolor{grayA}39.37\% & \cellcolor{grayA}17.55\%\\
			Above~+~1.Quantization & 89.8\% & 98.4\% & 1272 & 2881 & 125M & 79M &\cellcolor{grayB} 67.42\% & \cellcolor{grayB} 52.09\% & \cellcolor{grayA}19.44\% & \cellcolor{grayA}11.56\%\\
			Above~+~2. Software-based Prefetch & 89.8\% & 98.4\% & 1490 & 3332 & 120M & 53M & \cellcolor{grayB}71.71\% & \cellcolor{grayB}53.86\% & \cellcolor{grayA}16.98\% & \cellcolor{grayA}9.58\% \\
			Above~+~3. Stride Prefetch & 89.8\% & 98.4\% & 1517 & 3565 & 118M & 50M & \cellcolor{grayB}64.57\% & \cellcolor{grayB}19.26\% & \cellcolor{grayA}17.18\% & \cellcolor{grayA}9.66\%\\
			Above~+~4. ELPs Auto Tuner & 89.8\% & 98.4\% & 2052 & 4946 & 43M & 49M & \cellcolor{grayB}45.88\% & \cellcolor{grayB}32.65\% & \cellcolor{grayA}16.44\% & \cellcolor{grayA}10.11\%\\
			Above~+~5. Deterministic Access & 89.8\% & 98.4\% & 2167 & 5027 & \cellcolor{grayC}65M & \cellcolor{grayC}72M & \cellcolor{grayB}39.23\% & \cellcolor{grayB}20.98\% & \cellcolor{grayA}15.43\% & \cellcolor{grayA}9.91\% \\
			Above~+~6. PRS ($\delta=0.5$) & 89.8\% & 98.4\% & 2255 & 4668 & \cellcolor{grayC}55M & \cellcolor{grayC}63M & 55.75\% & 50.74\% & \cellcolor{grayA}15.20\% & \cellcolor{grayA}10.17\%\\
			Above~+~7. PRS ($\delta=1$) & 89.8\% & 98.4\% & 2377 & 4640 & \cellcolor{grayC}46M & \cellcolor{grayC}55M & 71.62\% & 74.73\% & \cellcolor{grayA}14.69\% & \cellcolor{grayA}9.26\%\\
			\bottomrule
		\end{tabular}
	}
\end{table*}

\vspace{-1ex}
\subsection{Scalability Performance}
\label{sec:exp:scalability}
\begin{figure}[t!]\centering
	\vspace{-1ex}
	\includegraphics[width=0.4\textwidth]{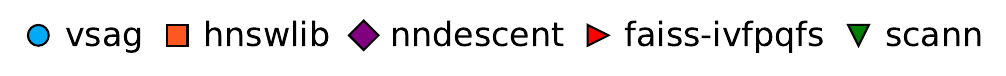}
	\hfill\\
	\vspace{-2ex}
	\subfigure[][{\scriptsize ANT-INTERNET ($\sim$10M)}]{
		\scalebox{0.3}[0.3]{\includegraphics{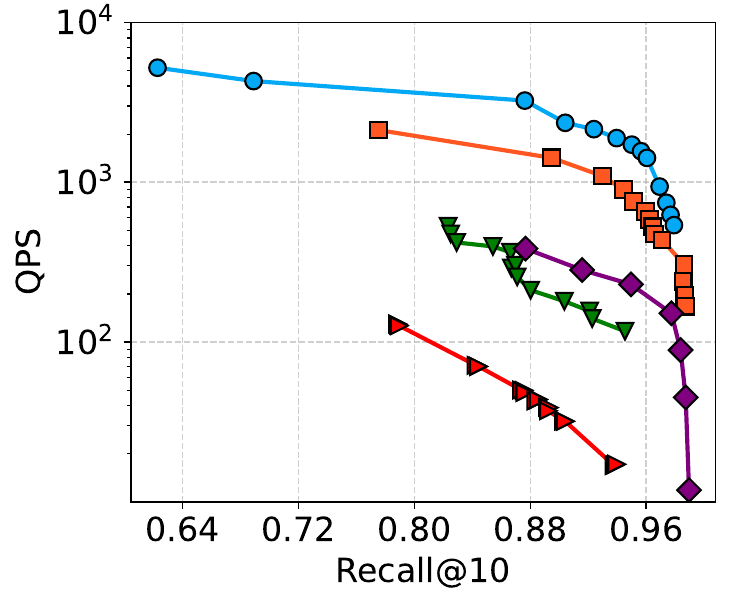}}
		\label{subfig:scale_antinternet}}
	\subfigure[][{\scriptsize MSMARCO ($\sim$100M)}]{
		\scalebox{0.3}[0.3]{\includegraphics{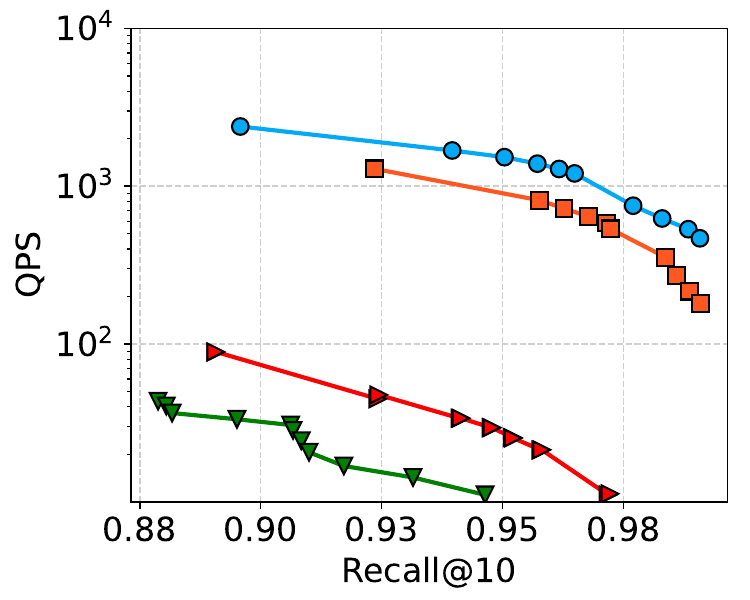}}
		\label{subfig:scale_msarco}}\vspace{-4ex}
	\caption{\small Performance Comparison on Large-Scale Datasets}
	\label{fig:scale_performance}
	\vspace{-2ex}
\end{figure}

As shown in Figure~\ref{fig:scale_performance}, we present the performance comparison of \vsag with four baselines is on two large-scale datasets: ANT-INTERNET($\sim$10M) and MSMARCO($\sim$100M). Here, ANT-INTERNET is an internal text dataset from Ant Group. We made our best effort to construct the well-tuned indexes for the MSMARCO dataset using 256 threads. The performance of hnsw (faiss) is similar to that of hnswlib, while the performance of faiss-ivf is significantly worse than that of faiss-ivfpqfs. Additionally, nndescent failed to complete index construction within two days on the MSMARCO dataset. As a result, we do not include these methods in the results.

For faiss-ivfpqfs, we followed the META-FAISS guidebook to tune its parameters. To strike a balance between construction time and performance, we use 1 million centroids and employed PQ with $256\times4$ bits. For the other algorithms, we use the parameters highlighted in bold in Table~\ref{tab:parameter_settings}.

For \vsag, the construction process took 15.37 hours with a memory footprint of 463GB on the MSMARCO dataset. As the results shown, at Recall@10=99\% \vsag achieves a significant improvement in QPS compared to hnswlib (increased from 180 to 467), which is a $2.59\times$ performance boost. Similarly, on the ANT-INTERNET dataset, at Recall@10=96\% \vsag demonstrates a remarkable enhancement in QPS rising from 659 to 1421, which is $2.15\times$ QPS of hnswlib.

\vspace{-2ex}
\subsection{Ablation Study}
\subsubsection{Cache Miss Analysis}

Table~\ref{tab:cache_opt} conduct ablation tests to investigate the effectiveness of \vsag's strategies (see \S\ref{sec:tech1_cache_miss_opt},\S\ref{sec:tech2_auto_param_opt} and\S\ref{sec:tech3_distance_opt}).
We use GIST1M and SIFT1M datasets, and set $m_c =36 ,\alpha_c =1.0$ for \vsag.
The strategies are incremental, i.e., the strategy $k$ row reports the performance of the baseline with strategies $1..k$.

Strategy 1 uses \textit{quantization methods}, and it improves QPS while ensuring the same recall rates.
Specifically, the QPS on GIST1M increases from 510 to 1272 (149\% growth), and the QPS on SIFT1M increases from 1695 to 2881 (69\% growth).
This is because quantization can significantly reduce the cost of distance computation.

Strategies 2-5 optimize \textit{memory access}.
As a result, the L3 cache miss rate reduces from 93.89\% to 39.23\% on GIST1M, and from 77.88\% to 20.98\% on SIFT1M.
Such a drop in cache miss rate leads to a 70\% QPS increase on GIST1M , and a 74\% QPS increase on SIFT1M.
This is because L3 cache loads are the dominate cost in the search time after using quantization methods.

Strategies 6-7 use \textit{PRS} to balance memory and CPU usage.
When $\delta$ increase, \vsag would allocates more memory, and the L3 cache load of  will decrease.
For example, the L3 cache load decrease by 41\% on GIST1M, and by 30\% on SIFT1M.
On GIST1M dataset, \vsag's QPS increases from 2167 to 2337, as memory pressure is the performance bottleneck.
However, on SIFT1M dataset, \vsag's QPS on GIST1M decreases from 5027 to 4640, because CPU pressure outweighs memory pressure on this dataset.
We can decide whether to use \textit{PRS} based on the workload of memory and CPU.

\begin{figure}[t!]\centering \vspace{-2ex}
	\includegraphics[width=0.5\textwidth]{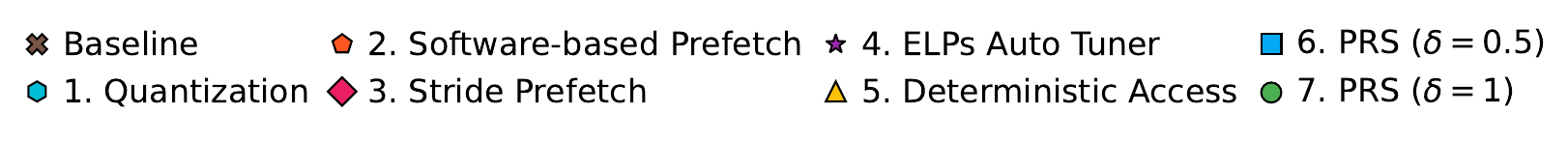}
	\hfill\\\vspace{-3ex}
	\subfigure[][{\scriptsize GIST1M}]{
		\scalebox{0.3}[0.3]{\includegraphics{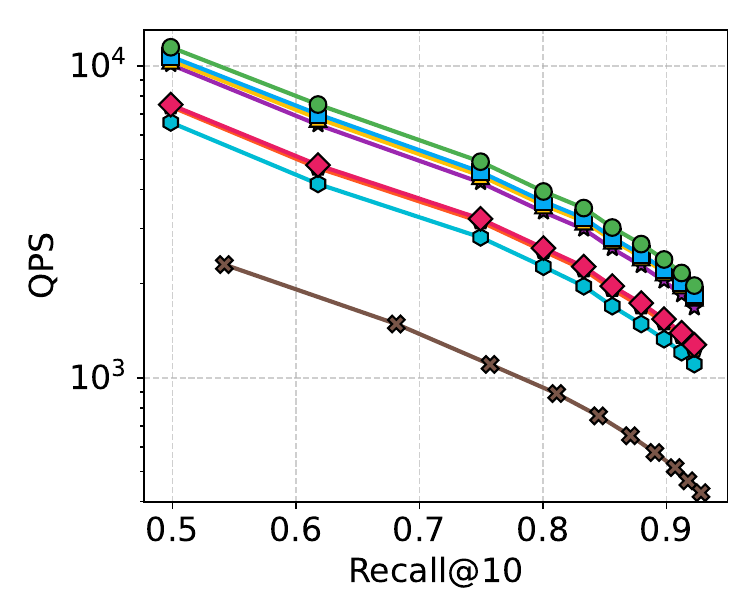}}
		\label{subfig:cache_gist}}
	\subfigure[][{\scriptsize GLOVE-100}]{
		\scalebox{0.3}[0.3]{\includegraphics{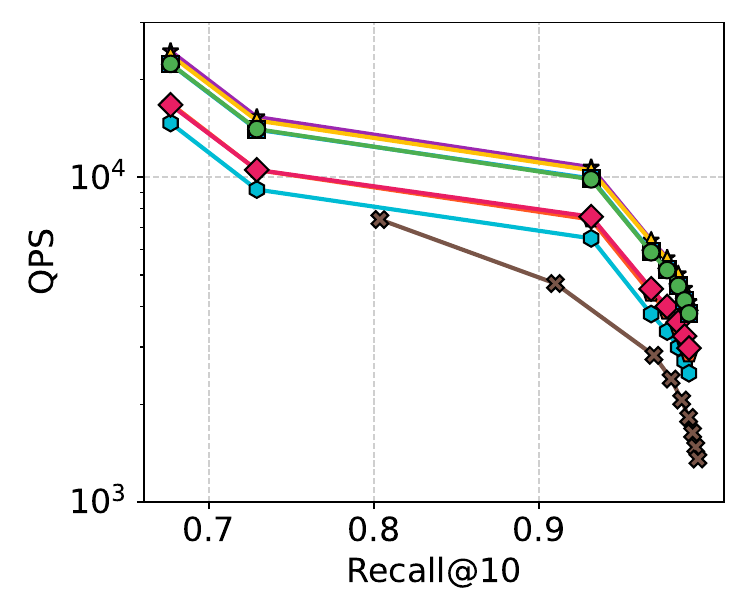}}
		\label{subfig:cache_sift}}\vspace{-4ex}
	\caption{\small Performance of \vsag's Strategies.}\vspace{-2ex}
	\label{fig:cache}
	
	\vspace{-2ex}
\end{figure}

\subsubsection{Performance}
Figure~\ref{fig:cache} illustrates the cumulative performance gaps under varying numbers of optimization strategies. For each strategy, varying $ef_s$ from 10 to 100 yields specific Recall@10 and QPS values. For example, the top-left point in the Baseline corresponds to $ef_s = 10$, while the bottom-right point corresponds to $ef_s = 100$. The strategies contributing most to QPS improvements are 1. Quantization and 4. ELPs Auto Tuner. The former drastically reduces distance computation overhead, while the latter significantly enhances the prefetch effectiveness in 3. Stride Prefetch. However, applying only 3. Stride Prefetch shows minimal difference compared to 2. Software-based Prefetch, as prefetch efficiency heavily depends on environment-level parameters $\omega$ and $\nu$.

\subsection{Evaluation of ILPs Auto-Tuner}
\subsubsection{Tuning Cost}

\begin{table*}[t!]
	\small
	\begin{center} 
		\caption{\small Time Cost Breakdown in Each Phase. }\vspace{-3ex}
		\label{tab:portion_of_cost}
		\begin{tabular}{l|l|c|c|c|c|c|c}\toprule
			\multirow{2}{*}{\bf Dataset} & \multirow{2}{*}{\bf Algorithm} 
			& \multicolumn{4}{c|}{\bf Offline Construction and Tuning Phase} & \multicolumn{2}{c}{\bf Online Search Phase (1,000 queries)} \\ 
			\cline{3-8}
			& & \bf Build Index & \bf ILP Tuning & \bf ELP Tuning & \bf QLP Training& \bf QLP Tuning& \bf Search with Query\\ \midrule
			
			\multirow{2}{*}{SIFT1M} 
			& \vsag & 5998.564s & 68.185s & 10.293s & 50.768s & $\sim$0.001s & 0.217s \\ \cline{2-8}
			& hnswlib & \multicolumn{4}{c|}{30.79 hours ($18\times$)} & - & 0.370s \\ \midrule
			
			\multirow{2}{*}{GIST1M} 
			& \vsag & 11085.626s & 119.813s & 25.397s & 149.467s & $\sim$0.001s & 1.587s \\ \cline{2-8}
			& hnswlib & \multicolumn{4}{c|}{61.64 hours ($19\times$)} & - & 4.807s \\ \bottomrule
			
		\end{tabular}
	\end{center}
	\vspace{-2ex}
\end{table*}

As shown in Table~\ref{tab:portion_of_cost}, we present the time consumption of VSAG across different phases on the SIFT1M and GIST1M datasets, varying index-level parameters $m_c \in(8,16,24,32)$ and $\alpha_c\in(1.0,1.2,1.4,1.6,1.8,2.0)$. Here, we report the total time for 1,000 queries at Recall@10 = 99\%. During the offline phase, the tuning of ILP (i.e., 68s) and ELP (i.e., 10s) takes the longest time, as it involves adjusting a large number of parameters and performing actual searches to select the optimal parameters. However, the total time spent on all tuning (i.e., 128s) processes is still significantly lower than the time required for building index (i.e., 5998s). In contrast, hnswlib requires over 30 hours to repeatedly construct indexes with different ILP during its tuning process. In the online phase, the tuning of QLP introduces only 0.001s overhead because the decision tree only relies on a small number of features. This overhead is almost negligible compared to the search cost (i.e., 0.217s). Even when including the tuning overhead of QLP, \vsag's overall search cost remains lower than that of hnswlib (i.e., 0.37s).

\begin{figure}[t!]\centering \vspace{-2ex}
	\includegraphics[width=0.3\textwidth]{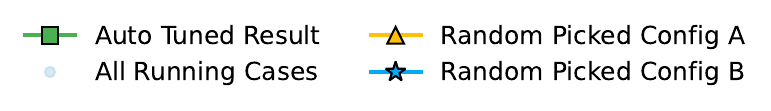}
	\hfill\\\vspace{-2ex}
	\subfigure[][{\scriptsize GIST1M}]{
		\scalebox{0.3}[0.3]{\includegraphics{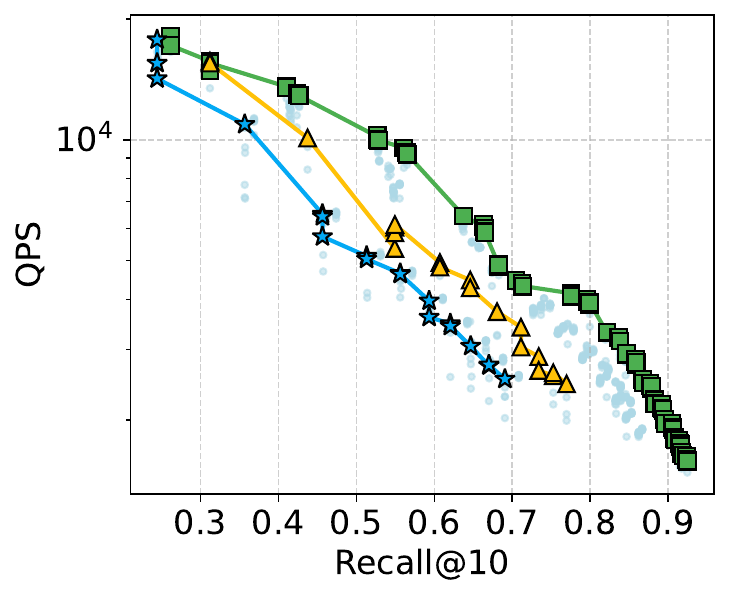}}}
	\subfigure[][{\scriptsize GLOVE-100}]{
		\scalebox{0.3}[0.3]{\includegraphics{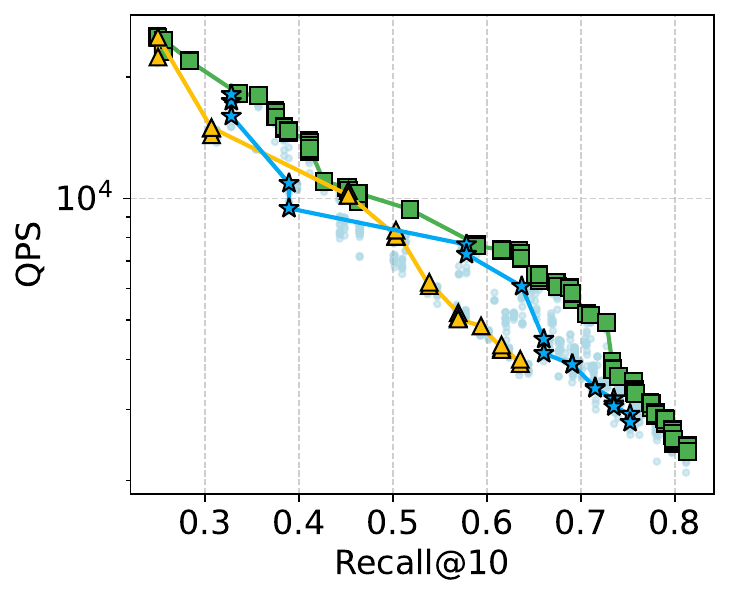}}}\vspace{-4ex}
	\caption{\small Performance of Auto-Tuned ILPs.}
	\label{fig:autoILP}
	\vspace{-4ex}
\end{figure}

\subsubsection{Tuning Performance}

Beyond tuning costs, we also demonstrate the performance of the ILPs auto tuned index. As shown in Figure \ref{fig:autoILP}, we randomly selected two index-level parameter configurations (A and B) as baselines and plotted the performance of all parameter combinations in running cases. \vsag exhibits significant performance gains over these baselines. For example, on the GIST1M dataset at a fixed QPS of 2500, the worst-case Recall@10 among all running cases is 62\%, while the tuned index achieves Recall@10=88\%, representing a 26\% (absolute) improvement. At a fixed Recall@10=70\%, the worst-case QPS is 2000, while \vsag achieves 4000 QPS - an improvement of 100\%. Similar trends hold for the GLOVE-100 dataset: maximum Recall@10 improvements exceed 15\% at fixed QPS of 500, and QPS improves from around 4000 to 7000 (over 75\% gain) at a fixed recall rate of 60\%.

\subsection{Evaluation of QLPs Auto-Tuner}
Table~\ref{tab:tuningQLP} evaluates the QLPs tuning result on GIST1M and SIFT1M, under a recall guarantee of 94\% and 97\%. The baseline method (i.e., \textit{FIX}) manually selects the smallest $ef_s$ that ensures the target recall.
In contrast, \vsag employs a decision tree classification approach to divide queries into two categories: (1) Simple queries, which can converge to the target accuracy with a smaller $ef_s$ value and incur lower computational cost. For these queries, an appropriate $ef_s$ can help improve retrieval speed. (2) Complex queries, which require larger $ef_s$ values to achieve the desired accuracy. For these queries, an appropriate $ef_s$ can help improve retrieval precision.
For both types of queries, the QLPs Auto-Tuner can intelligently selects the required $ef_s$, leading to over a 5\% increase in QPS when recall thresholds equal 94\% and 97\%, respectively.

\section{Case Study}

In Ant Group, vector retrieval capabilities are provided through a distributed vector database system, which is designed to support real-time queries with highly availability and high scalability. For storage, the vector database system splits the billion-level dataset into subsets managed by  the LSM-Tree structure~\cite{2022manu}, and each subset is referred to as a segment. Our \vsag library focuses solely on indexing and retrieval performance. Therefore, when \vsag is integrated into a vector database, we separately construct a \vsag index for each segment. Then, for a coming query, it is executed on  indexes on Segments in parallel, and the vector database  system merges the results from segments as the final result. 

For the deployment of the vector database system, we use query nodes to manage the segments. Here, a query node refers to a server that responds to online requests. To minimize the impact of query node failures and avoid prolonged recovery times, the size and number of segments maintained by a single distributed node are limited. As data volume grows, scalability is achieved by adding more nodes horizontally. This approach is also widely adopted in vector databases such as Milvus~\cite{2022manu} and Vald~\cite{vald}.

For example, in an image search scenario involving approximately 10 billion images in Ant Group, each image is embedded into a 512-dimensional vector and stored in a distributed vector database cluster. The cluster is configured such that each segment contains approximately 10 million rows of data. Each query node can host up to 4 segments, and each query node is deployed on an machine instance with a 16-core CPU and 80GB memory. The entire cluster consists of approximately 400 such machine instances. By using the \vsag index, the average latency in this scenario is reduced from 3.0ms to 1.1ms per segment, and the upper limit of QPS throughput is increased by $2.65\times$, compared to using the open-source hnswlib algorithm.

\begin{table}[t!]
	\small\scriptsize
	\centering
	\renewcommand{\arraystretch}{1.2} 
	\caption{\small Comparison of Tuning Performance of QLPs}\vspace{-3ex} \label{tab:tuningQLP}
	\begin{tabular}{l|c|cc|cc}
		\toprule
		\multirow{2}{*}{\textbf{Method}} & \multirow{2}{*}{\textbf{Metric}} & \multicolumn{2}{c|}{\textbf{GIST1M}} & \multicolumn{2}{c}{\textbf{SIFT1M}} \\
		\cmidrule(lr){3-4} \cmidrule(lr){5-6}
		&   & \textbf{94\%} & \textbf{97\%} & \textbf{94\%} & \textbf{97\%} \\
		\midrule
		\multirow{2}{*}{\centering\textbf{FIX}}  
		& Recall@10   & 94.64\%       & 97.49\%       & 94.54\%      & 97.61\%  \\
		& QPS         & 1469          & 902           & 4027          & 2834    \\
		\midrule
		\multirow{2}{*}{\centering\textbf{\vsag (Ours)}} 
		& Recall@10    & 94.71\%       & 97.58\%       & 94.63\%       & 97.66\% \\
		& QPS          & 1534          & 967           & 4050          & 2912    \\
		\bottomrule
	\end{tabular}\vspace{-2ex}
	
	\vspace{-6ex}
\end{table}

\section{Related Work}
\label{sec:related}

The mainstream methods for vector retrieval can be divided into two categories: space partitioning-based and graph-based.
Graph-based algorithms \cite{HNSW-PAMI-2020,Diskann-NIPS-2019,NSG-VLDB-2019-deng-cai,SSG-PAMI-2022-deng-cai,HVS-VLDB-2021-kejing-lu,tMRNG:journals/pacmmod/PengCCYX23,ELPIS-VLDB-2023,ANNSurvey-TKDE-2020-Wei-Wang,yang2025revisitingindexconstructionproximity} can ensure high recall with practical efficiency, e.g., HNSW~\cite{HNSW-PAMI-2020}, NSG~\cite{2019nsg}, VAMANA~\cite{Diskann-NIPS-2019}, and $\tau$-MNG~\cite{2023taumng}.
These methods build a proximity graph where each node is a base vector and edges connect pairs of nearby vectors.
During a vector search, they greedily move towards the query vector to identify its nearest neighbors.
Our \vsag framework can adapt to graph-based algorithms mentioned above, to improve the performance in production.

Space partitioning-based methods (e.g., IVFADC~\cite{PQ-fast-scan-VLDB-2015,IMI-PAMI-2015,Learning-Hash-survey-2017-PAMI,NeuralLSH-2020-ICLR}) group similar vectors into subspaces with K-means~\cite{PQ-fast-scan-VLDB-2015, johnson2019faiss} or Locality-Sensitive Hashing (LSH)~\cite{LSH-1999,LSH-Pstable-2004,C2LSH-2012-SIGMOD,PMLSH-bolong-2020,SRS-yifang-2014,Binary-LSH-NIPS-2009}.
During the search process, they traverse some vector subspaces to find the nearest neighbors.
These methods can achieve high cache hit rates due to the continuous organization of vectors, but they suffer from a low recall issue.
In comparison, graph-based ANNS algorithms (e.g., \vsag and HNSW) usually achieve a higher QPS under the same recall.
Note that the technique of \vsag cannot be applied to space partitioning-based algorithms.

\section{Conclusion}
\label{sec:conclusion}
In this work, we present \vsag, an open-source framework for ANNS that can be applied to most of the graph-based indexes. \vsag employs software-based prefetch, deterministic access greedy search, and PRS to significantly reduce the cache miss rate. \vsag has a three-level parameter tuning mechanism that automatically adjusts different parameters based on their tuning complexity. \vsag combines quantization and selective re-ranking to integrate low- and high-precision distance computations. Experiments on real-world datasets demonstrate that \vsag outperforms baselines.

\section{acknowledgment}
This work was supported by Ant Group. Heng Tao Shen and Peng Cheng are partially supported by the Fundamental Research Funds for the Central Universities. Xuemin Lin is supported by NSFC U2241211.

\balance
\bibliographystyle{ACM-Reference-Format}
\bibliography{add}


\begin{thebibliography}{61}


\ifx \showCODEN    \undefined \def \showCODEN     #1{\unskip}     \fi
\ifx \showDOI      \undefined \def \showDOI       #1{#1}\fi
\ifx \showISBNx    \undefined \def \showISBNx     #1{\unskip}     \fi
\ifx \showISBNxiii \undefined \def \showISBNxiii  #1{\unskip}     \fi
\ifx \showISSN     \undefined \def \showISSN      #1{\unskip}     \fi
\ifx \showLCCN     \undefined \def \showLCCN      #1{\unskip}     \fi
\ifx \shownote     \undefined \def \shownote      #1{#1}          \fi
\ifx \showarticletitle \undefined \def \showarticletitle #1{#1}   \fi
\ifx \showURL      \undefined \def \showURL       {\relax}        \fi
\providecommand\bibfield[2]{#2}
\providecommand\bibinfo[2]{#2}
\providecommand\natexlab[1]{#1}
\providecommand\showeprint[2][]{arXiv:#2}

\bibitem[\protect\citeauthoryear{Alipay}{Alipay}{2025a}]%
        {antgroup}
\bibfield{author}{\bibinfo{person}{Alipay}.} \bibinfo{year}{2025}\natexlab{a}.
\newblock \bibinfo{title}{\text{Ant Group}}.
\newblock \bibinfo{howpublished}{\url{https://www.antgroup.com}}.
\newblock


\bibitem[\protect\citeauthoryear{Alipay}{Alipay}{2025b}]%
        {alipay}
\bibfield{author}{\bibinfo{person}{Alipay}.} \bibinfo{year}{2025}\natexlab{b}.
\newblock \bibinfo{title}{\text{Face Recognition}}.
\newblock
  \bibinfo{howpublished}{\url{https://open.alipay.com/api/detail?code=I1080300001000043632}}.
\newblock


\bibitem[\protect\citeauthoryear{Amazon}{Amazon}{2025}]%
        {amazon}
\bibfield{author}{\bibinfo{person}{Amazon}.} \bibinfo{year}{2025}\natexlab{}.
\newblock \bibinfo{title}{\text{Product Search}}.
\newblock \bibinfo{howpublished}{\url{https://www.amazon.com/}}.
\newblock


\bibitem[\protect\citeauthoryear{Andr{\'e}, Kermarrec, and
  Le~Scouarnec}{Andr{\'e} et~al\mbox{.}}{2015}]%
        {PQ-fast-scan-VLDB-2015}
\bibfield{author}{\bibinfo{person}{Fabien Andr{\'e}},
  \bibinfo{person}{Anne-Marie Kermarrec}, {and} \bibinfo{person}{Nicolas
  Le~Scouarnec}.} \bibinfo{year}{2015}\natexlab{}.
\newblock \showarticletitle{Cache locality is not enough: High-Performance
  Nearest Neighbor Search with Product Quantization Fast Scan}.
\newblock \bibinfo{journal}{\emph{Proceedings of the VLDB Endowment}}
  \bibinfo{volume}{9}, \bibinfo{number}{4} (\bibinfo{year}{2015}).
\newblock


\bibitem[\protect\citeauthoryear{Aum{\"u}ller, Bernhardsson, and
  Faithfull}{Aum{\"u}ller et~al\mbox{.}}{2020}]%
        {ann-benchmakrs}
\bibfield{author}{\bibinfo{person}{Martin Aum{\"u}ller}, \bibinfo{person}{Erik
  Bernhardsson}, {and} \bibinfo{person}{Alexander Faithfull}.}
  \bibinfo{year}{2020}\natexlab{}.
\newblock \showarticletitle{ANN-Benchmarks: A benchmarking tool for approximate
  nearest neighbor algorithms}.
\newblock \bibinfo{journal}{\emph{Information Systems}}  \bibinfo{volume}{87}
  (\bibinfo{year}{2020}), \bibinfo{pages}{101374}.
\newblock


\bibitem[\protect\citeauthoryear{Ayers, Litz, Kozyrakis, and Ranganathan}{Ayers
  et~al\mbox{.}}{2020}]%
        {2020prefetchAndMemoryAccess}
\bibfield{author}{\bibinfo{person}{Grant Ayers}, \bibinfo{person}{Heiner Litz},
  \bibinfo{person}{Christos Kozyrakis}, {and} \bibinfo{person}{Parthasarathy
  Ranganathan}.} \bibinfo{year}{2020}\natexlab{}.
\newblock \showarticletitle{Classifying Memory Access Patterns for
  Prefetching}. In \bibinfo{booktitle}{\emph{Proceedings of the Twenty-Fifth
  International Conference on Architectural Support for Programming Languages
  and Operating Systems}} (Lausanne, Switzerland)
  \emph{(\bibinfo{series}{ASPLOS '20})}. \bibinfo{publisher}{Association for
  Computing Machinery}, \bibinfo{address}{New York, NY, USA},
  \bibinfo{pages}{513--526}.
\newblock
\showISBNx{9781450371025}
\urldef\tempurl%
\url{https://doi.org/10.1145/3373376.3378498}
\showDOI{\tempurl}


\bibitem[\protect\citeauthoryear{Azizi, Echihabi, and Palpanas}{Azizi
  et~al\mbox{.}}{2023}]%
        {ELPIS-VLDB-2023}
\bibfield{author}{\bibinfo{person}{Ilias Azizi}, \bibinfo{person}{Karima
  Echihabi}, {and} \bibinfo{person}{Themis Palpanas}.}
  \bibinfo{year}{2023}\natexlab{}.
\newblock \showarticletitle{Elpis: Graph-based similarity search for scalable
  data science}.
\newblock \bibinfo{journal}{\emph{Proceedings of the VLDB Endowment}}
  \bibinfo{volume}{16}, \bibinfo{number}{6} (\bibinfo{year}{2023}),
  \bibinfo{pages}{1548--1559}.
\newblock


\bibitem[\protect\citeauthoryear{Babenko and Lempitsky}{Babenko and
  Lempitsky}{2015}]%
        {IMI-PAMI-2015}
\bibfield{author}{\bibinfo{person}{Artem Babenko} {and}
  \bibinfo{person}{Victor~S. Lempitsky}.} \bibinfo{year}{2015}\natexlab{}.
\newblock \showarticletitle{The Inverted Multi-Index}.
\newblock \bibinfo{journal}{\emph{{IEEE} Trans. Pattern Anal. Mach. Intell.}}
  \bibinfo{volume}{37}, \bibinfo{number}{6} (\bibinfo{year}{2015}),
  \bibinfo{pages}{1247--1260}.
\newblock


\bibitem[\protect\citeauthoryear{Braun and Litz}{Braun and Litz}{2019}]%
        {braun2019sma}
\bibfield{author}{\bibinfo{person}{Peter Braun} {and} \bibinfo{person}{Heiner
  Litz}.} \bibinfo{year}{2019}\natexlab{}.
\newblock \showarticletitle{Understanding memory access patterns for
  prefetching}. In \bibinfo{booktitle}{\emph{International Workshop on
  AI-assisted Design for Architecture (AIDArc), held in conjunction with
  ISCA}}.
\newblock


\bibitem[\protect\citeauthoryear{Chen and Guestrin}{Chen and Guestrin}{2016}]%
        {2016xgboost}
\bibfield{author}{\bibinfo{person}{Tianqi Chen} {and} \bibinfo{person}{Carlos
  Guestrin}.} \bibinfo{year}{2016}\natexlab{}.
\newblock \showarticletitle{XGBoost: {A} Scalable Tree Boosting System}. In
  \bibinfo{booktitle}{\emph{Proceedings of the 22nd {ACM} {SIGKDD}
  International Conference on Knowledge Discovery and Data Mining, San
  Francisco, CA, USA, August 13-17, 2016}},
  \bibfield{editor}{\bibinfo{person}{Balaji Krishnapuram},
  \bibinfo{person}{Mohak Shah}, \bibinfo{person}{Alexander~J. Smola},
  \bibinfo{person}{Charu~C. Aggarwal}, \bibinfo{person}{Dou Shen}, {and}
  \bibinfo{person}{Rajeev Rastogi}} (Eds.). \bibinfo{publisher}{{ACM}},
  \bibinfo{pages}{785--794}.
\newblock


\bibitem[\protect\citeauthoryear{Chen and Baer}{Chen and Baer}{1995}]%
        {chen1995hardwarePrefetch}
\bibfield{author}{\bibinfo{person}{Tien-Fu Chen} {and}
  \bibinfo{person}{Jean-Loup Baer}.} \bibinfo{year}{1995}\natexlab{}.
\newblock \showarticletitle{Effective hardware-based data prefetching for
  high-performance processors}.
\newblock \bibinfo{journal}{\emph{IEEE transactions on computers}}
  \bibinfo{volume}{44}, \bibinfo{number}{5} (\bibinfo{year}{1995}),
  \bibinfo{pages}{609--623}.
\newblock


\bibitem[\protect\citeauthoryear{Datar, Immorlica, Indyk, and Mirrokni}{Datar
  et~al\mbox{.}}{2004}]%
        {LSH-Pstable-2004}
\bibfield{author}{\bibinfo{person}{Mayur Datar}, \bibinfo{person}{Nicole
  Immorlica}, \bibinfo{person}{Piotr Indyk}, {and} \bibinfo{person}{Vahab~S
  Mirrokni}.} \bibinfo{year}{2004}\natexlab{}.
\newblock \showarticletitle{Locality-sensitive hashing scheme based on p-stable
  distributions}. In \bibinfo{booktitle}{\emph{Proceedings of the twentieth
  annual symposium on Computational geometry}}. \bibinfo{pages}{253--262}.
\newblock


\bibitem[\protect\citeauthoryear{Dong, Moses, and Li}{Dong
  et~al\mbox{.}}{2011}]%
        {NN-decent-WWW-2011}
\bibfield{author}{\bibinfo{person}{Wei Dong}, \bibinfo{person}{Charikar Moses},
  {and} \bibinfo{person}{Kai Li}.} \bibinfo{year}{2011}\natexlab{}.
\newblock \showarticletitle{Efficient k-nearest neighbor graph construction for
  generic similarity measures}. In \bibinfo{booktitle}{\emph{Proceedings of the
  20th international conference on World wide web}}. \bibinfo{pages}{577--586}.
\newblock


\bibitem[\protect\citeauthoryear{Dong, Indyk, Razenshteyn, and Wagner}{Dong
  et~al\mbox{.}}{2020}]%
        {NeuralLSH-2020-ICLR}
\bibfield{author}{\bibinfo{person}{Yihe Dong}, \bibinfo{person}{Piotr Indyk},
  \bibinfo{person}{Ilya~P Razenshteyn}, {and} \bibinfo{person}{Tal Wagner}.}
  \bibinfo{year}{2020}\natexlab{}.
\newblock \showarticletitle{Learning Space Partitions for Nearest Neighbor
  Search}.
\newblock \bibinfo{journal}{\emph{ICLR}} (\bibinfo{year}{2020}).
\newblock


\bibitem[\protect\citeauthoryear{Fu, Wang, and Cai}{Fu et~al\mbox{.}}{2022}]%
        {SSG-PAMI-2022-deng-cai}
\bibfield{author}{\bibinfo{person}{Cong Fu}, \bibinfo{person}{Changxu Wang},
  {and} \bibinfo{person}{Deng Cai}.} \bibinfo{year}{2022}\natexlab{}.
\newblock \showarticletitle{High Dimensional Similarity Search With Satellite
  System Graph: Efficiency, Scalability, and Unindexed Query Compatibility}.
\newblock \bibinfo{journal}{\emph{{IEEE} Trans. Pattern Anal. Mach. Intell.}}
  \bibinfo{volume}{44}, \bibinfo{number}{8} (\bibinfo{year}{2022}),
  \bibinfo{pages}{4139--4150}.
\newblock


\bibitem[\protect\citeauthoryear{Fu, Xiang, Wang, and Cai}{Fu
  et~al\mbox{.}}{2019a}]%
        {NSG-VLDB-2019-deng-cai}
\bibfield{author}{\bibinfo{person}{Cong Fu}, \bibinfo{person}{Chao Xiang},
  \bibinfo{person}{Changxu Wang}, {and} \bibinfo{person}{Deng Cai}.}
  \bibinfo{year}{2019}\natexlab{a}.
\newblock \showarticletitle{Fast Approximate Nearest Neighbor Search With The
  Navigating Spreading-out Graph}.
\newblock \bibinfo{journal}{\emph{Proc. {VLDB} Endow.}} \bibinfo{volume}{12},
  \bibinfo{number}{5} (\bibinfo{year}{2019}), \bibinfo{pages}{461--474}.
\newblock


\bibitem[\protect\citeauthoryear{Fu, Xiang, Wang, and Cai}{Fu
  et~al\mbox{.}}{2019b}]%
        {2019nsg}
\bibfield{author}{\bibinfo{person}{Cong Fu}, \bibinfo{person}{Chao Xiang},
  \bibinfo{person}{Changxu Wang}, {and} \bibinfo{person}{Deng Cai}.}
  \bibinfo{year}{2019}\natexlab{b}.
\newblock \showarticletitle{Fast approximate nearest neighbor search with the
  navigating spreading-out graph}.
\newblock \bibinfo{journal}{\emph{Proc. VLDB Endow.}} \bibinfo{volume}{12},
  \bibinfo{number}{5} (\bibinfo{date}{Jan.} \bibinfo{year}{2019}),
  \bibinfo{pages}{461–474}.
\newblock
\showISSN{2150-8097}


\bibitem[\protect\citeauthoryear{Gan, Feng, Fang, and Ng}{Gan
  et~al\mbox{.}}{2012}]%
        {C2LSH-2012-SIGMOD}
\bibfield{author}{\bibinfo{person}{Junhao Gan}, \bibinfo{person}{Jianlin Feng},
  \bibinfo{person}{Qiong Fang}, {and} \bibinfo{person}{Wilfred Ng}.}
  \bibinfo{year}{2012}\natexlab{}.
\newblock \showarticletitle{Locality-sensitive hashing scheme based on dynamic
  collision counting}. In \bibinfo{booktitle}{\emph{Proceedings of the 2012 ACM
  SIGMOD international conference on management of data}}.
  \bibinfo{pages}{541--552}.
\newblock


\bibitem[\protect\citeauthoryear{Gao, Gou, Xu, Yang, Long, and Wong}{Gao
  et~al\mbox{.}}{2024a}]%
        {ExRaBitQ-arxiv-2024}
\bibfield{author}{\bibinfo{person}{Jianyang Gao}, \bibinfo{person}{Yutong Gou},
  \bibinfo{person}{Yuexuan Xu}, \bibinfo{person}{Yongyi Yang},
  \bibinfo{person}{Cheng Long}, {and} \bibinfo{person}{Raymond Chi-Wing Wong}.}
  \bibinfo{year}{2024}\natexlab{a}.
\newblock \showarticletitle{Practical and Asymptotically Optimal Quantization
  of High-Dimensional Vectors in Euclidean Space for Approximate Nearest
  Neighbor Search}.
\newblock \bibinfo{journal}{\emph{arXiv preprint arXiv:2409.09913}}
  (\bibinfo{year}{2024}).
\newblock


\bibitem[\protect\citeauthoryear{Gao and Long}{Gao and Long}{2023}]%
        {ADSampling:journals/sigmod/GaoL23}
\bibfield{author}{\bibinfo{person}{Jianyang Gao} {and} \bibinfo{person}{Cheng
  Long}.} \bibinfo{year}{2023}\natexlab{}.
\newblock \showarticletitle{High-Dimensional Approximate Nearest Neighbor
  Search: with Reliable and Efficient Distance Comparison Operations}.
\newblock \bibinfo{journal}{\emph{Proc. {ACM} Manag. Data}}
  \bibinfo{volume}{1}, \bibinfo{number}{2} (\bibinfo{year}{2023}),
  \bibinfo{pages}{137:1--137:27}.
\newblock
\urldef\tempurl%
\url{https://doi.org/10.1145/3589282}
\showDOI{\tempurl}


\bibitem[\protect\citeauthoryear{Gao and Long}{Gao and Long}{2024}]%
        {2024rabitq}
\bibfield{author}{\bibinfo{person}{Jianyang Gao} {and} \bibinfo{person}{Cheng
  Long}.} \bibinfo{year}{2024}\natexlab{}.
\newblock \showarticletitle{RaBitQ: Quantizing High-Dimensional Vectors with a
  Theoretical Error Bound for Approximate Nearest Neighbor Search}.
\newblock \bibinfo{journal}{\emph{Proc. {ACM} Manag. Data}}
  \bibinfo{volume}{2}, \bibinfo{number}{3} (\bibinfo{year}{2024}),
  \bibinfo{pages}{167}.
\newblock


\bibitem[\protect\citeauthoryear{Gao, Xiong, Gao, Jia, Pan, Bi, Dai, Sun, Wang,
  and Wang}{Gao et~al\mbox{.}}{2024b}]%
        {gao2024RAGsurvey}
\bibfield{author}{\bibinfo{person}{Yunfan Gao}, \bibinfo{person}{Yun Xiong},
  \bibinfo{person}{Xinyu Gao}, \bibinfo{person}{Kangxiang Jia},
  \bibinfo{person}{Jinliu Pan}, \bibinfo{person}{Yuxi Bi}, \bibinfo{person}{Yi
  Dai}, \bibinfo{person}{Jiawei Sun}, \bibinfo{person}{Meng Wang}, {and}
  \bibinfo{person}{Haofen Wang}.} \bibinfo{year}{2024}\natexlab{b}.
\newblock \bibinfo{title}{Retrieval-Augmented Generation for Large Language
  Models: A Survey}.
\newblock
\newblock
\showeprint[arxiv]{2312.10997}~[cs.CL]


\bibitem[\protect\citeauthoryear{Gionis, Indyk, Motwani, et~al\mbox{.}}{Gionis
  et~al\mbox{.}}{1999}]%
        {LSH-1999}
\bibfield{author}{\bibinfo{person}{Aristides Gionis}, \bibinfo{person}{Piotr
  Indyk}, \bibinfo{person}{Rajeev Motwani}, {et~al\mbox{.}}}
  \bibinfo{year}{1999}\natexlab{}.
\newblock \showarticletitle{Similarity search in high dimensions via hashing}.
  In \bibinfo{booktitle}{\emph{Vldb}}, Vol.~\bibinfo{volume}{99}.
  \bibinfo{pages}{518--529}.
\newblock


\bibitem[\protect\citeauthoryear{Google}{Google}{2025}]%
        {google}
\bibfield{author}{\bibinfo{person}{Google}.} \bibinfo{year}{2025}\natexlab{}.
\newblock \bibinfo{title}{\text{Search Engine}}.
\newblock \bibinfo{howpublished}{\url{https://www.google.com/}}.
\newblock


\bibitem[\protect\citeauthoryear{Guo, Luan, Xiang, Yan, Yi, Luo, Cheng, Xu,
  Luo, Liu, Cao, Qiao, Wang, Tang, and Xie}{Guo et~al\mbox{.}}{2022}]%
        {2022manu}
\bibfield{author}{\bibinfo{person}{Rentong Guo}, \bibinfo{person}{Xiaofan
  Luan}, \bibinfo{person}{Long Xiang}, \bibinfo{person}{Xiao Yan},
  \bibinfo{person}{Xiaomeng Yi}, \bibinfo{person}{Jigao Luo},
  \bibinfo{person}{Qianya Cheng}, \bibinfo{person}{Weizhi Xu},
  \bibinfo{person}{Jiarui Luo}, \bibinfo{person}{Frank Liu},
  \bibinfo{person}{Zhenshan Cao}, \bibinfo{person}{Yanliang Qiao},
  \bibinfo{person}{Ting Wang}, \bibinfo{person}{Bo Tang}, {and}
  \bibinfo{person}{Charles Xie}.} \bibinfo{year}{2022}\natexlab{}.
\newblock \showarticletitle{Manu: a cloud native vector database management
  system}.
\newblock \bibinfo{journal}{\emph{Proc. VLDB Endow.}} \bibinfo{volume}{15},
  \bibinfo{number}{12} (\bibinfo{date}{Aug.} \bibinfo{year}{2022}),
  \bibinfo{pages}{3548–3561}.
\newblock
\showISSN{2150-8097}


\bibitem[\protect\citeauthoryear{Guo, Sun, Lindgren, Geng, Simcha, Chern, and
  Kumar}{Guo et~al\mbox{.}}{2020}]%
        {Learned-PQ-2020-ICML-guoruiqi}
\bibfield{author}{\bibinfo{person}{Ruiqi Guo}, \bibinfo{person}{Philip Sun},
  \bibinfo{person}{Erik Lindgren}, \bibinfo{person}{Quan Geng},
  \bibinfo{person}{David Simcha}, \bibinfo{person}{Felix Chern}, {and}
  \bibinfo{person}{Sanjiv Kumar}.} \bibinfo{year}{2020}\natexlab{}.
\newblock \showarticletitle{Accelerating large-scale inference with anisotropic
  vector quantization}. In \bibinfo{booktitle}{\emph{International Conference
  on Machine Learning}}. PMLR, \bibinfo{pages}{3887--3896}.
\newblock


\bibitem[\protect\citeauthoryear{Indyk and Motwani}{Indyk and Motwani}{1998}]%
        {Curse-of-dim-1998}
\bibfield{author}{\bibinfo{person}{Piotr Indyk} {and} \bibinfo{person}{Rajeev
  Motwani}.} \bibinfo{year}{1998}\natexlab{}.
\newblock \showarticletitle{Approximate nearest neighbors: towards removing the
  curse of dimensionality}. In \bibinfo{booktitle}{\emph{Proceedings of the
  thirtieth annual ACM symposium on Theory of computing}}.
  \bibinfo{pages}{604--613}.
\newblock


\bibitem[\protect\citeauthoryear{Intel}{Intel}{2025}]%
        {avx512}
\bibfield{author}{\bibinfo{person}{Intel}.} \bibinfo{year}{2025}\natexlab{}.
\newblock \bibinfo{title}{\text{AVX512}}.
\newblock
  \bibinfo{howpublished}{\url{https://www.intel.com/content/www/us/en/products/docs/accelerator-engines/what-is-intel-avx-512.html}}.
\newblock


\bibitem[\protect\citeauthoryear{Jayaram~Subramanya, Devvrit, Simhadri,
  Krishnawamy, and Kadekodi}{Jayaram~Subramanya et~al\mbox{.}}{2019a}]%
        {Diskann-NIPS-2019}
\bibfield{author}{\bibinfo{person}{Suhas Jayaram~Subramanya},
  \bibinfo{person}{Fnu Devvrit}, \bibinfo{person}{Harsha~Vardhan Simhadri},
  \bibinfo{person}{Ravishankar Krishnawamy}, {and} \bibinfo{person}{Rohan
  Kadekodi}.} \bibinfo{year}{2019}\natexlab{a}.
\newblock \showarticletitle{Diskann: Fast accurate billion-point nearest
  neighbor search on a single node}.
\newblock \bibinfo{journal}{\emph{Advances in Neural Information Processing
  Systems}}  \bibinfo{volume}{32} (\bibinfo{year}{2019}).
\newblock


\bibitem[\protect\citeauthoryear{Jayaram~Subramanya, Devvrit, Simhadri,
  Krishnawamy, and Kadekodi}{Jayaram~Subramanya et~al\mbox{.}}{2019b}]%
        {jayaram2019diskann}
\bibfield{author}{\bibinfo{person}{Suhas Jayaram~Subramanya},
  \bibinfo{person}{Fnu Devvrit}, \bibinfo{person}{Harsha~Vardhan Simhadri},
  \bibinfo{person}{Ravishankar Krishnawamy}, {and} \bibinfo{person}{Rohan
  Kadekodi}.} \bibinfo{year}{2019}\natexlab{b}.
\newblock \showarticletitle{Diskann: Fast accurate billion-point nearest
  neighbor search on a single node}.
\newblock \bibinfo{journal}{\emph{Advances in neural information processing
  Systems}}  \bibinfo{volume}{32} (\bibinfo{year}{2019}).
\newblock


\bibitem[\protect\citeauthoryear{J{\'{e}}gou, Douze, and Schmid}{J{\'{e}}gou
  et~al\mbox{.}}{2011a}]%
        {jegou2010pq_ivfadc}
\bibfield{author}{\bibinfo{person}{Herv{\'{e}} J{\'{e}}gou},
  \bibinfo{person}{Matthijs Douze}, {and} \bibinfo{person}{Cordelia Schmid}.}
  \bibinfo{year}{2011}\natexlab{a}.
\newblock \showarticletitle{Product Quantization for Nearest Neighbor Search}.
\newblock \bibinfo{journal}{\emph{{IEEE} Trans. Pattern Anal. Mach. Intell.}}
  \bibinfo{volume}{33}, \bibinfo{number}{1} (\bibinfo{year}{2011}),
  \bibinfo{pages}{117--128}.
\newblock


\bibitem[\protect\citeauthoryear{J{\'{e}}gou, Douze, and Schmid}{J{\'{e}}gou
  et~al\mbox{.}}{2011b}]%
        {PQ-PAMI-2014}
\bibfield{author}{\bibinfo{person}{Herv{\'{e}} J{\'{e}}gou},
  \bibinfo{person}{Matthijs Douze}, {and} \bibinfo{person}{Cordelia Schmid}.}
  \bibinfo{year}{2011}\natexlab{b}.
\newblock \showarticletitle{Product Quantization for Nearest Neighbor Search}.
\newblock \bibinfo{journal}{\emph{{IEEE} Trans. Pattern Anal. Mach. Intell.}}
  \bibinfo{volume}{33}, \bibinfo{number}{1} (\bibinfo{year}{2011}),
  \bibinfo{pages}{117--128}.
\newblock


\bibitem[\protect\citeauthoryear{Ji, Liu, Dai, Yang, Zheng, Wu, Dun, Gu, and
  Yan}{Ji et~al\mbox{.}}{2025}]%
        {ji2025enhancingmultistepreasoning}
\bibfield{author}{\bibinfo{person}{Kaixuan Ji}, \bibinfo{person}{Guanlin Liu},
  \bibinfo{person}{Ning Dai}, \bibinfo{person}{Qingping Yang},
  \bibinfo{person}{Renjie Zheng}, \bibinfo{person}{Zheng Wu},
  \bibinfo{person}{Chen Dun}, \bibinfo{person}{Quanquan Gu}, {and}
  \bibinfo{person}{Lin Yan}.} \bibinfo{year}{2025}\natexlab{}.
\newblock \bibinfo{title}{Enhancing Multi-Step Reasoning Abilities of Language
  Models through Direct Q-Function Optimization}.
\newblock
\newblock
\showeprint[arxiv]{2410.09302}~[cs.LG]


\bibitem[\protect\citeauthoryear{Johnson, Douze, and J{\'e}gou}{Johnson
  et~al\mbox{.}}{2019a}]%
        {faiss:johnson2019billion}
\bibfield{author}{\bibinfo{person}{Jeff Johnson}, \bibinfo{person}{Matthijs
  Douze}, {and} \bibinfo{person}{Herv{\'e} J{\'e}gou}.}
  \bibinfo{year}{2019}\natexlab{a}.
\newblock \showarticletitle{Billion-scale similarity search with {GPUs}}.
\newblock \bibinfo{journal}{\emph{IEEE Transactions on Big Data}}
  \bibinfo{volume}{7}, \bibinfo{number}{3} (\bibinfo{year}{2019}),
  \bibinfo{pages}{535--547}.
\newblock


\bibitem[\protect\citeauthoryear{Johnson, Douze, and J{\'e}gou}{Johnson
  et~al\mbox{.}}{2019b}]%
        {johnson2019faiss}
\bibfield{author}{\bibinfo{person}{Jeff Johnson}, \bibinfo{person}{Matthijs
  Douze}, {and} \bibinfo{person}{Herv{\'e} J{\'e}gou}.}
  \bibinfo{year}{2019}\natexlab{b}.
\newblock \showarticletitle{Billion-scale similarity search with GPUs}.
\newblock \bibinfo{journal}{\emph{IEEE Transactions on Big Data}}
  \bibinfo{volume}{7}, \bibinfo{number}{3} (\bibinfo{year}{2019}),
  \bibinfo{pages}{535--547}.
\newblock


\bibitem[\protect\citeauthoryear{Lewis, Perez, Piktus, Petroni, Karpukhin,
  Goyal, K{\"u}ttler, Lewis, Yih, Rockt{\"a}schel, et~al\mbox{.}}{Lewis
  et~al\mbox{.}}{2020}]%
        {LLM-RAG-NIPS-2020}
\bibfield{author}{\bibinfo{person}{Patrick Lewis}, \bibinfo{person}{Ethan
  Perez}, \bibinfo{person}{Aleksandra Piktus}, \bibinfo{person}{Fabio Petroni},
  \bibinfo{person}{Vladimir Karpukhin}, \bibinfo{person}{Naman Goyal},
  \bibinfo{person}{Heinrich K{\"u}ttler}, \bibinfo{person}{Mike Lewis},
  \bibinfo{person}{Wen-tau Yih}, \bibinfo{person}{Tim Rockt{\"a}schel},
  {et~al\mbox{.}}} \bibinfo{year}{2020}\natexlab{}.
\newblock \showarticletitle{Retrieval-augmented generation for
  knowledge-intensive nlp tasks}.
\newblock \bibinfo{journal}{\emph{Advances in Neural Information Processing
  Systems}}  \bibinfo{volume}{33} (\bibinfo{year}{2020}),
  \bibinfo{pages}{9459--9474}.
\newblock


\bibitem[\protect\citeauthoryear{Li, Zhang, Sun, Wang, Li, Zhang, and Lin}{Li
  et~al\mbox{.}}{2020}]%
        {ANNSurvey-TKDE-2020-Wei-Wang}
\bibfield{author}{\bibinfo{person}{Wen Li}, \bibinfo{person}{Ying Zhang},
  \bibinfo{person}{Yifang Sun}, \bibinfo{person}{Wei Wang},
  \bibinfo{person}{Mingjie Li}, \bibinfo{person}{Wenjie Zhang}, {and}
  \bibinfo{person}{Xuemin Lin}.} \bibinfo{year}{2020}\natexlab{}.
\newblock \showarticletitle{Approximate Nearest Neighbor Search on High
  Dimensional Data - Experiments, Analyses, and Improvement}.
\newblock \bibinfo{journal}{\emph{{IEEE} Trans. Knowl. Data Eng.}}
  \bibinfo{volume}{32}, \bibinfo{number}{8} (\bibinfo{year}{2020}),
  \bibinfo{pages}{1475--1488}.
\newblock


\bibitem[\protect\citeauthoryear{Lu, Kudo, Xiao, and Ishikawa}{Lu
  et~al\mbox{.}}{2021}]%
        {HVS-VLDB-2021-kejing-lu}
\bibfield{author}{\bibinfo{person}{Kejing Lu}, \bibinfo{person}{Mineichi Kudo},
  \bibinfo{person}{Chuan Xiao}, {and} \bibinfo{person}{Yoshiharu Ishikawa}.}
  \bibinfo{year}{2021}\natexlab{}.
\newblock \showarticletitle{{HVS:} Hierarchical Graph Structure Based on
  Voronoi Diagrams for Solving Approximate Nearest Neighbor Search}.
\newblock \bibinfo{journal}{\emph{Proc. {VLDB} Endow.}} \bibinfo{volume}{15},
  \bibinfo{number}{2} (\bibinfo{year}{2021}), \bibinfo{pages}{246--258}.
\newblock


\bibitem[\protect\citeauthoryear{Malkov and Yashunin}{Malkov and
  Yashunin}{[n.d.]}]%
        {hnswlib}
\bibfield{author}{\bibinfo{person}{Yu~A Malkov} {and} \bibinfo{person}{Dmitry~A
  Yashunin}.} \bibinfo{year}{[n.d.]}\natexlab{}.
\newblock \bibinfo{title}{hnswlib}.
\newblock \bibinfo{howpublished}{\url{https://github.com/nmslib/hnswlib}}.
\newblock


\bibitem[\protect\citeauthoryear{Malkov and Yashunin}{Malkov and
  Yashunin}{2020a}]%
        {HNSW-PAMI-2020}
\bibfield{author}{\bibinfo{person}{Yury~A. Malkov} {and}
  \bibinfo{person}{Dmitry~A. Yashunin}.} \bibinfo{year}{2020}\natexlab{a}.
\newblock \showarticletitle{Efficient and Robust Approximate Nearest Neighbor
  Search Using Hierarchical Navigable Small World Graphs}.
\newblock \bibinfo{journal}{\emph{{IEEE} Trans. Pattern Anal. Mach. Intell.}}
  \bibinfo{volume}{42}, \bibinfo{number}{4} (\bibinfo{year}{2020}),
  \bibinfo{pages}{824--836}.
\newblock


\bibitem[\protect\citeauthoryear{Malkov and Yashunin}{Malkov and
  Yashunin}{2020b}]%
        {malkov2018hnsw}
\bibfield{author}{\bibinfo{person}{Yury~A. Malkov} {and}
  \bibinfo{person}{Dmitry~A. Yashunin}.} \bibinfo{year}{2020}\natexlab{b}.
\newblock \showarticletitle{Efficient and Robust Approximate Nearest Neighbor
  Search Using Hierarchical Navigable Small World Graphs}.
\newblock \bibinfo{journal}{\emph{{IEEE} Trans. Pattern Anal. Mach. Intell.}}
  \bibinfo{volume}{42}, \bibinfo{number}{4} (\bibinfo{year}{2020}),
  \bibinfo{pages}{824--836}.
\newblock


\bibitem[\protect\citeauthoryear{Peng, Choi, Chan, Yang, and Xu}{Peng
  et~al\mbox{.}}{2023a}]%
        {tMRNG:journals/pacmmod/PengCCYX23}
\bibfield{author}{\bibinfo{person}{Yun Peng}, \bibinfo{person}{Byron Choi},
  \bibinfo{person}{Tsz~Nam Chan}, \bibinfo{person}{Jianye Yang}, {and}
  \bibinfo{person}{Jianliang Xu}.} \bibinfo{year}{2023}\natexlab{a}.
\newblock \showarticletitle{Efficient Approximate Nearest Neighbor Search in
  Multi-dimensional Databases}.
\newblock \bibinfo{journal}{\emph{Proc. {ACM} Manag. Data}}
  \bibinfo{volume}{1}, \bibinfo{number}{1} (\bibinfo{year}{2023}),
  \bibinfo{pages}{54:1--54:27}.
\newblock
\urldef\tempurl%
\url{https://doi.org/10.1145/3588908}
\showDOI{\tempurl}


\bibitem[\protect\citeauthoryear{Peng, Choi, Chan, Yang, and Xu}{Peng
  et~al\mbox{.}}{2023b}]%
        {2023taumng}
\bibfield{author}{\bibinfo{person}{Yun Peng}, \bibinfo{person}{Byron Choi},
  \bibinfo{person}{Tsz~Nam Chan}, \bibinfo{person}{Jianye Yang}, {and}
  \bibinfo{person}{Jianliang Xu}.} \bibinfo{year}{2023}\natexlab{b}.
\newblock \showarticletitle{Efficient Approximate Nearest Neighbor Search in
  Multi-dimensional Databases}.
\newblock \bibinfo{journal}{\emph{Proc. ACM Manag. Data}} \bibinfo{volume}{1},
  \bibinfo{number}{1}, Article \bibinfo{articleno}{54} (\bibinfo{date}{May}
  \bibinfo{year}{2023}), \bibinfo{numpages}{27}~pages.
\newblock


\bibitem[\protect\citeauthoryear{Raginsky and Lazebnik}{Raginsky and
  Lazebnik}{2009}]%
        {Binary-LSH-NIPS-2009}
\bibfield{author}{\bibinfo{person}{Maxim Raginsky} {and}
  \bibinfo{person}{Svetlana Lazebnik}.} \bibinfo{year}{2009}\natexlab{}.
\newblock \showarticletitle{Locality-sensitive binary codes from
  shift-invariant kernels}.
\newblock \bibinfo{journal}{\emph{Advances in neural information processing
  systems}}  \bibinfo{volume}{22} (\bibinfo{year}{2009}).
\newblock


\bibitem[\protect\citeauthoryear{Shen, Yan, Zhang, Hu, Du, and He}{Shen
  et~al\mbox{.}}{2025}]%
        {shen2025codi}
\bibfield{author}{\bibinfo{person}{Zhenyi Shen}, \bibinfo{person}{Hanqi Yan},
  \bibinfo{person}{Linhai Zhang}, \bibinfo{person}{Zhanghao Hu},
  \bibinfo{person}{Yali Du}, {and} \bibinfo{person}{Yulan He}.}
  \bibinfo{year}{2025}\natexlab{}.
\newblock \bibinfo{title}{CODI: Compressing Chain-of-Thought into Continuous
  Space via Self-Distillation}.
\newblock
\newblock
\showeprint[arxiv]{2502.21074}~[cs.CL]


\bibitem[\protect\citeauthoryear{Sun, Wang, Qin, Zhang, and Lin}{Sun
  et~al\mbox{.}}{2014}]%
        {SRS-yifang-2014}
\bibfield{author}{\bibinfo{person}{Yifang Sun}, \bibinfo{person}{Wei Wang},
  \bibinfo{person}{Jianbin Qin}, \bibinfo{person}{Ying Zhang}, {and}
  \bibinfo{person}{Xuemin Lin}.} \bibinfo{year}{2014}\natexlab{}.
\newblock \showarticletitle{{SRS:} Solving c-Approximate Nearest Neighbor
  Queries in High Dimensional Euclidean Space with a Tiny Index}.
\newblock \bibinfo{journal}{\emph{Proc. {VLDB} Endow.}} \bibinfo{volume}{8},
  \bibinfo{number}{1} (\bibinfo{year}{2014}), \bibinfo{pages}{1--12}.
\newblock


\bibitem[\protect\citeauthoryear{Taobao}{Taobao}{2025}]%
        {taobao}
\bibfield{author}{\bibinfo{person}{Taobao}.} \bibinfo{year}{2025}\natexlab{}.
\newblock \bibinfo{title}{\text{Product Search}}.
\newblock \bibinfo{howpublished}{\url{https://www.taobao.com/}}.
\newblock


\bibitem[\protect\citeauthoryear{{Vald}}{{Vald}}{2021}]%
        {vald}
\bibfield{author}{\bibinfo{person}{{Vald}}.} \bibinfo{year}{2021}\natexlab{}.
\newblock \bibinfo{title}{\text{Vald}}.
\newblock \bibinfo{howpublished}{\url{https://github.com/vdaas/vald}}.
\newblock


\bibitem[\protect\citeauthoryear{Vanderwiel and Lilja}{Vanderwiel and
  Lilja}{2000}]%
        {vanderwiel2000stride}
\bibfield{author}{\bibinfo{person}{Steven~P Vanderwiel} {and}
  \bibinfo{person}{David~J Lilja}.} \bibinfo{year}{2000}\natexlab{}.
\newblock \showarticletitle{Data prefetch mechanisms}.
\newblock \bibinfo{journal}{\emph{ACM Computing Surveys (CSUR)}}
  \bibinfo{volume}{32}, \bibinfo{number}{2} (\bibinfo{year}{2000}),
  \bibinfo{pages}{174--199}.
\newblock


\bibitem[\protect\citeauthoryear{Veidenbaum, Tang, Gupta, Nicolau, and
  Ji}{Veidenbaum et~al\mbox{.}}{1999}]%
        {veidenbaum1999cacheline}
\bibfield{author}{\bibinfo{person}{Alexander~V Veidenbaum},
  \bibinfo{person}{Weiyu Tang}, \bibinfo{person}{Rajesh Gupta},
  \bibinfo{person}{Alexandru Nicolau}, {and} \bibinfo{person}{Xiaomei Ji}.}
  \bibinfo{year}{1999}\natexlab{}.
\newblock \showarticletitle{Adapting cache line size to application behavior}.
  In \bibinfo{booktitle}{\emph{Proceedings of the 13th international conference
  on Supercomputing}}. \bibinfo{pages}{145--154}.
\newblock


\bibitem[\protect\citeauthoryear{Wang, Yi, Guo, Jin, Xu, Li, Wang, Guo, Li, Xu,
  et~al\mbox{.}}{Wang et~al\mbox{.}}{2021}]%
        {Milvus-SIGMOD-2021}
\bibfield{author}{\bibinfo{person}{Jianguo Wang}, \bibinfo{person}{Xiaomeng
  Yi}, \bibinfo{person}{Rentong Guo}, \bibinfo{person}{Hai Jin},
  \bibinfo{person}{Peng Xu}, \bibinfo{person}{Shengjun Li},
  \bibinfo{person}{Xiangyu Wang}, \bibinfo{person}{Xiangzhou Guo},
  \bibinfo{person}{Chengming Li}, \bibinfo{person}{Xiaohai Xu},
  {et~al\mbox{.}}} \bibinfo{year}{2021}\natexlab{}.
\newblock \showarticletitle{Milvus: A purpose-built vector data management
  system}. In \bibinfo{booktitle}{\emph{Proceedings of the 2021 International
  Conference on Management of Data}}. \bibinfo{pages}{2614--2627}.
\newblock


\bibitem[\protect\citeauthoryear{Wang, Zhang, Sebe, Shen, et~al\mbox{.}}{Wang
  et~al\mbox{.}}{2017}]%
        {Learning-Hash-survey-2017-PAMI}
\bibfield{author}{\bibinfo{person}{Jingdong Wang}, \bibinfo{person}{Ting
  Zhang}, \bibinfo{person}{Nicu Sebe}, \bibinfo{person}{Heng~Tao Shen},
  {et~al\mbox{.}}} \bibinfo{year}{2017}\natexlab{}.
\newblock \showarticletitle{A survey on learning to hash}.
\newblock \bibinfo{journal}{\emph{IEEE transactions on pattern analysis and
  machine intelligence}} \bibinfo{volume}{40}, \bibinfo{number}{4}
  (\bibinfo{year}{2017}), \bibinfo{pages}{769--790}.
\newblock


\bibitem[\protect\citeauthoryear{Yang, Li, Jin, Zhong, Wang, Shen, Jia, and
  Wang}{Yang et~al\mbox{.}}{2024c}]%
        {yang2024ddc}
\bibfield{author}{\bibinfo{person}{Mingyu Yang}, \bibinfo{person}{Wentao Li},
  \bibinfo{person}{Jiabao Jin}, \bibinfo{person}{Xiaoyao Zhong},
  \bibinfo{person}{Xiangyu Wang}, \bibinfo{person}{Zhitao Shen},
  \bibinfo{person}{Wei Jia}, {and} \bibinfo{person}{Wei Wang}.}
  \bibinfo{year}{2024}\natexlab{c}.
\newblock \showarticletitle{Effective and General Distance Computation for
  Approximate Nearest Neighbor Search}.
\newblock \bibinfo{journal}{\emph{arXiv preprint arXiv:2404.16322}}
  (\bibinfo{year}{2024}).
\newblock


\bibitem[\protect\citeauthoryear{Yang, Li, and Wang}{Yang
  et~al\mbox{.}}{2024b}]%
        {Fast-Index-arxiv-2024}
\bibfield{author}{\bibinfo{person}{Mingyu Yang}, \bibinfo{person}{Wentao Li},
  {and} \bibinfo{person}{Wei Wang}.} \bibinfo{year}{2024}\natexlab{b}.
\newblock \showarticletitle{Fast High-dimensional Approximate Nearest Neighbor
  Search with Efficient Index Time and Space}.
\newblock \bibinfo{journal}{\emph{arXiv preprint arXiv:2411.06158}}
  (\bibinfo{year}{2024}).
\newblock


\bibitem[\protect\citeauthoryear{Yang, Xie, Liu, Yu, Gao, Wang, Peng, and
  Cui}{Yang et~al\mbox{.}}{2025}]%
        {yang2025revisitingindexconstructionproximity}
\bibfield{author}{\bibinfo{person}{Shuo Yang}, \bibinfo{person}{Jiadong Xie},
  \bibinfo{person}{Yingfan Liu}, \bibinfo{person}{Jeffrey~Xu Yu},
  \bibinfo{person}{Xiyue Gao}, \bibinfo{person}{Qianru Wang},
  \bibinfo{person}{Yanguo Peng}, {and} \bibinfo{person}{Jiangtao Cui}.}
  \bibinfo{year}{2025}\natexlab{}.
\newblock \bibinfo{title}{Revisiting the Index Construction of Proximity
  Graph-Based Approximate Nearest Neighbor Search}.
\newblock
\newblock
\showeprint[arxiv]{2410.01231}~[cs.DB]


\bibitem[\protect\citeauthoryear{Yang, Hu, Peng, Li, Li, Wang, and Liu}{Yang
  et~al\mbox{.}}{2024a}]%
        {2024vdtuner}
\bibfield{author}{\bibinfo{person}{Tiannuo Yang}, \bibinfo{person}{Wen Hu},
  \bibinfo{person}{Wangqi Peng}, \bibinfo{person}{Yusen Li},
  \bibinfo{person}{Jianguo Li}, \bibinfo{person}{Gang Wang}, {and}
  \bibinfo{person}{Xiaoguang Liu}.} \bibinfo{year}{2024}\natexlab{a}.
\newblock \showarticletitle{VDTuner: Automated Performance Tuning for Vector
  Data Management Systems}. In \bibinfo{booktitle}{\emph{2024 IEEE 40th
  International Conference on Data Engineering (ICDE)}}.
  \bibinfo{pages}{4357--4369}.
\newblock


\bibitem[\protect\citeauthoryear{YouTube}{YouTube}{2025}]%
        {youtube}
\bibfield{author}{\bibinfo{person}{YouTube}.} \bibinfo{year}{2025}\natexlab{}.
\newblock \bibinfo{title}{\text{Video Search}}.
\newblock \bibinfo{howpublished}{\url{https://www.youtube.com/}}.
\newblock


\bibitem[\protect\citeauthoryear{Zhang, Yao, Lai, Huang, Fang, Tao, Song, and
  Liu}{Zhang et~al\mbox{.}}{2025}]%
        {zhang2025reasoningreinforced}
\bibfield{author}{\bibinfo{person}{Kongcheng Zhang}, \bibinfo{person}{Qi Yao},
  \bibinfo{person}{Baisheng Lai}, \bibinfo{person}{Jiaxing Huang},
  \bibinfo{person}{Wenkai Fang}, \bibinfo{person}{Dacheng Tao},
  \bibinfo{person}{Mingli Song}, {and} \bibinfo{person}{Shunyu Liu}.}
  \bibinfo{year}{2025}\natexlab{}.
\newblock \bibinfo{title}{Reasoning with Reinforced Functional Token Tuning}.
\newblock
\newblock
\showeprint[arxiv]{2502.13389}~[cs.AI]


\bibitem[\protect\citeauthoryear{Zheng, Zhao, Weng, Hung, Liu, and
  Jensen}{Zheng et~al\mbox{.}}{2020}]%
        {PMLSH-bolong-2020}
\bibfield{author}{\bibinfo{person}{Bolong Zheng}, \bibinfo{person}{Xi Zhao},
  \bibinfo{person}{Lianggui Weng}, \bibinfo{person}{Nguyen Quoc~Viet Hung},
  \bibinfo{person}{Hang Liu}, {and} \bibinfo{person}{Christian~S. Jensen}.}
  \bibinfo{year}{2020}\natexlab{}.
\newblock \showarticletitle{{PM-LSH:} {A} Fast and Accurate {LSH} Framework for
  High-Dimensional Approximate {NN} Search}.
\newblock \bibinfo{journal}{\emph{Proc. {VLDB} Endow.}} \bibinfo{volume}{13},
  \bibinfo{number}{5} (\bibinfo{year}{2020}), \bibinfo{pages}{643--655}.
\newblock


\bibitem[\protect\citeauthoryear{Zhou, Lu, Li, and Tian}{Zhou
  et~al\mbox{.}}{2012}]%
        {2012sq}
\bibfield{author}{\bibinfo{person}{Wengang Zhou}, \bibinfo{person}{Yijuan Lu},
  \bibinfo{person}{Houqiang Li}, {and} \bibinfo{person}{Qi Tian}.}
  \bibinfo{year}{2012}\natexlab{}.
\newblock \showarticletitle{Scalar quantization for large scale image search}.
  In \bibinfo{booktitle}{\emph{Proceedings of the 20th {ACM} Multimedia
  Conference, {MM} '12, Nara, Japan, October 29 - November 02, 2012}},
  \bibfield{editor}{\bibinfo{person}{Noboru Babaguchi},
  \bibinfo{person}{Kiyoharu Aizawa}, \bibinfo{person}{John~R. Smith},
  \bibinfo{person}{Shin'ichi Satoh}, \bibinfo{person}{Thomas Plagemann},
  \bibinfo{person}{Xian{-}Sheng Hua}, {and} \bibinfo{person}{Rong Yan}} (Eds.).
  \bibinfo{publisher}{{ACM}}, \bibinfo{pages}{169--178}.
\newblock


\bibitem[\protect\citeauthoryear{Zitzler and Thiele}{Zitzler and
  Thiele}{1999}]%
        {1999parento}
\bibfield{author}{\bibinfo{person}{E. Zitzler} {and} \bibinfo{person}{L.
  Thiele}.} \bibinfo{year}{1999}\natexlab{}.
\newblock \showarticletitle{Multiobjective evolutionary algorithms: a
  comparative case study and the strength Pareto approach}.
\newblock \bibinfo{journal}{\emph{IEEE Transactions on Evolutionary
  Computation}} \bibinfo{volume}{3}, \bibinfo{number}{4}
  (\bibinfo{year}{1999}), \bibinfo{pages}{257--271}.
\newblock


\end{thebibliography}

\newpage
\appendix
\section{Detail Explanation of Deterministic Access Greedy Search}
The initialization phase establishes candidate and visited sets (Lines 1-3). The candidate set dynamically maintains the nearest points discovered during traversal, while the visited set tracks explored points. The algorithm iteratively extracts the nearest unvisited point from the candidate set and processes its neighbors through four key phases (Lines 4-17).

At each iteration, the nearest unvisited point $x_i$ is retrieved from the candidate set (Line 5). Batch processing then identifies \textit{valid} neighbors through three filtering criteria (Lines 6-10): 1) unvisited ($j \notin V$), 2) label validity ($L_{j} \le \alpha_s$), and 3) degree constraint ($|N| \le m_s$). The latter two conditions implement automatic index-level parameters tuning as detailed in Section~\ref{sec:tech2_auto_param_opt}. It is important to note that we differentiate between the identifier $j$ and vector data $x_j$ - a distinction that impacts cache performance as memory access may incur cache misses while identifier operations do not. 

The prefetching mechanism operates through two parameters: $\omega$ controls the prefetch stride (Lines 11, 14), while $\nu$ specifies the prefetch depth (Lines 12, 15). The batch distances computation of neighbors follows a three-stage pipeline (Lines 13-17): 1) stride prefetch for subsequent points (Lines 14-15), 2) data access(Line 16), and 3) distance calculation with candidate set maintenance (Line 17). When $|C| \ge ef_s$, we pop farthest-point in $C$ to keep $|C| \le ef_s$ (Line 17). When all candidate points are processed, we apply selective re-ranking (see Section~\ref{sec:tech3_distance_opt}) to enhance the distances precision (Line 18). $T$ is the high precision distances between points in $ANN_k(x_q)$ and $x_q$. Finally, the algorithm returns $ANN_k(x_q)$ with related distances $T$ (Line 19).

\section{VSAG Index Construction Algorithm.}
\label{sec:index_construction}
In \vsag, the insertion process for each new point $x_i$ (Lines 3--4) can be divided into three steps. First, a search is conducted on the graph to obtain the approximate nearest neighbors $\mathrm{ANN}_k(x_i)$ and their corresponding distances $T_i$ (Line 5). Next, pruning is performed on $\mathrm{ANN}_k(x_i)$, and the edges along with their labels are added to the graph index (Line 6). The first step, the search process, is similar to that of other graph-based index constructions. However, the second step, the pruning process, differs from other indexes. Instead of performing actual pruning, \vsag assigns labels. Specifically, \vsag assigns a label $L_{i, j}$ to each edge $G_{i, j}$ in graph (see Algorithm \ref{algo:pruning}). \textit{Note that the label is used to indicate whether an edge exists under a certain construction parameter.} Given $\alpha_s$, if $L_{i, j} \leq \alpha_s$, it indicates that the edge $G_{i, j}$ exists in the graph constructed using $\alpha_s$ or larger value as the pruning rate. During the search and pruning processes, only edges satisfying $L_{i, j} \leq \alpha_s$ need to be considered to retrieve the graph corresponding to the given construction parameters, which is proven in Theorem~\ref{theo:alpha}.

Next, for each new neighbor, reverse edges are added, and pruning is performed (Lines 7--14). If the distance $T_{i, j}$ from the current neighbor $x_j$ to $x_i$ is greater than the maximum distance among $x_j$'s neighbors and $x_j$'s neighbor list is already full (Line 8), no reverse edge is added (Line 9). Otherwise, $x_i$ is inserted into $x_j$'s neighbor list at an appropriate position $r$ to maintain $G_{j}$ and $T_j$ in ascending order (Lines 10--13). Subsequently, pruning and labeling are reapplied to the neighbors of $x_j$ located after position $r$. Finally, the algorithm returns the edges $G$ and labels $L$ of the graph index. 

\begin{algorithm}[t!]
	\DontPrintSemicolon
	\KwIn{ 
		dataset $D$, 
		pruning rates $A$ sorted in ascending order, 
		candidate size $ef_c$, 
		maximum degree $m_c$, 
	}
	\KwOut{ directed graph $G$ with out-degree $\le m_c$}
	\SetKwFunction{GreedySearch}{GreedySearch}
	\SetKwFunction{PruneAssignLabel}{PruneAssignLabel}
	
	initialize $G$ to an empty graph \;
	initialize $L$, $T$ to empty lists \;
	
	\For {$0 \le i < |D|$} {
		$x_i \leftarrow D_i$ \;
		\tcp{search ANN}
		$\mathrm{ANN}_k(x_i), T_i \leftarrow $ GreedySearch($G, L, D, \{x_0\},x_i, ef_c, m_c, \max(A), 1$) \;
		
		\tcp{prune}
		$G_i, L_i \leftarrow$ PrunebasedLabeling($x_i,\mathrm{ANN}(x_i),T_i, A, m_c, 0$)\;
		
		\ForEach{ $j \in G_i$} {
			\If{$T_{i, j} \ge T_{j, -1} $ \textbf{\textup{and}}  $|G_j| \ge m_c$} {
				continue;
			} \Else {
				\tcp{insertion maintaining ascending order} 
				$r \leftarrow $ minimum id s.t. $T_{j, r} > T_{i, j}$ \;            
				Insert $x_i$ right before $G_{j, r}$ \;
				Insert $T_{i, j}$ right before $T_{j, r}$\;
				\tcp{add reverse edges and prune}
				$G_j, L_j \leftarrow$ PrunebasedLabeling($x_j,G_{j, r},T_j, A, m_c, r$)\;
			}
		}
	}

	\Return{$G, L$} \;
	
	\caption{\vsag Index Construction}
	\label{algo:construction}
\end{algorithm}

\section{Proof of Theorem}
\subsection{Proof of Theorem \ref{theo:max_degree}}
\label{sec:proof_of_theorem_1}
\begin{proof}
	The only difference in construction under varying maximum degrees lies in the termination timing of the loop. Note that during the edge connection process, existing edges are never deleted. Specifically, when inserting $x_i$, the neighbors edges are continuously added to $G_i$ until $|G_i| == a$ or $|G_i| == b$. Since $a < b$, the condition $|G_i| == a$ will be satisfied first during edge addition. In the construction process with maximum degree $b$, the first $a$ edges added in the loop are identical to those in $G^a_i$. The subsequent $b - a$ edges inserted do not remove the first $a$ edges. Thus, $G^a_i == G^b_{i, 0:a}$.
\end{proof}
\subsection{Proof of Theorem \ref{theo:alpha}}
\label{sec:proof_of_theorem_2}
\begin{proof}
	We prove this by mathematical induction. Consider the insertion process of $G^a_i$ and $G^b_i$, examining each point $x_j \in ANN_a(i)$ sorted by distance in ascending order.
	
	\noindent \textbf{Base Case:} When inserting $x_1$, since both $G^a_i$ and $G^b_i$ are initially empty, no edges can trigger pruning, so $x_1$ is guaranteed to be inserted. Thus, $G^a_i = G^b_i = \{x_1\}$, and we have $G^a_i \subseteq G^b_i$.
	
	\noindent \textbf{First Iteration: Inserting $x_2$:} We analyze the cases:
	\begin{itemize}[leftmargin=*]
		\item \textbf{1: Both sets remain unchanged.} If $b \cdot dis(x_2, x_1) \leq dis(x_i, x_2)$, $x_2$ will not be added to $G^b_i$. Since $a < b$, it follows that $a \cdot dis(x_2, x_1) \leq dis(x_i, x_2)$, so $x_2$ will also not be added to $G^a_i$. In this case, $G^a_i = G^b_i = \{x_1\}$.
		
		\item \textbf{2: Both sets change.} If $a \cdot dis(x_2, x_1) > dis(x_i, x_2)$, since $a < b$, it follows that $b \cdot dis(x_2, x_1) > dis(x_i, x_2)$. In this case, $G^a_i = G^b_i = \{x_1, x_2\}$.
		
		\item \textbf{3: Only $G^b_i$ changes.} If $a \cdot dis(x_2, x_1) \leq dis(x_i, x_2)$ and $b \cdot dis(x_2, x_1) > dis(x_i, x_2)$, then $G^a_i = \{x_1\} \subset \{x_1, x_2\} = G^b_i$.
		
		\item \textbf{4: Only $G^a_i$ changes.} If $a \cdot dis(x_2, x_1) > dis(x_i, x_2)$ and $b \cdot dis(x_2, x_1) \leq dis(x_i, x_2)$, this situation cannot occur because $a < b$. Thus, for the second insertion, we still have $G^a_i \subseteq G^b_i$.
	\end{itemize}
	
	\noindent \textbf{Inductive Step: Inserting $x_{n+1}$:} Assume that before the $n$-th insertion, $G^a_i \subseteq G^b_i$. We analyze the cases:
	\begin{itemize}[leftmargin=*]
		\item \textbf{1: Both sets remain unchanged.} If $\exists x_j \in G^a_i$ such that $b \cdot dis(x_{n+1}, x_j) \leq dis(x_i, x_{n+1})$, then $a \cdot dis(x_{n+1}, x_j) \leq dis(x_i, x_{n+1})$ must also hold, so $G^a_i$ remains unchanged. Since $G^a_i \subseteq G^b_i$, there must exist $x_j \in G^b_i$ such that $b \cdot dis(x_{n+1}, x_j) \leq dis(x_i, x_{n+1})$, so $G^b_i$ also remains unchanged. Thus, $G^a_i \subseteq G^b_i$.
		
		\item \textbf{2: Both sets change.} If $\forall x_j \in G^b_i$, $b \cdot dis(x_{n+1}, x_j) > dis(x_i, x_{n+1})$, and $\forall x_j \in G^a_i$, $a \cdot dis(x_{n+1}, x_j) > dis(x_i, x_{n+1})$, these conditions are compatible and may occur. In this case, $G^a_i = G^a_i \cup \{x_{n+1}\} \subseteq G^b_i \cup \{x_{n+1}\} = G^b_i$.
		
		\item \textbf{3: Only $G^b_i$ changes.} If $\exists x_j \in G^a_i \ s.t.\ a \cdot dis(x_{n+1}, x_j) \leq dis(x_i, x_{n+1})$, and $\forall x_j \in G^b_i$, $b \cdot dis(x_{n+1}, x_j) > dis(x_i, x_{n+1})$, these conditions are compatible and may occur. In this case, $G^a_i \subseteq G^a_i \cup \{x_{n+1}\} = G^b_i$.
		
		\item \textbf{4: Only $G^a_i$ changes.} This case does not exist because Lines 5-6 of Algorithm \ref{algo:pruning} ensure that if a point is not be pruned in $G^a_i$, it will not be pruned again in $G^b_i$. Therefore, if $x_{n+1}$ is added to $G^a_i$, its label is set as $L_{n+1} = a$. During subsequent iterations of $\alpha$, the condition in Lines 5-6 of Algorithm \ref{algo:pruning} ensures that this point is no longer considered. 
	\end{itemize}
\end{proof}

\section{Selecting Optimal ILP Configurations.} 
\label{sec:secectILP}
Under specified optimization objectives for the accuracy and efficiency of retrieval, different configurations of parameter at the index level represent candidate solutions in the design space. When environment-level and query-level parameters remain fixed, each configuration induces unique performance characteristics in terms of target metrics. The complete combinatorial space of index-level parameters constitutes the solution domain for multi-objective optimization. \vsag's automated tuning mechanism aims to identify a solution subset where no configuration in the space is Pareto superior~\cite{1999parento} to any member of this subset. This set of non-dominated solution forms the Pareto Optimal Frontier~\cite{1999parento}, representing optimal trade-offs between competing objectives.

Through systematic evaluation of ILP configurations, \vsag generates performance profiles containing retrieval accuracy and latency measurements, enabling Pareto Frontier derivation. When users specify target accuracy thresholds or latency constraints, the system performs optimal configuration selection by identifying the minimal Pareto-optimal solution that satisfies the specified requirements.

\begin{example}
	Consider a user-defined constraint of $Recall@10 > 90\%$ with three Pareto Frontier candidates: $(A, 91\%, 2000)$, $(B, 90\%, 2100)$, $(C, 89\%, 2200)$, where tuples denote (configuration, Recall@10, queries per second). Configuration B emerges as the optimal selection, satisfying the recall threshold while maximizing query throughput through its superior QPS performance.
\end{example}

\section{Tuning Cost of ILPs}
\label{sec:tuning_cost_ilp}

\begin{table}[t!]
	\small
	\centering
	\caption{Comparison of Tuning Cost of ILPs.}\vspace{-3ex} \label{tab:tuning_cost}
	\begin{tabular}{l|c|c|c|c|c|c}
		\toprule
		\multirow{2}{*}{\textbf{Dataset}} & \multicolumn{2}{c|}{\textbf{FIX}} & \multicolumn{2}{c|}{\textbf{BF}} & \multicolumn{2}{c}{\textbf{\vsag (Ours)}} \\
		\cmidrule(lr){2-3} \cmidrule(lr){4-5} \cmidrule(lr){6-7}
		& \textbf{Mem} & \textbf{Time} & \textbf{Mem} & \textbf{Time} & \textbf{Mem} & \textbf{Time} \\
		\midrule
		GIST1M         & 3.83G     & 3.20H  & 89.87G     & 61.64H    & 4.07G    & 2.92H        \\
		SIFT1M         & 0.73G     & 0.85H   & 15.48G     & 30.79H    & 0.97G    & 1.86H      \\
		GLOVE-100      & 0.74G     & 1.22H   & 15.36G     & 41.94H    & 1.03G    & 2.13H \\
		\bottomrule
	\end{tabular}\vspace{-2ex}
\end{table}

We evaluate the tuning time and memory footprint when varying index-level parameters $m_c \in(8,16,24,32)$ and $\alpha_c\in(1.0,1.2,1.4,$ $1.6,1.8,2.0)$. The compared method including \textit{FIX} (fix parameter $m_c = 32$ and $\alpha_c = 2.0$, building with \textit{hnswlib}) and \textit{BF} (construct indexes of all combination of index-level parameters, building with \textit{hnswlib}).  Table~\ref{tab:tuning_cost} shows results (\textit{Mem} for memory footprint in gigabytes; \textit{Time}  for building time in hours).

Among the four datasets, the method with minimal construction time and memory usage is \textit{FIX}, while \textit{BF} requires the most. This is because baseline methods can only brute-force traverse each parameter configuration and build indexes separately. Additionally, it can dynamically select edges during search using edge labels acquired during construction, enabling parameter adjustment at search time without repeated index construction. This allows \vsag to achieve a $19\times$ tuning time saving compared to \textit{BF} on GIST1M. Moreover, the use of $T$ to cache distances during construction eliminates redundant computations, further accelerating \vsag's tuning. In terms of memory usage, graphs built with different $m_c$ and $a_c$ share significant structural overlap. This enables \vsag to compress these graphs via edge labels, resulting in over $19\times$ memory footprint reduction during tuning compared to \textit{BF}. Compared to \textit{FIX}, \vsag only introduces minimal additional memory costs for storing extra edges and label sets.

\section{Applications}
\label{sec:application}

\stitle{Product Search and Recommendation Systems.}
Vector retrieval has become a cornerstone in product search and recommendation engines for e-commerce. By converting signals such as user behavior, browsing history, and purchase data into vector representations, similarity-based retrieval can pinpoint products that match user preferences, thereby enabling personalization at scale.
E-commerce platforms~\cite{taobao, amazon} typically maintain millions of product and user vectors, which demand high-performance and low-latency retrieval solutions.  
Leveraging \vsag's \textit{memory access optimization} and \textit{distance computation optimization}, these services can achieve the same target QPS with fewer computational resources.  
In a product search scenario at Ant Group, which involves 512-dimensional embeddings, \vsag not only lowered overall resource usage by a factor of 4.6$\times$ but also cut service latency by 1.2$\times$, surpassing contemporary partition-based methods.

In practice, production environments span various \emph{resource classes} (e.g., 4C16G vs.\ 2C8G) and \emph{platforms} (e.g., x86 vs.\ ARM).  
Realizing optimal performance across this heterogeneous infrastructure required extensive parameter tuning, with limited transferability once resource classes changed.  
\vsag’s \emph{automatic parameter tuning} feature relieves engineers from repeated manual fine-tuning, drastically reducing the overhead of deploying and maintaining vector retrieval services.  
This translates into robust performance across diverse platforms and ever-shifting requirements.

\stitle{Large Language Model.} 
In large language model~(LLM) applications~\cite{LLM-RAG-NIPS-2020}, high-dimensional text embeddings are essential for reducing hallucinations and delivering \textit{real-time} interactive responses.  
Modern embedding models, such as  \texttt{text-embedding-3-small}, produce 1{,}536-dimensional vectors, while \texttt{text-embedding-3-large} outputs up to 3{,}072 dimensions—significantly more than the typical 256-dimensional image embeddings.  
Effective retrieval over these large embedding vectors demands both precise distance computation and careful memory management. \vsag’s dedicated \emph{distance computation} and \emph{memory access} optimizations excel at handling these high-dimensional workloads.

\emph{Deep Research} is quickly emerging as a next-generation standard for LLM-based applications, functioning as a specific form of \emph{agentic search} that relies on chain-of-thought reasoning~\cite{shen2025codi} and multi-step retrieval~\cite{ji2025enhancingmultistepreasoning, zhang2025reasoningreinforced}.
In Retrieval-Augmented Generation (RAG)~\cite{gao2024RAGsurvey}, this process repeatedly refines queries and explores multiple reasoning paths while adaptively gathering domain-specific information.
Such iterative lookups impose substantial demands on both system throughput and latency, particularly with high-dimensional vectors (e.g., 1{,}536 dimensions).
\vsag delivers up to a \textbf{5$\times$} improvement in retrieval capacity, critical for maintaining real-time responsiveness in the context-aware exploration that agentic search enables.

\begin{figure}[t!]\centering
	\scalebox{0.5}[0.5]{\includegraphics{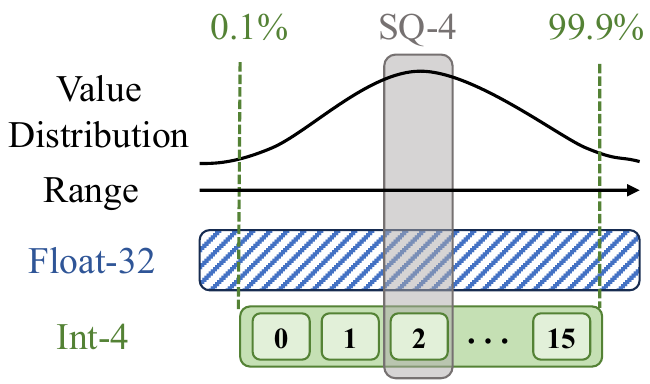}}
	\caption{\small Truncated scalar quantization. }
	\label{fig:tech3-SQ}
\end{figure}

\section{Improving Quantization Precision}
\label{sec:improve_quantization_pre}
Scalar quantization (SQ) compresses floating-point vectors by mapping 32-bit float values to lower-bit integer representations in each dimension. In SQ-b (where b denotes bit-width), the dynamic range is uniformly partitioned into $2^b$ intervals. For SQ4 (4-bit) on vector $x$, this creates 16 intervals over $[\min(x),\max(x)]$. Values of each dimension of $x$ within each interval are encoded as their corresponding integer indices. However, practical implementations face critical range estimation challenges: direct use of observed min/max values proves suboptimal when data outliers existing. 

\begin{example}
	Analysis of the GIST1M dataset reveals this limitation – while 99\% of dimensions exhibit values below 0.3, using the absolute maximum (1.0) would leave 70\% of the quantization intervals (0.3-1.0) underutilized. This interval under-utilization severely degrades quantization precision.
\end{example}

To enhance robustness, we propose \textit{Truncated Scalar Quantization} using the 99th percentile statistics rather than absolute extremes. This approach discards outlier-induced distortions while preserving quantization resolution over the primary data distribution, achieving superior balance between compression efficiency and numerical precision as shown in Figure \ref{fig:tech3-SQ}.

\end{document}